\pgfplotsset{compat=1.4}
\newtheoremstyle{custom}
{} 
{} 
{} 
{} 
{\bfseries} 
{:} 
{.25em} 
{} 
\theoremstyle{custom}
\newtheorem{theorem}{Theorem}
\newtheorem{lemma}[theorem]{Lemma}
\newtheorem{proposition}[theorem]{Proposition}
\newtheorem{definition}[theorem]{Definition}
\newtheorem{remark}[theorem]{Remark}
\newtheorem{corollary}[theorem]{Corollary}
\newtheorem{conjecture}[theorem]{Conjecture}
\newtheorem*{theorem*}{Theorem}
\newtheorem*{lemma*}{Lemma}
\newtheorem*{proposition*}{Proposition}
\newtheorem*{definition*}{Definition}
\newtheorem*{example*}{Example}
\newtheorem*{remark*}{Remark}
\newtheorem*{corollary*}{Corollary}
\renewcommand{\epsilon}{\varepsilon}
\newcommand{\mc}[1]{\mathcal{#1}}
\newcommand{\mr}[1]{\mathrm{#1}}
\newcommand{\mbb}[1]{\mathbb{#1}}
\newcommand{\indicator}[1]{\mathbbm{1}_{\left\{ {#1} \right\} }}
\newcommand{\diff}[1]{d#1}
\newcommand{\deri}[2]{\frac{d#1}{d#2}}
\newcommand{\pderi}[2]{\frac{\partial #1}{\partial #2}}
\newcommand{\dmin}{d_{\mathrm{min}}}
\newcommand{\ent}[1]{H\!\left(#1\right)}
\newcommand{\ul}[1]{\underline{#1}}
\newlength\tikzwidth
\newlength\tikzheight
\begin{document}

\title{Reed-Muller Codes Achieve Capacity \\ on Erasure Channels}
\author{Santhosh Kumar and Henry D. Pfister
\thanks{Santhosh Kumar is currently a graduate researcher in the Department of Electrical and Computer Engineering, Texas A\&M University, College Station (email: santhosh.kumar@tamu.edu).
}
\thanks{Henry~D.~Pfister is currently an Associate Professor in the Department of Electrical and Computer Engineering, Duke University (email: henry.pfister@duke.edu).
}
\thanks{This material is based upon work supported in part by the National Science Foundation (NSF) under Grant No. 1218398.
Any opinions, findings, recommendations, and conclusions expressed in this material are those of the authors and do not necessarily reflect the views of these sponsors.
Part of this work was also done while the 2nd author was visiting the Simons Institute for the Theory of Computing, UC Berkeley.
}
}
\maketitle

\begin{abstract}
This paper introduces a new approach to proving that a sequence of deterministic linear codes achieves capacity on an erasure channel under maximum a posteriori decoding.
Rather than relying on the precise structure of the codes, this method requires only that the codes are highly symmetric.
In particular, the technique applies to any sequence of linear codes where the blocklengths are strictly increasing, the code rates converge to a number between 0 and 1, and the permutation group of each code is doubly transitive.
This also provides a rare example in information theory where symmetry alone implies near-optimal performance.

An important consequence of this result is that a sequence of Reed-Muller codes with increasing blocklength achieves capacity if its code rate converges to a number between 0 and 1. 
This possibility has been suggested previously in the literature but it has only been proven for cases where the limiting code rate is 0 or 1.
Moreover, these results extend naturally to affine-invariant codes and, thus, to all extended primitive narrow-sense BCH codes.
The primary tools used in the proof are the sharp threshold property for monotone boolean functions and the area theorem for extrinsic information transfer functions.
\end{abstract}

\begin{IEEEkeywords}
affine-invariant codes,
BCH codes,
capacity-achieving codes,
erasure channels,
EXIT functions,
linear codes,
MAP decoding,
monotone boolean functions,
Reed-Muller codes.
\end{IEEEkeywords}

\section{Introduction}
\label{section:introduction}

Since the introduction of channel capacity by Shannon in his seminal paper~\cite{Shannon-bell48}, theorists have been fascinated by the idea of constructing \emph{structured} codes that achieve capacity.
The advent of Turbo codes~\cite{Berrou-icc93} and low-density parity-check (LDPC) codes~\cite{Gallager-1963,Spielman-it96,Mackay-it99} has made it possible to construct codes with low-complexity encoding and decoding that also achieve good performance near the Shannon limit.
It was even proven that sequences of irregular LDPC codes can achieve capacity on the binary erasure channel (BEC) using low-complexity message-passing decoding~\cite{Luby-it01}.
For an arbitrary binary symmetric memoryless (BMS) channel, however, polar codes~\cite{Arikan-it09} were the first provably capacity-achieving codes with low-complexity encoding and decoding.
More recently, spatially-coupled LDPC codes were also shown to achieve capacity universally over all BMS channels under low-complexity message-passing decoding~\cite{Kudekar-it11,Lentmaier-it10,Kudekar-it13,Kumar-it14}.

This article considers the performance of sequences of binary linear codes transmitted over the BEC under maximum-a-posteriori (MAP) decoding.
In particular, our primary technical result is the following.
\begin{theorem*}
 A sequence of linear codes achieves capacity on a memoryless erasure channel under MAP decoding if its blocklengths are strictly increasing, its code rates converge to some $r\in(0,1)$, and the permutation group\footnote{The permutation group of a linear code is the set of permutations on code bits under which the code is invariant.} of each code is doubly transitive.	
\end{theorem*}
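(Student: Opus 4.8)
The plan is to combine two classical tools: the area theorem for EXIT functions, which relates the integral of an EXIT curve to the code rate, and a sharp-threshold result for monotone boolean functions, which forces the EXIT curve to be a step function in the limit. Let $h_n(\epsilon)$ denote the (average) extrinsic MAP EXIT function of the $n$th code when transmitted over the BEC with erasure probability $\epsilon$. The first ingredient is the observation that the event ``bit $i$ is not recoverable from the other received bits'' is a monotone (increasing) boolean function of the erasure pattern, so its probability $h_{n,i}(\epsilon)$ satisfies the Margulis--Russo identity and, because double transitivity makes the code's automorphism group transitive on coordinates, all the $h_{n,i}$ coincide with the symmetrized EXIT function $h_n$. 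Double transitivity is precisely the hypothesis needed to invoke the Friedgut--Kalai / Bourgain--Kalai type sharp-threshold theorem for symmetric monotone functions: it gives that the transition window of $h_n$ around its threshold $\epsilon_n$ (the value where $h_n(\epsilon_n)=1/2$) has width $O(1/\log n)\to 0$.

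The second ingredient is the area theorem, stated earlier as one of the primary tools: $\int_0^1 h_n(\epsilon)\,\diff{\epsilon} = 1-r_n$, where $r_n$ is the rate of the $n$th code. Since $h_n$ is nondecreasing, takes values in $[0,1]$, and (by the sharp-threshold step) converges to the indicator $\indicator{\epsilon>\epsilon_\infty}$ for any limit point $\epsilon_\infty$ of the thresholds $\epsilon_n$, the dominated convergence theorem forces $1-r = 1-\epsilon_\infty$, i.e. the threshold converges to $\epsilon_\infty = 1-r$, the Shannon limit. Finally one upgrades this statement about the \emph{bit} EXIT function to a statement about the \emph{block} erasure probability: a standard argument (e.g. using that the block error probability under MAP is sandwiched between a single $h_n$ and $n\cdot h_n$ evaluated slightly away from threshold, together with the vanishing window width) shows $P_{\mathrm{block}}(\epsilon)\to 0$ for every fixed $\epsilon<1-r$. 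Hence the sequence achieves capacity.

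I expect the main obstacle to be the rigorous passage from the vanishing of the transition window to the vanishing of the \emph{block} error probability below capacity. The sharp-threshold theorem controls $h_n$ only on the scale $1/\log n$, so for a fixed $\epsilon<1-r$ we get $h_n(\epsilon)\to 0$ but only with an unspecified, possibly slow rate; multiplying by the blocklength $n$ (as one must, to bound the block error probability by a union bound over coordinates) can overwhelm this decay. Closing this gap requires either a quantitative form of the threshold bound showing $h_n(\epsilon)$ decays faster than any polynomial in $n$, or a more careful argument that avoids the crude union bound --- for instance exploiting that, below threshold, the number of unrecoverable bits is itself concentrated, or bootstrapping via the nested sequence of codes obtained by shortening. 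A secondary technical point is verifying that the EXIT function $h_n$ is genuinely the average of monotone functions over a coordinate-transitive group; this is where double transitivity (rather than mere transitivity) is actually used, since the symmetric-function hypothesis in the strongest sharp-threshold theorems requires the stabilizer of a point to still act transitively on the remaining coordinates.
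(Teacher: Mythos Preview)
Your proposal follows essentially the same route as the paper: identify each bit-EXIT function $h_i$ with the probability of a monotone boolean event on erasure patterns, use transitivity so that every $h_i$ equals the average $h$, use double transitivity (i.e., transitivity of the stabilizer of a coordinate on the remaining coordinates, exactly as you note at the end) to equalize all influences and invoke the Friedgut--Kalai/Talagrand sharp-threshold bound, and then use the area theorem to pin the jump at $1-r$. This is precisely the paper's argument.

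Two corrections. First, your area theorem is misstated: for these EXIT functions $\int_0^1 h_n(\epsilon)\,d\epsilon$ equals the rate $r_n$, not $1-r_n$; the limiting step function then has area $1-\epsilon_\infty = r$, giving $\epsilon_\infty = 1-r$ (your conclusion is right, but the intermediate equation and the algebra you wrote are both off). Second, the obstacle you anticipate about block error probability is not a gap in the proof of \emph{this} theorem: the headline statement is about bit-MAP decoding, and the $O(1/\log N)$ width bound already suffices for that via the equivalence ``vanishing transition width $\Leftrightarrow$ capacity-achieving under bit-MAP''. The paper handles block-MAP separately and only under extra hypotheses: either a minimum-distance condition $\log d_{\min}/\log N \to 1$ (used for extended BCH codes), or, for Reed--Muller codes, the stronger Bourgain--Kalai inequality exploiting the full $\mathrm{GL}(m,\mathbb{F}_2)$ symmetry to obtain a width of order $1/(\log N\cdot\log\log N)$, which is enough to beat the factor $N$ in the union bound. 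So your instinct that the crude union bound fails under the bare $1/\log N$ rate is correct, and the paper does not claim otherwise.
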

Our analysis focuses primarily for the bit erasure rate under bit-MAP decoding but can be extended to the block erasure rate in some cases.
One important consequence of this is that binary Reed-Muller codes achieve capacity on the BEC under block-MAP decoding.

The main result extends naturally to $\mathbb{F}_q$-linear codes transmitted over a $q$-ary erasure channel under symbol-MAP decoding.
With this extension, one can show that sequences of Generalized Reed-Muller codes~\cite{Delsarte-ic70,Kasami-it68} over $\mathbb{F}_q$ also achieve capacity under block-MAP decoding.
For the class of affine-invariant $\mathbb{F}_q$-linear codes, which are precisely the codes whose permutation groups include a subgroup isomorphic to the affine linear group~\cite{Kasami-ic68}, one finds that these codes achieve capacity under symbol-MAP decoding.
This follows from the fact that the affine linear group is doubly transitive.
As it happens, this class also includes all extended primitive narrow-sense Bose-Chaudhuri-Hocquengham (BCH) codes~\cite{Kasami-ic68}.
Additionally, we show that sequences of extended primitive narrow-sense BCH codes over $\mathbb{F}_q$ achieve capacity under block-MAP decoding.
To keep the presentation simple, we present proofs for the binary case and discuss the generalization to $\mathbb{F}_q$ in Section~\ref{section:Fqlinear}.

These results are rather surprising.
Until the discovery of polar codes, it was commonly believed that codes with a simple deterministic structure might be unable to achieve capacity~\cite{Ahlswede-it82,Coffrey-it90}. 
While polar codes might be considered a counterexample to this statement, they require a somewhat complicated design process that is heavily dependent on the channel.
As such, their ability to achieve capacity appears somewhat unrelated to the inherent symmetry in the binary Hadamard transform.
In contrast, the performance guarantees obtained here are a consequence only of linearity and the structure induced by the symmetry of the doubly-transitive permutation group.

Reed-Muller codes were introduced by Muller in~\cite{Muller-ire54} and, soon after, Reed proposed a majority logic decoder in~\cite{Reed-ire54}.
A binary Reed-Muller code, parameterized by non-negative integers $m$ and $v$, is a linear code of length $2^m$ and dimension $\binom{m}{0}+\cdots+\binom{m}{v}$.
It is well known that the minimum distance of this code is $2^{m-v}$~\cite{Macwilliams-1977,Lin-2004,Delsarte-ic70}.
Thus, it is impossible to simultaneously have a non-vanishing rate and a minimum distance that scales linearly with blocklength.
As such, these codes cannot correct \emph{all} patterns with a constant fraction of erasures.
Until now, it was not clear whether or not these codes can correct \emph{almost all} erasure patterns up to the capacity limit.

The possibility that Reed-Muller codes might achieve capacity under MAP decoding is discussed in several works~\cite{Costello-proc07,Didier-it06,Mondelli-tcom14,Arikan-it09,Arikan-itw10,Abbe-arxiv14}.
In particular, it has been conjectured in~\cite{Costello-proc07} and observed via simulations for small block lengths in~\cite{Didier-it06,Mondelli-tcom14}.
Moreover, these codes are believed to behave like random codes \cite{Carlet-isit05,Didier-it06}.
For rates approaching either $0$ or $1$ with sufficient speed, it has recently been shown that Reed-Muller codes can correct almost all erasure patterns up to the capacity limit\footnote{It requires some effort to define precisely what capacity limit is for rates approaching $0$ or $1$. See \cite[Definition 2.5]{Abbe-arxiv14} for details.} \cite{Abbe-arxiv14}.
Unfortunately, this approach currently falls short if the rates are bounded away from 0 and 1.

Even after 50 years of their discovery, Reed-Muller codes remain an active area of research in theoretical computer science and coding theory.
The early work in~\cite{Sloane-it70,Kasami-it70,Kasami-ic76} culminated in obtaining asymptotically tight bounds (fixed order $v$ and asymptotic $m$) for their weight distribution~\cite{Kaufman-it12}.
Also, there is considerable interest in constructing low-complexity decoding algorithms~\cite{Sidelnikov-ppi92,Dumer-it04,Dumer-it06*2,Dumer-it06,Saptharishi-arxiv15}.
Undoubtedly, interest in the coding theory community for these codes was rekindled by the tremendous success of polar codes and their close connection to Reed-Muller codes~\cite{Arikan-it09,Arikan-comlett08,Mondelli-tcom14}.

Due to their desirable structure, constructions based on these codes are used extensively in cryptography~\cite{Camion-acrypt92,Ta-shma-focs01,Shaltiel-focs01,Canteaut-it01,Carlet-isit05,Carlet-it06,Didier-fse06,Gerard-alc11}.
Reed-Muller codes are also known for their \emph{locality} \cite{Yekhanin-fnt11}.
Some of the earliest known constructions for locally correctable codes are based on these codes \cite{Gemmell-stoc91,Gemmell-ipl92}.
Interestingly, local correctability of Reed-Muller codes is also a consequence of its permutation group being doubly transitive \cite{Kaufman-approxrandom10}, a crucial requirement in our approach.
However, a doubly transitive permutation group is not sufficient for local testability \cite{Grigorescu-ccc08}.

The central object in our analysis is the extrinsic information transfer (EXIT) function.
EXIT charts were introduced by ten Brink~\cite{tenBrink-elet99} in the context of turbo decoding as a visual tool to understand iterative decoding.
This work led to the area theorem for EXIT functions in~\cite{Ashikhmin-it04} which was further developed in~\cite{Measson-it08}.
For a given input bit, the EXIT function is defined to be the conditional entropy of the input bit given the outputs associated with all~\emph{other} input bits.
The average EXIT function is formed by averaging all of the bit EXIT functions.
We note that these functions are also instrumental in the design and analysis of LDPC codes~\cite{RU-2008}.

An important property of EXIT functions is the EXIT area theorem, which says that the area under the average EXIT function equals the rate of the code.
The value of the EXIT function at a particular erasure value is also directly related to the bit erasure probability under bit-MAP decoding.
For a sequence of binary linear codes with rate $r$ to be capacity achieving, the bit erasure probability, and therefore the average EXIT function, must converge to 0 for any erasure rate below $1-r$.
Since the areas under the average EXIT curves are fixed to $r$, the EXIT functions for these codes must also converge to 1 for any erasure rate above $1-r$.
Thus, the EXIT curves must exhibit a \emph{sharp transition} from $0$ to $1$, and as a consequence of area theorem, this transition must occur at the erasure value of $1-r$.

We investigate the threshold behavior of EXIT functions for certain binary linear codes via sharp thresholds for monotone boolean functions~\cite{Boucheron-2013,Kalai-05}.
The general method was pioneered by Margulis~\cite{Margulis-ppi74} and Russo~\cite{Russo-prf82}.
Later, it was significantly generalized in~\cite{Talagrand-gafa93} and~\cite{Friedgut-procams96}.
This approach has been applied to many problems in theoretical computer science with remarkable success \cite{Friedgut-procams96,Friedgut-jams99,Dinur-am05}.
In the context of coding theory, Z\'emor introduced this approach in \cite{Zemor-94}.
It was refined further in~\cite{Tillich-cpc00}, and also extended to AWGN channels in~\cite{Tillich-it04}.
For the BEC, \cite{Zemor-94,Tillich-cpc00} show that the block erasure rate jumps sharply from $0$ to $1$ as the minimum distance of the code grows.
However, this approach does not generalize directly to EXIT functions and, therefore, does not establish the location of the threshold.
To show the threshold behavior for EXIT functions, we instead focus on symmetry~\cite{Friedgut-procams96} and require that the codes of interest have permutation groups that are doubly transitive.

After we completed the first version of this paper~\cite{Kumar-arxiv15v1}, we discovered that the same approach was being pursued independently by Kudekar, Mondelli, \c{S}a\c{s}o\u{g}lu, and Urbanke~\cite{Kudekar-arxiv15v1}.

The article is organized as follows.
Section \ref{section:preliminaries} includes necessary background on EXIT functions, permutation groups of linear codes, and capacity-achieving codes.
Section \ref{section:sharp_threshold} deals with the threshold behavior of monotone boolean functions.
In Section \ref{section:applications}, as an application of the hitherto analysis, we show that Reed-Muller and extended primitive narrow-sense BCH codes achieve capacity.
Finally, we provide extensions, open problems in Section \ref{section:discussion}, and concluding remarks in Section \ref{section:conclusion}.


\section{Preliminaries}
\label{section:preliminaries}
This article deals primarily with binary linear codes transmitted over erasure channels and bit-MAP decoding.
In the following, all codes are understood to be proper binary linear codes with minimum distance at least $2$, unless mentioned otherwise.
Recall that a linear code is proper if no codeword position is $0$ in all codewords.
Let $\mc{C}$ denote an $(N,K)$ binary linear code with length $N$ and dimension $K$.
The rate of this code is given by $r \triangleq K/N$.
Denote the minimum distance of $\mc{C}$ by $\dmin$.
We assume that a random codeword is chosen uniformly from this code and transmitted over a memoryless BEC.
In the following subsections, we review several important definitions and properties related to this setup.

Notational convention: 
\begin{itemize}
\item The natural numbers are denoted by $\mathbb{N}=\{1,2,\ldots\}$.

\item For $n \in \mathbb{N}$, let $[n]$ denote the set $\{1,2,\dots,n\}$.

\item We associate a binary sequence in $\{0,1\}^N$ with a subset of $[N]$ defined by the non-zero indices in the sequence.
  We use this equivalence between sets and binary sequences extensively.
  For example, a sequence $1001100$ is identified by the subset $\{1,4,5\} \subseteq [7]$ and vice versa.
  Similarly, if $101110$ is a codeword in $\mc{C}$, then we say $\{1,3,4,5\} \in \mc{C}$.

\item We say that a set $A$ \emph{covers} set $B$ if $B \subseteq A$.
  Also, for sequences $\ul{a},\ul{b} \in\{0,1\}^N$, we write $\ul{a} \leq \ul{b}$ if $a_i \leq b_i$ for $i \in [N]$.
  Equivalently, $\ul{a} \leq \ul{b}$ if the set associated with $\ul{b}$ covers the set associated with $\ul{a}$.

\item For a set $A$, $\mathbbm{1}_{A}(\cdotp)$ denotes its indicator function.
  The random variable $\indicator{\cdot}$ is an indicator of some event. 
  For example, for random variables $X$ and $Y$, $\indicator{X \neq Y}$ indicates the event $X \neq Y$.

\item For a vector $\ul{a}=(a_1,a_2,\dots,a_N)$, the shorthand $\ul{a}_{\sim i}$ denotes $(a_1,\dots,a_{i-1},a_{i+1},\dots,a_N)$.

\item $0^n$ and $1^n$ denote the all-zero and all-one sequences of length $n$, respectively.

\item A memoryless BEC with erasure probability $p$ is denoted by $\mr{BEC}(p)$.  If the erasure probability is different for each bit, then we write $\mr{BEC}(\ul{p})$, where $\ul{p}=(p_1,\ldots,p_n)$ and $p_i$ indicates the erasure probability of bit $i$.

\item For a quantity $\theta$ with index $n$, we use either $\theta_n$ or $\theta^{(n)}$.
  Typically, we write $\theta^{(n)}$ when using $\theta_n$ may cause confusion with another quantity such as $\theta_i$; in the latter case we write $\theta_i^{(n)}$.

\item For a permutation $\pi \colon [N] \to [N]$ and $A \subseteq [N]$, $\pi(A)$ denotes the set $\{\pi(\ell) | \ell \in A\}$.
  For sequence $\ul{a} \in \{0,1\}^N$, $\pi(\ul{a})$ denotes the length-$N$ sequence where the $\pi(i)$-th element is $a_{i}$.

\item As is standard in information theory, $\ent{\cdot}$ denotes the entropy of a discrete random variable and $\ent{\cdot | \cdot}$ denotes the conditional entropy of a discrete random variable in bits.

\item All logarithms in this article are natural unless the base is explicitly mentioned.

\end{itemize}

\subsection{Bit and Block Erasure Probability}
\label{subsection:errorprobability}
The input and output alphabets of the BEC are denoted by $\mathcal{X}=\{0,1\}$ and $\mathcal{Y} = \{0,1,*\}$, respectively. 
Let $\ul{X}=(X_1,\dots,X_N) \in \mathcal{X}^N$ be a uniform random codeword and  $\ul{Y}= (Y_1,\dots,Y_N)\in \mathcal{Y}^N$ be the received sequence obtained by transmitting $\ul{X}$ through a $\mr{BEC}(p)$.
In this article, our main interest is the bit-MAP decoder.
But, we will also obtain some results for the block-MAP decoder indirectly based on our analysis of the bit-MAP decoder.

For linear codes and erasure channels, it is possible to recover the transmitted codeword if and only if the erasure pattern does not cover any codeword.
To see this, fix an erasure pattern and observe that adding a codeword to the input sequence causes the output sequence to change if and only if the erasure pattern does not cover the codeword.
Similarly, it is possible to recover bit $i$ if and only if the erasure pattern does not cover any codeword where bit $i$ is non-zero.
Whenever bit $i$ cannot be recovered uniquely, the symmetry of a linear code implies that set of codewords matching the unerased observations has an equal number of $0$'s and $1$'s in bit position $i$.
In this case, the posterior marginal of bit $i$ given the observations contains no information about bit $i$.

Let $D_i \colon \mathcal{Y}^N \to \mathcal{X} \cup \{*\}$ denote the bit-MAP decoder for bit $i$ of $\mc{C}$.
For a received sequence $\ul{Y}$, if $X_i$ can be recovered uniquely, then $D_i(\ul{Y})=X_i$.
Otherwise, $D_i$ declares an erasure and returns $*$.
Let the erasure probability for bit $i \in [N]$ be 
\begin{align*}
  P_{b,i} \triangleq \Pr(D_i(\ul{Y}) \neq X_i) ,
\end{align*}
and the average bit erasure probability be
\begin{align*}
  P_b \triangleq \frac{1}{N} \sum_{i=1}^N P_{b,i} .
\end{align*}

Whenever bit $i$ can be recovered from a received sequence $\ul{Y}=\ul{y}$, $H(X_i | \ul{Y}=\ul{y})=0$.
Otherwise, the uniform codeword assumption implies that the posterior marginal of bit $i$ given the observations is $\Pr (X_i=x|\ul{Y}=\ul{y})=\frac{1}{2}$ and $H(X_i | \ul{Y}=\ul{y})=1$.
This immediately implies that
\begin{align*}
  P_{b,i} &= \ent{X_i \mid \ul{Y}}, & P_b &= \frac{1}{N} \sum_{i=1}^N \ent{X_i \mid \ul{Y}} .
\end{align*}

Let $D \colon \mathcal{Y}^N \to \mathcal{X}^N \cup \{*\}$ denote the block-MAP decoder for $\mc{C}$.
Given a received sequence $\ul{Y}$, the vector $D(\ul{Y})$ is equal to $\ul{X}$ whenever it is possible to uniquely recover $\ul{X}$ from $\ul{Y}$.
Otherwise, $D$ declares an erasure and returns $*$.
Therefore, the block erasure probability is given by
\begin{align*}
  P_B \triangleq \Pr(D(\ul{Y}) \neq \ul{X}) .
\end{align*}

Using the set equivalence
\begin{align*}
  \{ D(\ul{Y}) \neq \ul{X} \} = \bigcup_{i \in [N]} \{ D_i(\ul{Y}) \neq X_i \},
\end{align*}
it is easy to see that
\begin{align}
  \label{equation:erasure_prob_relations}
  P_{b,i} &\leq P_B, & P_b &\leq P_B, & P_B &\leq N P_b .
\end{align}
Also, if $D$ declares an erasure, there will be at least $\dmin$ bits in erasure.
Therefore,
\begin{align*}
  \dmin \indicator{D(\ul{Y}) \neq \ul{X}} \leq \sum_{i \in [N]} \indicator{D_i(\ul{Y}) \neq X_i} .
\end{align*}
Taking expectations on both sides gives a tighter bound on $P_B$ in terms of $P_b$,
\begin{align}
  \label{equation:erasure_prob_dmin}
  P_B \leq \frac{N}{\dmin} P_b .
\end{align}

\subsection{MAP EXIT Functions}
\label{subsection:exit_functions}
Again, let $\ul{X}=(X_1,\dots,X_N)$ denote a uniformly selected codeword from $\mc{C}$ and $\ul{Y}$ be the sequence obtained from observing $\ul{X}$ with some positions erased.
In this case, however, we assume $X_i$ is transmitted over the $\mr{BEC}(p_i)$ channel.
We refer to this as the $\mr{BEC}(\ul{p})$ channel where $\ul{p}=(p_1,\ldots,p_N)$ is the vector of channel erasure probabilities.
While one typically evaluates all quantities of interest at $\ul{p}=(p,\dots,p)$, such a parametrization provides a convenient mathematical framework for many derivations.

The vector EXIT function associated with bit $i$ of $\mc{C}$ is defined by
\begin{align*}
  h_i(\ul{p}) \triangleq \ent{X_i | \ul{Y}_{ \sim i}(\ul{p}_{\sim i})} .
\end{align*}
Also, the average vector EXIT function is defined by
\begin{align*}
  h(\ul{p}) \triangleq \frac{1}{N} \sum_{i=1}^{N} h_{i}(\ul{p}) .
\end{align*}
Note that, while we define $h_i$ as a function of $\ul{p}$ for uniformity, it does not depend on $p_i$.
In terms of vector EXIT functions, the standard scalar EXIT functions $h(p)$ and $h_i(p)$ (for $i \in [N]$) are given by
\begin{align*}
  h_i(p) &\triangleq h_i(\ul{p}) \Big{\lvert}_{\ul{p}=(p,\dots,p)}, & h(p) &\triangleq h(\ul{p}) \Big{\lvert}_{\ul{p}=(p,\dots,p)} .
\end{align*}

The bit erasure probabilities and the EXIT functions $h(p)$ and $h_i(p)$ have a close relationship.
Observe that
\begin{align*}
  \ent{X_i \mid \ul{Y}} &= \Pr(Y_i=*) \ent{X_i | \ul{Y}_{\sim i},Y_i=* } \\
  &\qquad + \Pr(Y_i=X_i) \ent{X_i | \ul{Y}_{\sim i}, Y_i = X_i} \\
  &= \Pr(Y_i=*) \ent{X_i | \ul{Y}_{\sim i}} .
\end{align*}
Therefore,
\begin{align}
\label{equation:biterasure_exit_relation}
P_{b,i}(p) &= ph_i(p), & P_b(p) &= ph(p) .
\end{align}

We now state several well-known properties of these EXIT functions~\cite{Ashikhmin-it04,Measson-it08}, which play a crucial role in the subsequent analysis.
It is worth noting that the original definition of EXIT charts in \cite{Ashikhmin-it04} focused on mutual information $I(\ul{X};\ul{Y})$ while later work on EXIT functions focused on the conditional entropy $H(\ul{X}|\ul{Y})$~\cite{Measson-it08}.
In our setting, this difference results only in trivial remappings of all discussed quantities.
\begin{proposition}
\label{proposition:vector_exit_properties_1}
For a code $\mc{C}$ on the $\mr{BEC}(\ul{p})$ channel, the EXIT function associated with bit $i$ satisfies
\begin{align*}
  h_i(\ul{p}) = \pderi{H(\ul{X}|\ul{Y}(\ul{p}))}{p_i} .
\end{align*}
For a parametrized path $\ul{p}(t)=(p_1(t),\dots,p_n(t))$ defined for $t \in [0,1]$, where $p'_i(t)$ is continuous, one finds
\begin{align*}
  \ent{\ul{X} | \ul{Y}(\ul{1})} - \ent{\ul{X} | \ul{Y}(\ul{0})} = \int_0^1 \left( \sum_{i=1}^{N} h_i(\ul{p}(t)) p'_i(t) \right) \diff{t} .
\end{align*}
\end{proposition}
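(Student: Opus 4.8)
The plan is to show first that $\ent{\ul{X} \mid \ul{Y}(\ul{p})}$ is, for each coordinate, an affine function of $p_i$ with the other coordinates held fixed; then the partial derivative can be read off directly, and the path statement follows from the chain rule together with the fundamental theorem of calculus.

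For the first part, I would condition on the value of $Y_i$. Writing $Y_i$ as a deterministic function of $X_i$ and an independent erasure indicator, the event $\{Y_i = *\}$ has probability $p_i$ and is independent of the pair $(\ul{X}, \ul{Y}_{\sim i})$, so conditioning on it leaves the joint law of that pair unchanged; on the complementary event $\{Y_i = X_i\}$, which has probability $1-p_i$, the observation reveals $X_i$. This gives
\begin{align*}
  \ent{\ul{X} \mid \ul{Y}(\ul{p})} = p_i\, \ent{\ul{X} \mid \ul{Y}_{\sim i}(\ul{p}_{\sim i})} + (1-p_i)\, \ent{\ul{X} \mid \ul{Y}_{\sim i}(\ul{p}_{\sim i}),\, X_i} .
\end{align*}
Applying the chain rule for entropy to the first term, $\ent{\ul{X} \mid \ul{Y}_{\sim i}} = \ent{X_i \mid \ul{Y}_{\sim i}} + \ent{\ul{X} \mid X_i, \ul{Y}_{\sim i}}$, and collecting the two copies of $\ent{\ul{X} \mid X_i, \ul{Y}_{\sim i}}$, one obtains
\begin{align*}
  \ent{\ul{X} \mid \ul{Y}(\ul{p})} = p_i\, \ent{X_i \mid \ul{Y}_{\sim i}(\ul{p}_{\sim i})} + \ent{\ul{X} \mid X_i, \ul{Y}_{\sim i}(\ul{p}_{\sim i})} .
\end{align*}
Since neither $\ent{X_i \mid \ul{Y}_{\sim i}(\ul{p}_{\sim i})}$ nor $\ent{\ul{X} \mid X_i, \ul{Y}_{\sim i}(\ul{p}_{\sim i})}$ depends on $p_i$, differentiating in $p_i$ yields $\pderi{\ent{\ul{X}\mid\ul{Y}(\ul{p})}}{p_i} = \ent{X_i \mid \ul{Y}_{\sim i}(\ul{p}_{\sim i})} = h_i(\ul{p})$, which is the first claim. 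Iterating the displayed identity over all coordinates also shows that $\ent{\ul{X}\mid\ul{Y}(\ul{p})}$ is a multilinear polynomial in $(p_1,\dots,p_N)$, hence $C^\infty$; this is used below.

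For the second part, set $g(t) \triangleq \ent{\ul{X} \mid \ul{Y}(\ul{p}(t))}$ for $t \in [0,1]$. Because $\ent{\ul{X}\mid\ul{Y}(\ul{p})}$ is smooth in $\ul{p}$ and each $p_i'(t)$ is continuous, $g$ is continuously differentiable on $[0,1]$ and, by the chain rule together with the first part, $g'(t) = \sum_{i=1}^N h_i(\ul{p}(t))\, p_i'(t)$. The fundamental theorem of calculus then gives
\begin{align*}
  \ent{\ul{X} \mid \ul{Y}(\ul{1})} - \ent{\ul{X} \mid \ul{Y}(\ul{0})} = g(1) - g(0) = \int_0^1 \left( \sum_{i=1}^N h_i(\ul{p}(t))\, p_i'(t) \right) \diff{t} ,
\end{align*}
which is the asserted identity.

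I do not expect a substantive obstacle here. The only points requiring care are justifying that conditioning on $\{Y_i = *\}$ does not change the joint distribution of $(\ul{X}, \ul{Y}_{\sim i})$ — immediate from the memoryless channel and independent erasures — and that the chain rule and the fundamental theorem of calculus apply along the path, which is immediate once the multilinear (hence smooth) dependence of $\ent{\ul{X}\mid\ul{Y}(\ul{p})}$ on $\ul{p}$ is noted.
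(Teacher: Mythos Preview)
Your proof is correct and follows essentially the same approach as the paper's: both arguments decompose $\ent{\ul{X}\mid\ul{Y}(\ul{p})}$ via the chain rule of entropy together with conditioning on whether $Y_i=*$, arriving at the same affine expression in $p_i$; the only cosmetic difference is the order (the paper applies the chain rule first and then conditions on $Y_i$, whereas you condition first and then apply the chain rule). Your explicit remark that the resulting dependence on $\ul{p}$ is multilinear, hence smooth, is a welcome justification for invoking the chain rule and the fundamental theorem of calculus along the path, which the paper simply attributes to ``vector calculus.''
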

\begin{IEEEproof}
This result is implied by the results of both~\cite{Ashikhmin-it04} and~\cite{Measson-it08}.
For completeness, we repeat here the proof from~\cite[Theorem 2]{Measson-it08} using our notation.

For the first statement, we start by using chain rule of entropy to write
\begin{align*}
  \ent{\ul{X} | \ul{Y}(\ul{p})} = \ent{X_i | \ul{Y}(\ul{p})} + \ent{\ul{X}_{\sim i} | X_i, \ul{Y}(\ul{p}) }.
\end{align*}
Then, we observe that 
\begin{align*}
  \ent{\ul{X}_{\sim i} | X_i, \ul{Y}(\ul{p})} = \ent{\ul{X}_{\sim i} | X_i, \ul{Y}_{\sim i}(\ul{p}_{\sim i})}, 
\end{align*}
is independent of $p_i$.
Since
\begin{align*}
  \ent{X_i | \ul{Y}(\ul{p})} &= \Pr(Y_i=*) \ent{X_i | \ul{Y}_{\sim i}(\ul{p}_{\sim i}) , Y_i=*} \\
  & \quad + \Pr(Y_i=X_i) \ent{X_i | \ul{Y}_{\sim i}(\ul{p}_{\sim i}) , Y_i=X_i} \\
  &= p_i \ent{X_i | \ul{Y}_{\sim i}(\ul{p}_{\sim i})},
\end{align*}
we find that
\begin{align*}
  \pderi{\ent{X_i | \ul{Y}(\ul{p})}}{p_i} = \ent{X_i | \ul{Y}_{\sim i}(\ul{p}_{\sim i})} = h_i(\ul{p}).
\end{align*}
The second statement now follows directly from vector calculus.
\end{IEEEproof}

The following sets characterize the EXIT functions $h_i$ and we will refer to them throughout the article.
\begin{definition}
\label{definition:Omega}
Consider a code $\mc{C}$ and the \emph{indirect recovery} of $X_i$ from the subvector $\ul{Y}_{\sim i}$ (i.e., the bit-MAP decoding of $Y_i$ from $\ul{Y}$ when $Y_i = *$).
For $i \in [N]$, the set of erasure patterns that prevent indirect recovery of $X_i$ under bit-MAP decoding is given by
  \begin{align*}
    \Omega_{i} \triangleq \Big{\{} A \subseteq [N]\backslash\{i\} \,|\, \exists B \subseteq [N] \backslash\{i\}, B \cup \{i\} \in \mc{C}, B \subseteq A \Big{\}}.
  \end{align*}  
For distinct $i,j \in [N]$, the set of erasure patterns where the $j$-th bit is \emph{pivotal} for the indirect recovery of $X_i$ is given by
  \begin{align*}
    \partial_j \Omega_i \triangleq \left\{ A \subseteq [N]\backslash \{i\} \mid A \backslash\{j\} \notin \Omega_i , A\cup\{j\} \in \Omega_i \right\}.
  \end{align*}
These are erasure patterns where $X_i$ can be recovered from $\ul{Y}_{\sim i}$ if and only if $Y_j \neq *$ (i.e., the $j$-th bit is not erased).
Note that $\partial_j \Omega_i$ includes patterns from both $\Omega_i$ and $\Omega_i^c$.

\end{definition}
Intuitively, $\Omega_i$ is the set of all erasure patterns that cover some codeword whose $i$-th bit is $1$.
Also, since the minimum distance of $\mc{C}$ is at least $2$ by assumption, the decoder can always recover bit~$i$ indirectly if no other bits are erased.
Thus, $\Omega_i$ does not contain the empty set.
For $j\in [N] \backslash i$, the set $\partial_j \Omega_i$ characterizes the boundary erasure patterns where flipping the erasure status of the $j$-th bit moves the pattern between $\Omega_i$ and $\Omega_i^c$.

\begin{proposition}
\label{proposition:vector_exit_properties_2}
For a code $\mc{C}$ on the $\mr{BEC}(\ul{p})$ channel, we have the following explicit expressions.
\begin{enumerate}[a)]
\item For bit $i$, the EXIT function is given by
  \begin{align*}
    h_{i}(\ul{p}) = \sum_{A \in \Omega_i} \prod_{\ell \in A} p_\ell \prod_{\ell \in A^c\backslash \{i\}} (1-p_\ell).
  \end{align*}
\item For distinct $i$ and $j$, the mixed partial derivative satisfies
  \begin{align*}
    \frac{\partial^2 H(\ul{X} | \ul{Y}(\ul{p}))}{\partial p_j \partial p_i} \!=\! \pderi{h_i(\ul{p})}{p_j} \!= \!\sum_{A \in \partial_j \Omega_i} \prod_{\ell \in A} p_\ell \!\!\prod_{\ell \in A^c\backslash \{i\}} (1-p_\ell).
  \end{align*}
\end{enumerate}
\end{proposition}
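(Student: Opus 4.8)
The plan is to obtain part~a) directly from the probabilistic meaning of $\Omega_i$ established just before Definition~\ref{definition:Omega}, and then to derive part~b) purely by calculus from the formula in part~a). For part~a), recall that $h_i(\ul{p}) = \ent{X_i \mid \ul{Y}_{\sim i}(\ul{p}_{\sim i})}$, that over the $\mr{BEC}(\ul{p})$ channel bit $X_i$ is recoverable from $\ul{Y}_{\sim i}$ precisely when the random set $A \subseteq [N]\backslash\{i\}$ of erased positions lies outside $\Omega_i$, and that—by the uniform-codeword assumption used throughout Section~\ref{subsection:errorprobability}—the conditional entropy $\ent{X_i \mid \ul{Y}_{\sim i}(\ul{p}_{\sim i})}$ equals $1$ on the event $A \in \Omega_i$ and $0$ otherwise. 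Since the positions $\ell \neq i$ are erased independently with probabilities $p_\ell$, I would simply take the expectation of this $\{0,1\}$-valued quantity, giving $h_i(\ul{p}) = \Pr(A \in \Omega_i) = \sum_{A \in \Omega_i} \prod_{\ell \in A} p_\ell \prod_{\ell \in A^c\backslash\{i\}} (1-p_\ell)$, which is the claimed expression.

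For part~b), the first equality $\partial^2 H(\ul{X}\mid\ul{Y}(\ul{p}))/\partial p_j\partial p_i = \partial h_i(\ul{p})/\partial p_j$ is immediate from Proposition~\ref{proposition:vector_exit_properties_1}, which identifies $h_i(\ul{p})$ with $\partial H(\ul{X}\mid\ul{Y}(\ul{p}))/\partial p_i$; since $H(\ul{X}\mid\ul{Y}(\ul{p}))$ is a polynomial in $\ul{p}$ there is no regularity concern and the order of differentiation is irrelevant. For the explicit expression I would differentiate the multilinear polynomial from part~a) with respect to $p_j$ term by term: a set $A \in \Omega_i$ with $j \in A$ contributes $+\prod_{\ell \in A\backslash\{j\}} p_\ell \prod_{\ell \in A^c\backslash\{i\}} (1-p_\ell)$, whereas a set $A \in \Omega_i$ with $j \notin A$ contributes $-\prod_{\ell \in A} p_\ell \prod_{\ell \in A^c\backslash\{i,j\}} (1-p_\ell)$.

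The crux is then a bookkeeping argument reconciling this derivative with the sum over $\partial_j\Omega_i$. I would group the terms of the term-by-term derivative according to the trace $B \triangleq A\backslash\{j\} \subseteq [N]\backslash\{i,j\}$ and use that $\Omega_i$ is an up-set (a superset of a pattern covering some codeword still covers that codeword): when both $B$ and $B\cup\{j\}$ lie in $\Omega_i$, the contributions of $A=B$ and $A=B\cup\{j\}$ are negatives of one another and cancel, so only the traces with $B\notin\Omega_i$ and $B\cup\{j\}\in\Omega_i$ survive—these are exactly the configurations for which bit $j$ is pivotal—each leaving the monomial $\prod_{\ell\in B}p_\ell\prod_{\ell\in [N]\backslash(B\cup\{i,j\})}(1-p_\ell)$. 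On the other side, the definition of $\partial_j\Omega_i$ makes it exactly the disjoint union $\bigcup_B\{B,\,B\cup\{j\}\}$ over those same pivotal traces, and within each pair the two monomials $\prod_{\ell\in A}p_\ell\prod_{\ell\in A^c\backslash\{i\}}(1-p_\ell)$ combine, via $(1-p_j)+p_j=1$, into the same $\prod_{\ell\in B}p_\ell\prod_{\ell\in [N]\backslash(B\cup\{i,j\})}(1-p_\ell)$; matching the two sums completes the proof. An equivalent route is to apply the Margulis--Russo derivative identity directly to $h_i(\ul{p})=\expt[\mathbbm{1}_{\Omega_i}(A)]$, which gives $\partial h_i/\partial p_j = \Pr(\text{bit } j \text{ is pivotal})$, and then expand that probability over the configurations of the remaining coordinates. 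I do not expect any real difficulty in part~b); the only thing to watch is the set arithmetic—keeping straight whether a complement is taken in $[N]$, in $[N]\backslash\{i\}$, or in $[N]\backslash\{i,j\}$, and remembering that $\partial_j\Omega_i$ deliberately contains patterns on both sides of the boundary.
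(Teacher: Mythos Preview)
Your proposal is correct and follows essentially the same approach as the paper: part~a) is the paper's probabilistic expansion of $h_i(\ul{p})$ over erasure patterns, and for part~b) both you and the paper differentiate the multilinear formula term by term, pair up sets according to $B=A\backslash\{j\}$, use monotonicity of $\Omega_i$ to cancel the non-pivotal pairs, and then recombine the surviving pivotal contribution as the sum over $B$ and $B\cup\{j\}$ via $(1-p_j)+p_j=1$. The Margulis--Russo remark you add is a nice alternative viewpoint but not used in the paper's proof here.
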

\begin{IEEEproof}
For a), the definition of $h_i$ implies
\begin{align*}
  h_i(\ul{p}) &= \ent{X_i | \ul{Y}_{\sim i}(p_{\sim i}) } \\
  &=\sum_{\ul{y}_{\sim i}\in \mathcal{Y}^{N-1}} \Pr(\ul{Y}_{\sim i}=\ul{y}_{\sim i}) \ent{X_i | \ul{Y}_{\sim i}=\ul{y}_{\sim i}} .
\end{align*}
Note that either $y_\ell=x_\ell$ or $y_\ell=*$.
Let $A \subseteq [N] \backslash \{i\}$ be the set of indices where $y_\ell=*$ so that
\begin{align*}
  \Pr(\ul{Y}_{\sim i}=\ul{y}_{\sim i}) = \prod_{\ell \in A} p_\ell \prod_{\ell \in A^c \backslash \{i\}} (1-p_\ell) .
\end{align*}

If $A \cup \{i\}$ covers a codeword in $\mc{C}$ whose $i$-th bit is non-zero, then bit-MAP decoder fails to decode bit $i$.
Also, since the posterior probability of $X_i$ given $\ul{Y}_{\sim i} = \ul{y}_{\sim i}$ is uniform, $H(X_i | \ul{Y}_{\sim i}=\ul{y}_{\sim i})=1$.

If $A \cup \{i\}$ does not cover any codeword in $\mc{C}$ with non-zero bit $i$, then the MAP estimate of $X_i$ given $\ul{Y}_{\sim i} = \ul{y}_{\sim i}$ is equal to $X_i$ and $H(X_i | \ul{Y}_{\sim i}=\ul{y}_{\sim i})=0$.

Thus, the EXIT function $h_i (\ul{p})$ is given by summing over the first set of erasure patterns where the entropy is $1$.
This set is precisely $\Omega_i$, the set of all erasure patterns that cover a codeword whose $i$-th bit is non-zero.

For b), we evaluate the partial derivative using the explicit evaluation of $h_i(\ul{p})$ from part a).
Suppose $A \in \Omega_i$.
To simplify things, we handle the two groups separately.

If $A\cup\{j\} \in \Omega_{i}$ and $A\backslash\{j\}  \in \Omega_i$, then we observe that
\begin{align*}
  \sum_{B\in \{A\cup\{j\},A\backslash\{j\}\}} & \;\;\, \prod_{\ell \in B} p_\ell \prod_{\ell \in B^c\backslash \{i\}} (1-p_\ell) \\
  = & \prod_{\ell \in A \backslash \{j\}} p_\ell \prod_{\ell \in A^c\backslash \{i,j\}} (1-p_\ell)
\end{align*}
is independent of the variable $p_j$.
Thus, its partial derivative with respect to $p_j$ is zero.

On the other hand, if $A \cup \{j\} \in \Omega_i$ but $A \backslash \{j\} \notin \Omega_i$, then $j \in A$.
In this case, the contribution of $A$ to $h_i(\ul{p})$ can be written as 
\begin{align*}
  h_{i,A}(\ul{p}) = \prod_{\ell \in A} p_\ell \prod_{\ell \in A^c\backslash \{i\} } (1 - p_\ell) .
\end{align*}
Since $j \in A$, we find that
\begin{align}
	\label{equation:dhia}
  \pderi{h_{i,A}(\ul{p})}{p_j} = \prod_{\ell \in A \backslash \{j\} } p_\ell \prod_{\ell \in A^c\backslash \{i\} } (1 - p_\ell) 
\end{align}
and, since the derivative is zero for patterns in the first group, we get 
\begin{align}
  \label{equation:dhip_sum}
  \pderi{h_i(\ul{p})}{p_j} \!= \! \sum_{A\in \{B \in \Omega_i  \, | B \backslash \{j\} \notin \Omega_i \} } \pderi{h_{i,A}(\ul{p})}{p_j}.
\end{align}
We can also rewrite~\eqref{equation:dhia} as
\begin{align}
	\label{equation:dhia2}
  \pderi{h_{i,A}(\ul{p})}{p_j} = \sum_{B\in \{A\cup\{j\},A\backslash\{j\}\}} \prod_{\ell \in B} p_\ell \prod_{\ell \in B^c\backslash \{i\} } (1 - p_\ell) ,
\end{align}
 where the effect of $p_j$ is removed by summing over $A \cup \{j\}$ and $A \backslash \{j\}$.
 Substituting~\eqref{equation:dhia2} into~\eqref{equation:dhip_sum} gives the desired result because
 $\partial_j \Omega_i$ is equal to the union of $\{A \in \Omega_i  \, | A \backslash \{j\} \notin \Omega_i \}$ and $\{A \not\in \Omega_i  \, | A \cup \{j\} \in \Omega_i \}$.
\end{IEEEproof}

The following proposition restates some known results in our notation.
The area theorem, stated below as c), first appeared in \cite[Theorem 1]{Ashikhmin-it04}, and the explicit evaluation of $h_i(p)$, stated below in a), is a restatement of \cite[Lemma 3.74(iv)]{RU-2008}.
\begin{proposition}
\label{proposition:scalar_exit_properties}
For a code $\mc{C}$ and transmission over a $\mr{BEC}$, we have the following properties for the EXIT functions.
\begin{enumerate}[a)]
\item The EXIT function associated with bit $i$ satisfies
  \begin{align*}
    h_{i}(p) = \sum_{A \in \Omega_{i}} p^{|A|}(1-p)^{N-1-|A|}.
  \end{align*}
\item For $j\in[N]\backslash \{i\}$, the partial derivative satisfies
  \begin{align*}
    \pderi{h_i(\ul{p})}{p_j} \Big{\lvert}_{\ul{p}=(p,\dots,p)} = \sum_{A \in \partial_j \Omega_i} p^{|A|} (1-p)^{N-1-|A|} .
  \end{align*}
\item The average EXIT function satisfies the \emph{area theorem}
  \begin{align*}
    \int_{0}^{1} h(p) \diff{p} = \frac{K}{N}.
  \end{align*}
\end{enumerate}
\end{proposition}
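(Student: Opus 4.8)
The plan is to dispatch parts~(a) and~(b) immediately by specializing Proposition~\ref{proposition:vector_exit_properties_2} to the constant channel vector $\ul{p}=(p,\dots,p)$. In part~(a) of that proposition each product $\prod_{\ell\in A}p_\ell\prod_{\ell\in A^c\backslash\{i\}}(1-p_\ell)$ collapses to $p^{|A|}(1-p)^{N-1-|A|}$, which is exactly the claimed formula; likewise, evaluating the expression in part~(b) of that proposition at $\ul{p}=(p,\dots,p)$ turns the sum over $\partial_j\Omega_i$ into $\sum_{A\in\partial_j\Omega_i}p^{|A|}(1-p)^{N-1-|A|}$. No further work is needed there.

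The substance is part~(c), the area theorem. I would invoke Proposition~\ref{proposition:vector_exit_properties_1} with the straight-line path $\ul{p}(t)=(t,t,\dots,t)$ for $t\in[0,1]$, which is admissible since $p_i'(t)=1$ is continuous for every $i$. The proposition then yields
\[
\ent{\ul{X}\mid\ul{Y}(\ul{1})}-\ent{\ul{X}\mid\ul{Y}(\ul{0})}=\int_0^1\Big(\sum_{i=1}^N h_i((t,\dots,t))\Big)\diff{t}=N\int_0^1 h(p)\diff{p},
\]
where the last equality uses the definition $h(p)=\tfrac1N\sum_{i=1}^N h_i(p)$.

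It then remains to evaluate the two endpoint entropies. At $\ul{p}=\ul{0}$ no bit is erased, so $\ul{Y}=\ul{X}$ and $\ent{\ul{X}\mid\ul{Y}(\ul{0})}=0$. At $\ul{p}=\ul{1}$ every bit is erased, so $\ul{Y}$ carries no information about $\ul{X}$ and $\ent{\ul{X}\mid\ul{Y}(\ul{1})}=\ent{\ul{X}}$; since $\ul{X}$ is uniform over the $2^K$ codewords of $\mc{C}$ and entropy is measured in bits, $\ent{\ul{X}}=K$. Substituting gives $K=N\int_0^1 h(p)\diff{p}$, i.e.\ $\int_0^1 h(p)\diff{p}=K/N$, as claimed.

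There is essentially no obstacle here; the only points requiring care are purely bookkeeping — verifying that the constant-slope path meets the hypotheses of Proposition~\ref{proposition:vector_exit_properties_1}, and respecting the base-$2$ entropy convention so that $\ent{\ul{X}}=K$ rather than $K\ln 2$. All the real content has already been packaged into Propositions~\ref{proposition:vector_exit_properties_1} and~\ref{proposition:vector_exit_properties_2}.
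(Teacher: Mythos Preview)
Your proposal is correct and matches the paper's own proof essentially line for line: parts~(a) and~(b) are obtained by specializing Proposition~\ref{proposition:vector_exit_properties_2} to $\ul{p}=(p,\dots,p)$, and part~(c) follows from Proposition~\ref{proposition:vector_exit_properties_1} along the path $\ul{p}(t)=(t,\dots,t)$ together with the endpoint evaluations $\ent{\ul{X}\mid\ul{Y}(\ul{1})}=K$ and $\ent{\ul{X}\mid\ul{Y}(\ul{0})}=0$.
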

\begin{IEEEproof}
The first two parts follow from Proposition \ref{proposition:vector_exit_properties_2}.
For the third part, we use Proposition \ref{proposition:vector_exit_properties_1}(b) with the path $\ul{p}(t)=(t,\dots,t)$.
This gives
\begin{align*}
  \ent{\ul{X} | \ul{Y}(\ul{1})} - \ent{\ul{X} | \ul{Y}(\ul{0})} = \int_0^1 \left( \sum_{i=1}^N h_i(t) \right) \diff{t} .
\end{align*}
Also, $\ent{\ul{X} | \ul{Y}(\ul{1})} = \ent{\ul{X}} = K$ and $\ent{\ul{X} | \ul{Y}(\ul{0})} = 0$.
Combining these observations gives the desired result.
\end{IEEEproof}

Since the code $\mc{C}$ is proper by assumption, $\Omega_i$ is non-empty and, in particular, $[N]\backslash \{i\} \in \Omega_i$.
Thus, $h_i$ is not a constant function equal to $0$ and $h_i (1)=1$.
Since the minimum distance of the code $\mc{C}$ is at least $2$ by assumption, $\Omega_i$ does not contain the empty set.
This implies that $h_i$ is not a constant function equal to $1$ and that $h_i(0)=0$.
As such, $h_i$ is a non-constant polynomial.
Also, $h_i$ is non-decreasing because Proposition \ref{proposition:scalar_exit_properties}(b) implies that $dh_i/dp \geq 0$.
It follows that $h_i$ is strictly increasing because a non-constant non-decreasing polynomial must be strictly increasing.

Consequently, the EXIT functions $h_i(p)$, and therefore $h(p)$, are continuous, strictly increasing polynomial functions on $[0,1]$ with $h(0)=h_i(0)=0$ and $h(1)=h_i(1)=1$.

The inverse function for the average EXIT function is therefore well-defined on $[0,1]$.
For $t \in [0,1]$, let
\begin{align}
  \label{equation:h_inverse}
  p_t \triangleq h^{-1}(t) = \inf \{ p \in [0,1] \mid h(p) \geq t \} ,
\end{align}
and note that $h(p_t)=t$.

\subsection{Permutations of Linear Codes}
\label{subsection:permutations}

Let $S_N$ be the symmetric group on $N$ elements.
The permutation group of a code is defined as the subgroup of $S_N$ whose group action on the bit ordering preserves the set of codewords~\cite[Section 1.6]{Huffman-2003}.

\begin{definition}
The permutation group $\mc{G}$ of a code $\mc{C}$ is defined to be
\begin{align*}
  \mc{G} = \left\{ \pi \in S_N \mid \text{$\pi(A) \in \mc{C}$ for all $A \in \mc{C}$} \right\} .
\end{align*}
\end{definition}

Interestingly, for binary linear codes, the permutation group is isomorphic to the group of weight-preserving linear transformations of the code \cite[Section 8.5]{Macwilliams-1977}, \cite[Section 7.9]{Huffman-2003}, \cite{Berger-it96}.

\begin{definition}
Suppose $\mc{G}$ is a permutation group. Then, 
\begin{enumerate}[a)]
\item  $\mc{G}$ is \emph{transitive} if, for any $i,j \in [N]$, there exists a permutation $\pi \in \mc{G}$ such that $\pi(i) = j$, and
\item $\mc{G}$ is \emph{doubly transitive} if, for any distinct $i,j,k \in [N]$, there exists a $\pi \in \mc{G}$ such that $\pi(i) = i$ and $\pi(j)=k$. 
\end{enumerate}
\end{definition}

Note that any non-trivial code (i.e., $0<r<1$) whose permutation group is transitive must be proper and have minimum distance at least two.

In the following, we explore some interesting symmetries of EXIT functions when the permutation group of the code is transitive or doubly transitive.

\begin{proposition}
\label{proposition:transitive_exit}
Suppose the permutation group $\mc{G}$ of a code $\mc{C}$ is transitive.
Then, for any $i \in [N]$,
\begin{align*}
  h(p) = h_i(p) \qquad \text{for $0\leq p \leq 1$} .
\end{align*}
\end{proposition}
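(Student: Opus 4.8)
The plan is to show that all the bit EXIT functions $h_i(p)$ coincide under transitivity, from which the claim is immediate since $h(p)$ is their average. The key observation is that the combinatorial object $\Omega_i$ of Definition~\ref{definition:Omega}, which determines $h_i(p)$ via the explicit formula in Proposition~\ref{proposition:scalar_exit_properties}(a), transforms naturally under the permutation group of the code.

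First I would fix distinct $i,j \in [N]$ and pick a permutation $\pi \in \mc{G}$ with $\pi(i) = j$, which exists by transitivity. The heart of the argument is the claim that $\pi$ induces a bijection between $\Omega_i$ and $\Omega_j$, namely $A \mapsto \pi(A)$. To verify this, take $A \in \Omega_i$: by definition there is $B \subseteq [N]\setminus\{i\}$ with $B \cup \{i\} \in \mc{C}$ and $B \subseteq A$. Since $\pi$ preserves the code, $\pi(B \cup \{i\}) = \pi(B) \cup \{j\} \in \mc{C}$; moreover $\pi(B) \subseteq \pi(A) \subseteq [N]\setminus\{j\}$ because $\pi$ is a bijection on $[N]$ fixing the exclusion of $i$ to an exclusion of $j$. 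Hence $\pi(A) \in \Omega_j$. Applying the same reasoning to $\pi^{-1}$ (which also lies in $\mc{G}$) gives the reverse inclusion, so $A \mapsto \pi(A)$ is a bijection $\Omega_i \to \Omega_j$. Crucially this bijection is weight-preserving: $|\pi(A)| = |A|$.

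With the bijection in hand, I would substitute into the formula from Proposition~\ref{proposition:scalar_exit_properties}(a):
\begin{align*}
  h_j(p) = \sum_{A' \in \Omega_j} p^{|A'|}(1-p)^{N-1-|A'|} = \sum_{A \in \Omega_i} p^{|\pi(A)|}(1-p)^{N-1-|\pi(A)|} = \sum_{A \in \Omega_i} p^{|A|}(1-p)^{N-1-|A|} = h_i(p),
\end{align*}
where the second equality reindexes the sum along the bijection and the third uses $|\pi(A)| = |A|$. Since $i$ was arbitrary and $j$ was arbitrary, all $h_i$ are equal, say to a common function $\tilde h(p)$. Then $h(p) = \frac{1}{N}\sum_{i=1}^N h_i(p) = \tilde h(p) = h_i(p)$ for every $i$, which is the assertion.

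I do not anticipate a genuine obstacle here; the only point requiring a little care is the bookkeeping that $\pi$ restricts correctly to a bijection between $[N]\setminus\{i\}$ and $[N]\setminus\{j\}$ and that the covering relation $B \subseteq A$ is preserved, both of which follow because $\pi$ is a permutation of the whole index set. One could alternatively phrase the proof entirely at the level of the random variables — noting that $\pi$ maps $(\ul{X},\ul{Y})$ for the code $\mc{C}$ on $\mr{BEC}(p)$ to an equidistributed pair, so that $H(X_i \mid \ul{Y}_{\sim i}) = H(X_{\pi(i)} \mid \ul{Y}_{\sim \pi(i)})$ directly — but the combinatorial route via $\Omega_i$ is cleaner given the machinery already set up and generalizes more transparently to the doubly-transitive refinements of $\partial_j\Omega_i$ used later.
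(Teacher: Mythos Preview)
Your proposal is correct and follows essentially the same approach as the paper's own proof: both fix $i,j$, use transitivity to obtain $\pi\in\mc{G}$ with $\pi(i)=j$, verify that $A\mapsto\pi(A)$ is a weight-preserving bijection $\Omega_i\to\Omega_j$, and then invoke Proposition~\ref{proposition:scalar_exit_properties}(a) to conclude $h_i(p)=h_j(p)$ and hence $h(p)=h_i(p)$. Your write-up is slightly more explicit in spelling out the reindexed sum and in noting the alternative entropy-level argument, but the substance is identical.
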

\begin{IEEEproof}
Since $\mc{G}$ is transitive, for any $i,j \in [N]$, there exists a permutation $\pi$ such that $\pi(i)=j$.
This will allow us to show that there is a bijection between $\Omega_i$ and $\Omega_j$ induced by the action of $\pi$ on the codeword indices.
To do this, we first show that $A \in \Omega_i$ implies $\pi(A) \in \Omega_j$.

Since $A \in \Omega_i$, by definition, there exists $B \subseteq A$ such that $B \cup \{ i \} \in \mc{C}$.
Since $\pi \in \mc{G}$, $\pi(B \cup \{i\}) \in \mc{C}$.
Also, $\pi(B \cup \{i\}) = \pi(B) \cup \{j\}$ and $\pi(B) \subseteq \pi(A)$.
Consequently, $\pi(A) \in \Omega_j$.

Similarly, if $A \in \Omega_j$, then $\pi^{-1}(A) \in \Omega_i$.
Thus, there is a bijection between $\Omega_i$ and $\Omega_j$ induced by $\pi$.
This bijection also preserves the weight of the vectors in each set (i.e., $|A|=|\pi(A)|$).

Since Proposition \ref{proposition:scalar_exit_properties}(a) implies that $h_i(p)$ only depends on the weights of elements in $\Omega_i$, it follows that $h_i(p)=h_j(p)$.
This also implies that $h(p)=h_i(p)$ for all $0 \leq p \leq 1$.
\end{IEEEproof}

\begin{proposition}
\label{proposition:2transitive_exit_deri}
Suppose that the permutation group $\mc{G}$ of a code $\mc{C}$ is doubly transitive.
Then, for distinct $i,j,k \in [N]$, and any $0 \leq p \leq 1$,
\begin{align*}
  \pderi{h_i(\ul{p})}{p_j} \Big{\lvert}_{\ul{p}=(p,\dots,p)} = \pderi{h_i(\ul{p})}{p_k} \Big{\lvert}_{\ul{p}=(p,\dots,p)} .
\end{align*}
\end{proposition}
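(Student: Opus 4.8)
The plan is to mirror the proof of Proposition~\ref{proposition:transitive_exit}, but to exploit the stronger hypothesis so as to obtain a permutation anchored at the index~$i$. By Proposition~\ref{proposition:scalar_exit_properties}(b), the two quantities to be compared are the polynomials $\sum_{A \in \partial_j \Omega_i} p^{|A|}(1-p)^{N-1-|A|}$ and $\sum_{A \in \partial_k \Omega_i} p^{|A|}(1-p)^{N-1-|A|}$, so it suffices to exhibit a weight-preserving bijection between $\partial_j\Omega_i$ and $\partial_k\Omega_i$.

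Since $\mc{G}$ is doubly transitive and $i,j,k$ are distinct, first I would fix $\pi \in \mc{G}$ with $\pi(i)=i$ and $\pi(j)=k$. Because $\pi$ fixes $i$, the map $A\mapsto\pi(A)$ is a bijection of the subsets of $[N]\setminus\{i\}$, and — just as in the proof of Proposition~\ref{proposition:transitive_exit}, now with the two indices there both equal to $i$ — it restricts to a bijection of $\Omega_i$ onto itself: if $B\subseteq A$ with $B\cup\{i\}\in\mc{C}$, then $\pi(B)\cup\{i\}=\pi(B\cup\{i\})\in\mc{C}$ and $\pi(B)\subseteq\pi(A)$, so $\pi(A)\in\Omega_i$, and applying the same reasoning to $\pi^{-1}$ (which also fixes $i$) gives the converse. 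Hence $S\in\Omega_i \iff \pi(S)\in\Omega_i$ for every $S\subseteq[N]\setminus\{i\}$.

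Next I would transport the pivotal condition of Definition~\ref{definition:Omega} along $\pi$. For $A\subseteq[N]\setminus\{i\}$, injectivity of $\pi$ together with $\pi(j)=k$ gives $\pi(A\cup\{j\})=\pi(A)\cup\{k\}$ and $\pi(A\setminus\{j\})=\pi(A)\setminus\{k\}$; combined with the fact that $\pi$ preserves membership in $\Omega_i$, this yields $A\cup\{j\}\in\Omega_i \iff \pi(A)\cup\{k\}\in\Omega_i$ and $A\setminus\{j\}\notin\Omega_i \iff \pi(A)\setminus\{k\}\notin\Omega_i$, i.e. $A\in\partial_j\Omega_i \iff \pi(A)\in\partial_k\Omega_i$. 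Since $|\pi(A)|=|A|$, the induced bijection $\partial_j\Omega_i\to\partial_k\Omega_i$ is weight-preserving, and substituting into the two sums furnished by Proposition~\ref{proposition:scalar_exit_properties}(b) completes the argument.

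There is no serious obstacle here; the one point demanding care is precisely the step where double transitivity, rather than mere transitivity, is invoked. A transitive permutation sending $j$ to $k$ need not fix $i$, and then $\pi(A)$ would be a subset of $[N]\setminus\{\pi(i)\}$ describing a pivotal pattern for the wrong bit, so the reduction would break down. It is also worth recording that all the derivatives above are the one-sided values at $\ul{p}=(p,\dots,p)$ already packaged by Proposition~\ref{proposition:scalar_exit_properties}(b), so no question of differentiability on the boundary of the cube $[0,1]^N$ enters the argument.
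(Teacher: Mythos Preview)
Your proposal is correct and follows essentially the same approach as the paper: pick $\pi\in\mc{G}$ with $\pi(i)=i$, $\pi(j)=k$, use it to set up a weight-preserving bijection between $\partial_j\Omega_i$ and $\partial_k\Omega_i$, and conclude via Proposition~\ref{proposition:scalar_exit_properties}(b). The paper's version splits into the two cases $A\in\Omega_i$ versus $A\notin\Omega_i$, whereas you handle both at once by first recording that $\pi$ preserves membership in $\Omega_i$; this is a cosmetic difference only.
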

\begin{IEEEproof}
Since $\mc{G}$ is doubly transitive, there exists a permutation $\pi \in \mc{G}$ such that $\pi(i)=i$ and $\pi(j)=k$.
Suppose $A \in \partial_j \Omega_i$.
Then, by definition, either 1) $A \in \Omega_i$ and $A \backslash \{j\} \notin \Omega_i$ or 2) $A \cup \{j\} \in \Omega_i$ and $A \notin \Omega_i$.
In either case, we claim that $\pi(A) \in \partial_k \Omega_i$.
We prove this for the first case.
The proof for the second case can be obtained verbatim by replacing $A$ with $A \cup \{j\}$.

Suppose $A \in \Omega_i$ and $A \backslash \{j\} \notin \Omega_i$.
Since $\pi \in \mc{G}$ and $\pi(i)=i$, $\pi(A) \in \Omega_i$.
Also, $\pi(A \backslash \{j\}) \notin \Omega_i$; otherwise, $A \backslash\{j\}=\pi^{-1}(\pi(A \backslash \{ j \})) \in \Omega_i$ gives a contradiction.
Finally, $\pi(A \backslash \{j\}) = \pi(A) \backslash \{k\}$ implies that $\pi(A) \in \partial_k \Omega_i$.
Similarly, one finds that $A \in \partial_k \Omega_i$ implies $\pi^{-1}(A) \in \partial_j \Omega_i$.

Since Proposition \ref{proposition:scalar_exit_properties}(b) implies that $\tfrac{\partial h_i}{\partial p_j}|_{\ul{p}=(p,\dots,p)}$ only depends on the weights of elements in $\partial_j \Omega_i$ and $|A| = |\pi(A)|$, we obtain the desired result.
\end{IEEEproof}

\subsection{Capacity-Achieving Codes}
\label{subsection:capacity}

\begin{figure}[!t]
	\centering
	\setlength\tikzheight{5.5cm}
	\setlength\tikzwidth{7cm} 
%
%
\begin{tikzpicture}

\begin{axis}[%
scale only axis,
width=\tikzwidth,
height=\tikzheight,
xmin=0, xmax=1,
ymin=0, ymax=1,
xlabel={Erasure Probability},
ylabel={Average EXIT Function $h$},
xtick={0,0.25,0.5,0.75,1},
ytick={0,0.25,0.5,0.75,1},
xmajorgrids,
ymajorgrids,
legend pos=south east,
legend entries={$N=2^3$,$N=2^5$,$N=2^7$,$N=2^9$},
legend style={nodes=right}]
axis on top]

\addplot [
color=black,
solid,
line width=1.25pt,
]
coordinates{
 (0,0)(0.01,1e-05)(0.02,7e-05)(0.03,0.00022)(0.04,0.000415)(0.05,0.000895)(0.06,0.00141)(0.07,0.00229)(0.08,0.00356)(0.09,0.00479)(0.1,0.006575)(0.11,0.009055)(0.12,0.011475)(0.13,0.015265)(0.14,0.018335)(0.15,0.022405)(0.16,0.027035)(0.17,0.032355)(0.18,0.037055)(0.19,0.04472)(0.2,0.050475)(0.21,0.058145)(0.22,0.06635)(0.23,0.07406)(0.24,0.083545)(0.25,0.093085)(0.26,0.103875)(0.27,0.11443)(0.28,0.127385)(0.29,0.13941)(0.3,0.1531)(0.31,0.1643)(0.32,0.17898)(0.33,0.191685)(0.34,0.209565)(0.35,0.22444)(0.36,0.240355)(0.37,0.25538)(0.38,0.274675)(0.39,0.290965)(0.4,0.30721)(0.41,0.32741)(0.42,0.346235)(0.43,0.366955)(0.44,0.382625)(0.45,0.40333)(0.46,0.420825)(0.47,0.44083)(0.48,0.46153)(0.49,0.480005)(0.5,0.501335)(0.51,0.518265)(0.52,0.53818)(0.53,0.557755)(0.54,0.577755)(0.55,0.596545)(0.56,0.61576)(0.57,0.63548)(0.58,0.65362)(0.59,0.672805)(0.6,0.6892)(0.61,0.70715)(0.62,0.72629)(0.63,0.742375)(0.64,0.75863)(0.65,0.775575)(0.66,0.791245)(0.67,0.806415)(0.68,0.820405)(0.69,0.834655)(0.7,0.84767)(0.71,0.85982)(0.72,0.871845)(0.73,0.883675)(0.74,0.895455)(0.75,0.906055)(0.76,0.916075)(0.77,0.925055)(0.78,0.93421)(0.79,0.94192)(0.8,0.94957)(0.81,0.95568)(0.82,0.962365)(0.83,0.968405)(0.84,0.972825)(0.85,0.978035)(0.86,0.982435)(0.87,0.98551)(0.88,0.98816)(0.89,0.991105)(0.9,0.993215)(0.91,0.995125)(0.92,0.996405)(0.93,0.99763)(0.94,0.99862)(0.95,0.99904)(0.96,0.99958)(0.97,0.999855)(0.98,0.999925)(0.99,0.99999)(1,1) 
};

\addplot [
color=blue,
solid,
line width=1.25pt,
]
coordinates{
(0,0)(0.1,0)(0.11,2e-05)(0.12,2e-05)(0.13,4e-05)(0.14,0.00012)(0.15,0.0004)(0.16,0.00038)(0.17,0.00072)(0.18,0.00094)(0.19,0.00148)(0.2,0.00184)(0.21,0.00242)(0.22,0.00366)(0.23,0.00524)(0.24,0.00672)(0.25,0.00938)(0.26,0.01146)(0.27,0.01534)(0.28,0.01964)(0.29,0.02462)(0.3,0.0313)(0.31,0.03988)(0.32,0.04772)(0.33,0.05492)(0.34,0.0706)(0.35,0.08308)(0.36,0.09892)(0.37,0.11364)(0.38,0.13566)(0.39,0.15612)(0.4,0.18202)(0.41,0.20542)(0.42,0.23124)(0.43,0.2612)(0.44,0.29242)(0.45,0.32166)(0.46,0.35662)(0.47,0.3916)(0.48,0.42728)(0.49,0.46054)(0.5,0.4979)(0.51,0.53438)(0.52,0.57176)(0.53,0.60832)(0.54,0.6396)(0.55,0.67108)(0.56,0.70504)(0.57,0.73844)(0.58,0.76682)(0.59,0.79678)(0.6,0.8208)(0.61,0.84448)(0.62,0.8628)(0.63,0.88374)(0.64,0.90152)(0.65,0.91588)(0.66,0.92902)(0.67,0.9404)(0.68,0.95138)(0.69,0.96136)(0.7,0.96862)(0.71,0.97486)(0.72,0.97936)(0.73,0.984)(0.74,0.98724)(0.75,0.99062)(0.76,0.9925)(0.77,0.99498)(0.78,0.9962)(0.79,0.99718)(0.8,0.99836)(0.81,0.99836)(0.82,0.9993)(0.83,0.99934)(0.84,0.9995)(0.85,0.99972)(0.86,0.99974)(0.87,0.99994)(0.88,0.99992)(0.89,0.99996)(0.9,1)(0.91,0.99998)(0.92,1)(1,1)
};

\addplot [
color=red,
solid,
line width=1.25pt,
]
coordinates{
(0,0)(0.24,0)(0.25,2e-05)(0.26,0)(0.27,8e-05)(0.28,8e-05)(0.29,0.00012)(0.3,0.00028)(0.31,0.00032)(0.32,0.00038)(0.33,0.0013)(0.34,0.0019)(0.35,0.0024)(0.36,0.0043)(0.37,0.00628)(0.38,0.0111)(0.39,0.01728)(0.4,0.02716)(0.41,0.03968)(0.42,0.05848)(0.43,0.08578)(0.44,0.12144)(0.45,0.16572)(0.46,0.21808)(0.47,0.27896)(0.48,0.34492)(0.49,0.42268)(0.5,0.50242)(0.51,0.57696)(0.52,0.65078)(0.53,0.72416)(0.54,0.78204)(0.55,0.83668)(0.56,0.87948)(0.57,0.91572)(0.58,0.94088)(0.59,0.96014)(0.6,0.97456)(0.61,0.98334)(0.62,0.98892)(0.63,0.9934)(0.64,0.99618)(0.65,0.99722)(0.66,0.99862)(0.67,0.9991)(0.68,0.99948)(0.69,0.99968)(0.7,0.99964)(0.71,0.9999)(0.72,0.99992)(0.73,0.99996)(0.74,0.99998)(0.75,1)(1,1)
};

\addplot [
color=olive!50!black,
solid,
line width=1.25pt,
]
coordinates{
(0,0)(0.41,0)(0.42,0.00025)(0.43,0.00175)(0.44,0.00275)(0.45,0.01525)(0.46,0.03725)(0.47,0.08975)(0.48,0.179)(0.49,0.33925)(0.5,0.49925)(0.51,0.653)(0.52,0.80025)(0.53,0.90925)(0.54,0.9595)(0.55,0.98225)(0.56,0.9955)(0.57,0.999)(0.58,0.99975)(0.59,1)(1,1)
};

\end{axis}
\end{tikzpicture}

	\caption{The average EXIT function of the rate-$1/2$ Reed-Muller code with blocklength $N$.}
	\label{figure:reed_muller_exit_simulation}
\end{figure}
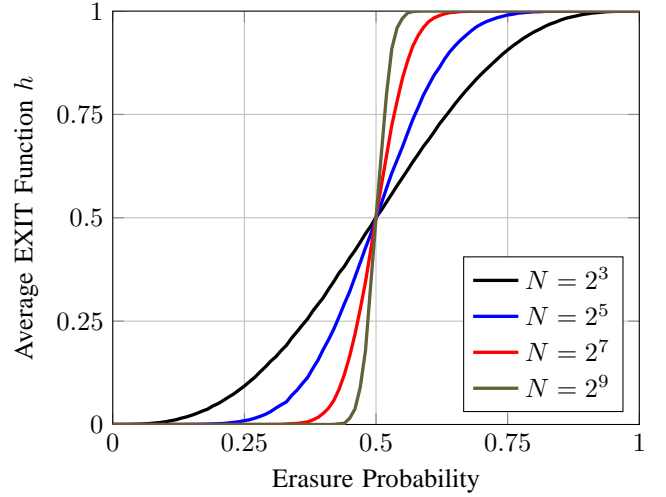

\begin{definition}
Suppose $\{ \mc{C}_n \}$ is a sequence of codes with rates $\{r_n\}$ where $r_n \to r$ for $r\in (0,1)$.
\begin{enumerate}[a)]
\item $\{ \mc{C}_n \}$ is said to be capacity achieving on the BEC under bit-MAP decoding, if for any $p \in [0,1-r)$, the average bit-erasure probabilities satisfy
  \begin{align*}
    \lim_{n \to \infty} P_b^{(n)}(p) = 0 .
  \end{align*}
\item $\{ \mc{C}_n \}$ is said to be capacity achieving on the BEC under block-MAP decoding, if for any $p \in [0,1-r)$, the block-erasure probabilities satisfy
  \begin{align*}
    \lim_{n \to \infty} P_B^{(n)}(p) = 0 .
  \end{align*}
\end{enumerate}
\end{definition}
Note that in the definition above, we do not impose any constraints on the blocklength of the code $\mc{C}_n$.

The following proposition encapsulates our approach in showing capacity achievability.
It naturally bridges capacity-achieving codes, average EXIT functions, and the sharp transition framework presented in the next section that allows us to show that the transition width\footnote{Defined as the interval length where the function goes from $\epsilon$ to $1-\epsilon$.} of certain functions goes to $0$.
The average EXIT functions of some rate-$1/2$ Reed-Muller codes are shown in Figure~\ref{figure:reed_muller_exit_simulation}.
Observe that as the blocklength increases, the transition width of the average EXIT function decreases.
According to the following proposition, if this width converges to $0$, then Reed-Muller codes achieve capacity on the BEC under bit-MAP decoding. 

\begin{proposition}
\label{proposition:capacity_equivalence}
Let $\{\mc{C}_n\}$ be a sequence of codes with rates $\{r_n\}$ where $r_n \to r$ for $r\in (0,1)$.
Then, the following statements are equivalent.
\begin{enumerate}[S1:]
\item $\{ \mc{C}_n \}$ is capacity achieving on the BEC under bit-MAP decoding.
\item The sequence of average EXIT functions satisfies
  \begin{align*}
    \lim_{n \to \infty} h^{(n)} (p) =
    \begin{cases}
      0 & \text{if $0 \leq p<1-r$}, \\
      1 & \text{if $1-r<p \leq 1$}. 
    \end{cases}
  \end{align*}
\item For any $0< \epsilon \leq 1/2$,
  \begin{align*}
    \lim_{n \to \infty} \left( p_{1-\epsilon}^{(n)} - p_{\epsilon}^{(n)} \right) = 0 ,
  \end{align*}
  where $p_t^{(n)}$ is the functional inverse of $h^{(n)}$ given by \eqref{equation:h_inverse}.
\end{enumerate}
\end{proposition}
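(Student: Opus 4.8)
The plan is to establish the cycle of implications S1 $\Rightarrow$ S2 $\Rightarrow$ S3 $\Rightarrow$ S1, using throughout the relation $P_b^{(n)}(p) = p\,h^{(n)}(p)$ from \eqref{equation:biterasure_exit_relation}, the area theorem (Proposition \ref{proposition:scalar_exit_properties}(c)), which pins $\int_0^1 h^{(n)}(p)\,\diff{p} = r_n$, and the fact that each $h^{(n)}$ is a continuous, strictly increasing bijection of $[0,1]$ onto $[0,1]$, so that the inverse $p_t^{(n)}$ in \eqref{equation:h_inverse} is well-defined and also continuous and strictly increasing in $t$.

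For S1 $\Rightarrow$ S2: on $[0,1-r)$, pick any $p$ in that range; since $p h^{(n)}(p) = P_b^{(n)}(p) \to 0$ and $p$ is a fixed positive constant (the case $p=0$ is trivial since $h^{(n)}(0)=0$), we get $h^{(n)}(p) \to 0$. For the range $(1-r,1]$ I would argue by contradiction using the area theorem: since $h^{(n)}$ is nondecreasing and bounded by $1$, if $\limsup_n h^{(n)}(q) < 1$ for some $q \in (1-r,1]$, then $h^{(n)}$ is bounded away from $1$ on all of $[0,q]$ along a subsequence, while $h^{(n)} \le 1$ on $[q,1]$; combined with $h^{(n)} \to 0$ pointwise on $[0,1-r)$ and the dominated convergence theorem, $\int_0^1 h^{(n)}\,\diff{p}$ would be bounded above by something strictly less than $1 - (1-r) = r$ along that subsequence, contradicting $r_n \to r$. (One must be slightly careful to pick an intermediate point strictly between $1-r$ and $q$ and use monotonicity; this is the one place requiring attention.) Hence $h^{(n)}(q) \to 1$.

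For S2 $\Rightarrow$ S3: fix $0 < \epsilon \le 1/2$. By definition $h^{(n)}(p_\epsilon^{(n)}) = \epsilon$ and $h^{(n)}(p_{1-\epsilon}^{(n)}) = 1-\epsilon$, and monotonicity gives $p_\epsilon^{(n)} \le p_{1-\epsilon}^{(n)}$. I claim $p_\epsilon^{(n)} \to 1-r$ and $p_{1-\epsilon}^{(n)} \to 1-r$, which gives the conclusion. Indeed, if along a subsequence $p_\epsilon^{(n)} \le 1-r-\delta$ for some $\delta>0$, then since $1-r-\delta/2$ lies in $[0,1-r)$ we have $h^{(n)}(1-r-\delta/2) \to 0$, but monotonicity forces $h^{(n)}(1-r-\delta/2) \ge h^{(n)}(p_\epsilon^{(n)}) = \epsilon > 0$, a contradiction; symmetrically $p_\epsilon^{(n)}$ cannot exceed $1-r+\delta$ infinitely often by using S2 on $(1-r,1]$, so $p_\epsilon^{(n)} \to 1-r$, and the identical argument handles $p_{1-\epsilon}^{(n)}$. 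Subtracting yields S3.

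For S3 $\Rightarrow$ S1: fix $p \in [0,1-r)$; the case $p = 0$ is immediate. For $p > 0$, choose $\epsilon$ small enough (depending on how we want to proceed) — actually the cleanest route is to show S3 forces $p_\epsilon^{(n)} \to 1-r$ for every fixed $\epsilon$, then invoke $p < 1-r$. Concretely, from S3 together with the area theorem one shows $p_\epsilon^{(n)} \to 1-r$: the area under $h^{(n)}$ is at least $(1-\epsilon)(1 - p_{1-\epsilon}^{(n)})$ (the function is $\ge 1-\epsilon$ on $[p_{1-\epsilon}^{(n)},1]$) and at most $\epsilon \cdot p_\epsilon^{(n)} + 1\cdot(p_{1-\epsilon}^{(n)} - p_\epsilon^{(n)}) + 1\cdot(1 - p_{1-\epsilon}^{(n)})$; since the middle gap $p_{1-\epsilon}^{(n)} - p_\epsilon^{(n)} \to 0$ and the area equals $r_n \to r$, letting $\epsilon \to 0$ after $n \to \infty$ pins the common limit of $p_\epsilon^{(n)}, p_{1-\epsilon}^{(n)}$ to $1-r$. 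Then for fixed $p < 1-r$, eventually $p < p_\epsilon^{(n)}$, so $h^{(n)}(p) < \epsilon$ by monotonicity; hence $\limsup_n h^{(n)}(p) \le \epsilon$, and since $\epsilon$ was arbitrary $h^{(n)}(p) \to 0$, so $P_b^{(n)}(p) = p\,h^{(n)}(p) \to 0$. The main obstacle is the careful bookkeeping in the area-theorem sandwich arguments — making sure the quantifier order ($n \to \infty$ first, then $\epsilon \to 0$) is handled so that S3 (a statement for each fixed $\epsilon$) correctly yields the limit $1-r$; everything else is routine monotonicity and the pointwise/dominated convergence bookkeeping.
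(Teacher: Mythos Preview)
Your proposal is correct and uses essentially the same ingredients as the paper: the identity $P_b^{(n)}(p)=p\,h^{(n)}(p)$, monotonicity of $h^{(n)}$, and the area theorem to sandwich $p_\epsilon^{(n)}$ and $p_{1-\epsilon}^{(n)}$ around $1-r$. The only difference is organizational: the paper proves $S1\Longleftrightarrow S2$, $S2\Rightarrow S3$, and $S3\Rightarrow S2$, whereas you run the cycle $S1\Rightarrow S2\Rightarrow S3\Rightarrow S1$; in particular, your $S2\Rightarrow S3$ establishes the stronger fact $p_\epsilon^{(n)},p_{1-\epsilon}^{(n)}\to 1-r$ directly from the pointwise limits (without the area theorem), while the paper derives that same fact from $S3$ via the area-theorem sandwich, and your $S3\Rightarrow S1$ reproduces the paper's $S3\Rightarrow S2$ sandwich argument before reading off $S1$.
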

\begin{IEEEproof}
See Appendix \ref{appendix:proof_capacity_equivalence}.

The equivalence between the first two statements is due to the close relationship between the bit erasure probability and the average EXIT function in \eqref{equation:biterasure_exit_relation}, while the equivalence between the last two statements is a consequence of the area theorem in Proposition \ref{proposition:scalar_exit_properties}(c).
\end{IEEEproof}

While the above result appears deceptively simple, our approach is successful largely because the transition point of the limiting EXIT function is known a priori due to the area theorem.
Even though the sharp transition framework presented in the next section is widely applicable in theoretical computer science and allows one to deduce that the transition width of certain functions goes to $0$, establishing the existence of a threshold and determining its precise location if it exists can be notoriously difficult \cite{Achlioptas-nature05,Coja-Oghlan-stoc14,Ding-arxiv14}.


\section{Sharp Thresholds for Monotone Boolean Functions via Isoperimetric Inequalities}
\label{section:sharp_threshold}
As seen in Proposition \ref{proposition:capacity_equivalence}, the crucial step in proving the achievability of capacity for a sequence of codes is to show that the average EXIT function transitions sharply from $0$ to $1$.
From the explicit evaluation of $h_i$ in Proposition \ref{proposition:scalar_exit_properties}(a), it is clear that the set $\Omega_i$ defines the behavior of $h_i$.
Indeed, these sets play a crucial role in our analysis.

In this section, we treat the sets $\Omega_i$ and $\partial_j \Omega_i$ from Definition~\ref{definition:Omega} as a set of sequences in $\{0,1\}^{N-1}$, since index $i$ is not present in any of their elements.
This occurs because $h_i (\ul{p})$ is not a function of $p_i$.
To make this notion precise, we associate $A \subseteq [N] \backslash \{i\}$ with $\Phi_i(A) \in \{0,1\}^{N-1}$, where bit $\ell$ of $\Phi_i(A)$ is given by
\begin{align*}
  [\Phi_i(A)]_{\ell} \triangleq
  \begin{cases}
    \mathbbm{1}_A (\ell) & \text{if $\ell < i$}, \\
    \mathbbm{1}_A (\ell+1) & \text{if $\ell \geq  i$} .
  \end{cases}
\end{align*}
Now, define
\begin{align}
  \label{equation:Omega_prime}
  \Omega'_i &\triangleq \{ \Phi_i(A) \in \{0,1\}^{N-1} \mid A \in \Omega_i \}, \\
  \partial_j \Omega'_i &\triangleq \{ \Phi_i(A) \in \{0,1\}^{N-1} \mid A \in \partial_j \Omega_i \} . \nonumber
\end{align}
Whenever we treat $\Omega_i$ and $\partial_j \Omega_i$ as sequences of length $N-1$, we refer to them as $\Omega'_i$ and $\partial_j \Omega'_i$ to avoid confusion.

Consider the space $\{0,1\}^{M}$ with a measure $\mu_p$ such that
\begin{align*}
  \mu_p(\Omega) &= \sum_{\ul{x} \in \Omega} p^{|\ul{x}|} (1-p)^{M-|\ul{x}|}, \qquad \text{for $\Omega \subseteq \{0,1\}^{M}$},
\end{align*}
where the weight $|\ul{x}|=x_1+\dots+x_{M}$ is the number of $1$'s in $\ul{x}$.
We note that $h_i(p) = \mu_p(\Omega'_i)$ with $M=N-1$.

Recall that for $\ul{x},\ul{y} \in \{0,1\}^{M}$, we write $\ul{x} \leq \ul{y}$ if $x_i \leq y_i$ for all $i \in [M]$.
\begin{definition}
A non-empty proper subset $\Omega \subset \{0,1\}^{M}$ is called \emph{monotone} if $\ul{x} \in \Omega$ and $\ul{x} \leq \ul{y}$, then $\ul{y} \in \Omega$.
\end{definition}

\begin{remark}
If the bit-MAP decoder cannot recover bit $i$ from a received sequence, then it cannot recover bit $i$ from any received sequence formed by adding additional erasures to the original received sequence.
This implies that the set $\Omega'_i$ is monotone.
\end{remark}

Monotone sets appear frequently in the theory of random graphs, satisfiability problems, etc.
For a monotone set $\Omega$, $\mu_p(\Omega)$ is a strictly increasing function of $p$.
Often, the quantity $\mu_p(\Omega)$ exhibits a threshold type behavior, as a function of $p$, where it jumps quickly from $0$ to $1$.
One technique that has been surprisingly effective in showing this behavior is based on deriving inequalities of the form
\begin{align}
  \label{equation:differential_inequality}
  \frac{\diff{\mu_p(\Omega)}}{\diff{p}} \geq w \mu_p(\Omega) (1-\mu_p(\Omega)) .
\end{align}
If $w$ is large, then the derivative of $\mu_p(\Omega)$ will be large when $\mu_p(\Omega)$ is not close to either $0$ or $1$.
In this case, $\mu_p(\Omega)$ must transition from $0$ to $1$ over a narrow range of $p$ values.

One elegant way to obtain such inequalities is based on discrete isoperimetric inequalities~\cite{Kalai-05,Boucheron-2013}.
First, let us define the function $g_{\Omega} \colon \{0,1\}^{M} \to \mathbb{N} \cup \{0\}$, which quantifies the boundary between $\Omega$ and $\Omega^c$,
\begin{align} \label{equation:boundary}
  g_{\Omega}(\ul{x}) \triangleq
  \begin{cases}
    \Big{\lvert}\{ \ul{y} \in \Omega^c \mid d_{\mathrm{H}}(\ul{x},\ul{y}) = 1 \}\Big{\rvert} & \text{if $\ul{x} \in \Omega$}, \\
    0 & \text{if $\ul{x} \notin \Omega$} ,
  \end{cases}
\end{align}
where $d_{\mathrm{H}}$ is the Hamming distance.
Surprisingly, for a monotone set $\Omega$, the derivative $\diff{\mu_p(\Omega)}/\diff{p}$ can be characterized exactly by $g_{\Omega}$ according to the Margulis-Russo Lemma \cite{Margulis-ppi74,Russo-prf82}:
\begin{align*}
  \frac{\diff{\mu_p(\Omega)}}{\diff{p}} = \frac{1}{p} \int g_{\Omega}(\ul{x}) \mu_p(\diff{\ul{x}}) .
\end{align*}

Observe that $\mu_p(\Omega) + \mu_p(\Omega^c) = 1$ for any $0 \leq p \leq 1$.
For a monotone set $\Omega$, as we increase $p$, the probability from $\Omega^c$ flows to $\Omega$.
Intuitively, Margulis-Russo Lemma says that this flow depends only on the boundary between $\Omega$ and $\Omega^c$.
To obtain inequalities of type \eqref{equation:differential_inequality}, one approach is to find a lower bound on $g_{\Omega}$ that holds whenever it is non-zero~\cite{Margulis-ppi74,Russo-prf82}.

These techniques were introduced to coding by Tillich and Z{\'e}mor to analyze the block error rate of linear codes under block-MAP decoding~\cite{Zemor-94,Tillich-cpc00}.
In that case, the minimum non-zero value of $g_{\Omega}$ is proportional to the minimum distance of the code. 
Unfortunately, for the bit-MAP decoding problem we consider, the minimum non-zero $g_{\Omega}$ may be small (e.g., 1) even when the minimum distance of the code is arbitrarily large.
This is discussed further in Section~\ref{section:TillichZemor}.
To circumvent this, we discuss another approach, which requires a different formulation of the Margulis-Russo Lemma.
We begin with a few definitions.

\begin{definition}
Let $\Omega$ be a monotone set and let
\begin{align*}
  \partial_j \Omega \triangleq \left\{ \ul{x} \in \{0,1\}^{M} \mid  \mathbbm{1}_{\Omega}(\ul{x}) \neq \mathbbm{1}_{\Omega}(\ul{x}^{(j)})  \right\} ,
\end{align*}
where $\ul{x}^{(j)}$ is defined by $x^{(j)}_\ell=x_{\ell}$ for $\ell \neq j$ and $x^{(j)}_j=1-x_j$.
Let the \emph{influence of bit $j \in [M]$}
 be defined by
\begin{align*}
  I_{j}^{(p)}(\Omega) \triangleq \mu_p\left( \partial_j \Omega \right)
\end{align*}
and the \emph{total influence}
 be defined by
\begin{align*}
  I^{(p)}(\Omega) \triangleq \sum_{\ell=1}^{M} I^{(p)}_{\ell}(\Omega) .
\end{align*}
\end{definition}

The Margulis-Russo Lemma can also be stated in terms of the total influence.
\begin{theorem}[{\hspace{-0.01cm}\cite[Theorem 9.15]{Boucheron-2013}}]
\label{theorem:margulis_russo}
Let $\Omega$ be a monotone set.
Then,
\begin{align*}
  \frac{\diff{\mu_p(\Omega)}}{\diff{p}} = I^{(p)}(\Omega) .
\end{align*}
\end{theorem}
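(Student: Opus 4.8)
The plan is to prove the identity by differentiating the polynomial $\mu_p(\Omega) = \sum_{\ul{x} \in \Omega} p^{|\ul{x}|} (1-p)^{M-|\ul{x}|}$ directly, exploiting the product structure of $\mu_p$. Viewing $\mu_p$ as the $M$-fold product of the Bernoulli$(p)$ marginal on each coordinate, the product rule shows that $\frac{d\mu_p(\Omega)}{dp}$ decomposes into a sum over the coordinates $j \in [M]$, where the $j$-th term is obtained by differentiating only the $j$-th marginal while holding the others fixed. The goal is then to identify the $j$-th term with the influence $I_j^{(p)}(\Omega)$.

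First I would fix $j$ and regroup the sum over $\ul{x} \in \{0,1\}^{M}$ according to the fiber $\ul{x}_{\sim j}$, pairing each $\ul{x}$ with its flip $\ul{x}^{(j)}$. Using that $\frac{d}{dp}\bigl[p^{x_j}(1-p)^{1-x_j}\bigr]$ equals $+1$ when $x_j = 1$ and $-1$ when $x_j = 0$, the $j$-th term of the derivative becomes $\sum_{\ul{x}_{\sim j}} \bigl( \mathbbm{1}_{\Omega}(\ul{x}_{\sim j},1) - \mathbbm{1}_{\Omega}(\ul{x}_{\sim j},0) \bigr)\, \nu_p(\ul{x}_{\sim j})$, where $\nu_p$ denotes the $(M-1)$-fold product measure on the remaining coordinates.

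The key observation is that monotonicity of $\Omega$ forces $\mathbbm{1}_{\Omega}(\ul{x}_{\sim j},1) - \mathbbm{1}_{\Omega}(\ul{x}_{\sim j},0) \in \{0,1\}$: it cannot be $-1$ since $(\ul{x}_{\sim j},0) \leq (\ul{x}_{\sim j},1)$, and it equals $1$ precisely when bit $j$ is pivotal on that fiber, i.e., when both $(\ul{x}_{\sim j},0)$ and $(\ul{x}_{\sim j},1)$ lie in $\partial_j \Omega$. For such a fiber one has $\nu_p(\ul{x}_{\sim j}) = \mu_p\bigl(\{(\ul{x}_{\sim j},0),(\ul{x}_{\sim j},1)\}\bigr)$ because $p + (1-p) = 1$, so summing over the pivotal fibers recovers exactly $\mu_p(\partial_j \Omega) = I_j^{(p)}(\Omega)$. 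Summing over $j$ then gives $\frac{d\mu_p(\Omega)}{dp} = \sum_{j=1}^{M} I_j^{(p)}(\Omega) = I^{(p)}(\Omega)$. As an alternative route, one could instead start from the $g_\Omega$-form of the Margulis--Russo lemma already recalled above, check that for a monotone set every distance-one neighbour lying in $\Omega^c$ is obtained by flipping a $1$ to a $0$, and then verify via the fiber bijection $\ul{x} \mapsto \ul{x}^{(j)}$ that $\tfrac{1}{p}\int g_\Omega\, d\mu_p = \sum_{j} \mu_p(\partial_j \Omega)$. There is no genuine obstacle in either approach; the only point requiring care is the bookkeeping of fibers together with the fact that $\partial_j \Omega$, as defined, contains \emph{both} members of each pivotal pair, so that the identity $p + (1-p) = 1$ is precisely what makes the $(M-1)$-dimensional mass reappear.
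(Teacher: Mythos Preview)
Your proposal is correct. The paper does not actually give its own proof of this theorem; it is cited from \cite[Theorem~9.15]{Boucheron-2013}. The Remark immediately following the statement does, however, sketch essentially your argument in the special case $\Omega=\Omega'_i$: there $\mu_p(\Omega'_i)=h_i(p)$ is the restriction to the diagonal of the multilinear function $h_i(\ul{p})$, so the chain rule gives $dh_i/dp=\sum_{j}\partial h_i/\partial p_j\big|_{\ul{p}=(p,\dots,p)}$, and Proposition~\ref{proposition:vector_exit_properties_2}(b) --- which is exactly your fiber regrouping together with the monotonicity observation --- identifies each summand with the corresponding influence. Your write-up simply carries this same computation out for a general monotone $\Omega$ without the coding-theoretic wrapper.
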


\begin{remark}
Note that we have already seen Theorem \ref{theorem:margulis_russo} in the context of EXIT functions.
When $M=N-1$, it is easy to see from Proposition \ref{proposition:scalar_exit_properties} that
\begin{align*}
  h_i(p) &= \mu_p(\Omega'_i),  & I_j^{(p)}(\Omega'_i) = \pderi{h_i(\ul{p})}{p_{j'}} \Big{\lvert}_{\ul{p}=(p,\dots,p)} ,
\end{align*}
where\vspace{-0.2cm}
\begin{align}
\label{equation:jprime}
  j' =
  \begin{cases}
    j & \text{if $j < i$}, \\
    j+1 & \text{if $j \geq i$}.
  \end{cases}
\end{align}
Therefore, Theorem \ref{theorem:margulis_russo} is equivalent to
\begin{align*}
  \deri{h_i(p)}{p} = \sum_{j \in [N]\backslash\{i\}} \pderi{h_i(\ul{p})}{p_j} \Big{\lvert}_{\ul{p}=(p,\dots,p)} ,
\end{align*}
a straightforward result from vector calculus since $h_i$ does not depend on $p_i$.
\end{remark}

The advantage of using influences over $g_{\Omega}$ is that with ``sufficient symmetry'' in $\Omega$, it is possible to show threshold phenomenon without any other knowledge about $\Omega$.
The following theorem illustrates the power of symmetry.
Our proof hinges on this result.
\begin{theorem}
\label{theorem:width_bound}
Let $\Omega$ be a monotone set and
suppose that, for all $0 \leq p \leq 1$, the influences of all bits are equal $I_1^{(p)}(\Omega)=\dots=I_M^{(p)}(\Omega)$.
\begin{enumerate}[a)]
\item Then, there exists a universal constant $C \geq 1$, which is independent of $p$, $\Omega$, and $M$, such that
  \begin{align*}
    \deri{\mu_p(\Omega)}{p} \geq C (\log M)\mu_p(\Omega)(1-\mu_p(\Omega)) ,
  \end{align*}
for all $0<p<1$.
\item
Consequently, for any $0< \epsilon \leq 1/2$,
\begin{align*}
  p_{1-\epsilon} - p_{\epsilon} \leq \frac{2}{C} \frac{\log \frac{1-\epsilon}{\epsilon}}{ \log M} ,
\end{align*}
where $p_t = \inf \{ p \in [0,1] \mid \mu_p(\Omega) \geq t\}$ is well-defined because $\mu_p (\Omega)$ is strictly increasing in $p$ with $\mu_{0}(\Omega)=0$ and $\mu_{1}(\Omega)=1$.
\end{enumerate}
\end{theorem}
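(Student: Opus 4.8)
The plan is to deduce part~a) from a $p$-biased influence inequality and then obtain part~b) by integrating the resulting logarithmic differential inequality.

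For part~a), I would first apply the Margulis--Russo Lemma (Theorem~\ref{theorem:margulis_russo}) together with the equal-influence hypothesis to write $\deri{\mu_p(\Omega)}{p} = I^{(p)}(\Omega) = \sum_{\ell=1}^M I_\ell^{(p)}(\Omega) = M\,I_1^{(p)}(\Omega)$; in particular the maximal influence $\delta_p := \max_\ell I_\ell^{(p)}(\Omega)$ equals $I^{(p)}(\Omega)/M$. The engine is the isoperimetric influence inequality for monotone sets (Talagrand; see also \cite{Talagrand-gafa93,Friedgut-procams96} and the textbook treatments \cite{Boucheron-2013,Kalai-05}): there is a universal constant $c_0>0$ such that, for every monotone $\Omega$ and every $p\in(0,1)$,
\[
  I^{(p)}(\Omega)\ \ge\ c_0\,\mu_p(\Omega)\bigl(1-\mu_p(\Omega)\bigr)\,\log\frac{1}{\delta_p}.
\]
(The standard statements carry an additional factor $1/\!\bigl(p(1-p)\log(2/(p(1-p)))\bigr)$, which is $\ge 1$ for all $p\in(0,1)$, so the clean $p$-independent form above follows.) Substituting $\delta_p = I^{(p)}(\Omega)/M$ makes this self-referential, $I\ge c_0\,\mu_p(1-\mu_p)\log(M/I)$ with $I:=I^{(p)}(\Omega)$, and I would close it with a one-line dichotomy (taking $M\ge 2$, since $M=1$ makes a) trivial as $\log 1=0$): if $I\ge\sqrt{M}$, then $\mu_p(1-\mu_p)\le\tfrac14$ and $\sqrt{M}\ge\log M$ give $I\ge\log M\ge 4(\log M)\mu_p(1-\mu_p)$; if instead $I<\sqrt{M}$, then $\log(M/I)>\tfrac12\log M$ and the displayed inequality gives $I\ge\tfrac{c_0}{2}(\log M)\mu_p(1-\mu_p)$. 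Either way $\deri{\mu_p(\Omega)}{p}\ge C(\log M)\,\mu_p(\Omega)(1-\mu_p(\Omega))$ on $(0,1)$ for a suitable universal constant $C$.

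For part~b), note first that $\mu_p(\Omega)$ is a continuous, strictly increasing polynomial in $p$ with $\mu_0(\Omega)=0$ and $\mu_1(\Omega)=1$ — strict monotonicity because $I^{(p)}(\Omega)>0$ on $(0,1)$ for a non-empty proper $\Omega$ — so $p_\epsilon$ and $p_{1-\epsilon}$ lie in $(0,1)$ with $\mu_{p_\epsilon}(\Omega)=\epsilon$ and $\mu_{p_{1-\epsilon}}(\Omega)=1-\epsilon$. Rewriting part~a) as $\deri{}{p}\log\frac{\mu_p(\Omega)}{1-\mu_p(\Omega)}=\frac{\mu_p'(\Omega)}{\mu_p(\Omega)(1-\mu_p(\Omega))}\ge C\log M$ and integrating over $[p_\epsilon,p_{1-\epsilon}]\subset(0,1)$ (equivalently, substituting $u=\mu_p(\Omega)$) yields $2\log\frac{1-\epsilon}{\epsilon}\ge C(\log M)(p_{1-\epsilon}-p_\epsilon)$, which rearranges to the claimed width bound $p_{1-\epsilon}-p_\epsilon\le\frac{2}{C}\,\frac{\log\frac{1-\epsilon}{\epsilon}}{\log M}$ (trivial when $\epsilon=\tfrac12$).

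The only step with real content is the influence inequality quoted above; everything else is the Margulis--Russo identity plus elementary calculus. The point to watch is that generic $p$-biased refinements of the Kahn--Kalai--Linial bound degrade by a $\log(1/p)$ factor as $p\to 0$; the equal-influence hypothesis (equivalently, $\max_\ell I_\ell^{(p)}(\Omega)=I^{(p)}(\Omega)/M$) is precisely what excludes the near-dictatorship configurations responsible for that loss and lets one use the clean, $p$-independent form. This is exactly the role played by symmetry in the sharp-threshold theorem of Friedgut and Kalai.
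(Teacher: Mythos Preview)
Your proposal is correct and follows the same route as the paper, which does not give a self-contained argument but simply cites Talagrand~\cite{Talagrand-ap94}, Friedgut--Kalai~\cite{Friedgut-procams96}, Rossignol~\cite{Rossignol-ap06}, and \cite[Section~9.6, Exercise~9.8]{Boucheron-2013} for part~a), with part~b) following by the same logarithmic-derivative integration you wrote out (this is exactly Lemma~\ref{lemma:exit_integral_generic} with $a=0$, $b=1$). Your dichotomy on $I\gtrless\sqrt{M}$ is a clean way to close the self-referential bound, though note it yields only some universal $C>0$ rather than the claimed $C\ge 1$; the paper points to \cite[Exercise~9.8]{Boucheron-2013} for the sharper constant.
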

\begin{IEEEproof}
See \cite{Talagrand-ap94,Friedgut-procams96,Rossignol-ap06}, \cite[Section 9.6]{Boucheron-2013} for details.

In this form (i.e., by assuming all influences are equal), this result first appeared in \cite{Friedgut-procams96}.
However, this theorem can be seen as an immediate consequence of the earlier result in \cite[Corollary 1.4]{Talagrand-ap94}.
The constant $C$ was later improved in \cite{Rossignol-ap06}.
From the outline in \cite[Exercise 9.8]{Boucheron-2013}, one can verify this theorem for $C=1$.

For the historical context, the study of influences for boolean functions was initiated in a 1987 technical report that led to~\cite{Ben-Or-rand90}.
Shortly after, \cite{Kahn-focs88} applied harmonic analysis to obtain some powerful general theorems about boolean functions.
These results were subsequently generalized in \cite{Talagrand-ap94}.
\end{IEEEproof}

For the sets $\Omega'_i$, such a symmetry between influences is imposed by the doubly transitive property of the permutation group of the code according to Proposition \ref{proposition:2transitive_exit_deri}.

\begin{theorem}
\label{theorem:2transitive_capacity}
Let $\{\mc{C}_n\}$ be a sequence of codes where the blocklengths satisfy $N_n \to \infty$, the rates satisfy $r_n \to r$, and the permutation group $\mc{G}^{(n)}$ (of $\mc{C}_n$) is doubly transitive for each $n$.
If $r\in (0,1)$, then $\{\mc{C}_n\}$ is capacity achieving on the BEC under bit-MAP decoding.
\end{theorem}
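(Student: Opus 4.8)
The plan is to combine the three tools that the preliminaries have set up: the equivalence of capacity-achievability with a vanishing transition width (Proposition~\ref{proposition:capacity_equivalence}, specifically S1~$\Leftrightarrow$~S3), the symmetry of influences forced by double transitivity (Proposition~\ref{proposition:2transitive_exit_deri}), and the isoperimetric width bound (Theorem~\ref{theorem:width_bound}). First I would fix $n$ and work with the code $\mc{C}_n$, its average EXIT function $h^{(n)}$, and the associated monotone set $\Omega'_i \subseteq \{0,1\}^{N_n-1}$ for some bit $i \in [N_n]$. By the remark following Theorem~\ref{theorem:margulis_russo}, $h_i^{(n)}(p) = \mu_p(\Omega'_i)$ and the influence $I_j^{(p)}(\Omega'_i)$ equals the partial derivative $\partial h_i^{(n)}/\partial p_{j'}\big|_{\ul{p}=(p,\dots,p)}$. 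Since $\mc{G}^{(n)}$ is doubly transitive, Proposition~\ref{proposition:2transitive_exit_deri} says these partial derivatives are all equal as $j$ ranges over $[N_n]\setminus\{i\}$, so all $M = N_n-1$ influences of $\Omega'_i$ are equal, for every $p \in [0,1]$. That is exactly the hypothesis of Theorem~\ref{theorem:width_bound}.

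Next I would apply Theorem~\ref{theorem:width_bound}(b) to $\Omega = \Omega'_i$ with $M = N_n - 1$: for any fixed $0 < \epsilon \le 1/2$,
\begin{align*}
  p_{1-\epsilon}^{(i,n)} - p_{\epsilon}^{(i,n)} \leq \frac{2}{C}\,\frac{\log\frac{1-\epsilon}{\epsilon}}{\log(N_n-1)} ,
\end{align*}
where $p_t^{(i,n)}$ is the inverse of $h_i^{(n)}$. Here I need to pass from the per-bit EXIT function $h_i^{(n)}$ to the average $h^{(n)}$. Because $\mc{G}^{(n)}$ is doubly transitive it is in particular transitive, so Proposition~\ref{proposition:transitive_exit} gives $h^{(n)}(p) = h_i^{(n)}(p)$ for all $p$; hence their inverses coincide and the displayed bound is a bound on $p_{1-\epsilon}^{(n)} - p_{\epsilon}^{(n)}$ for the average EXIT function's inverse defined in \eqref{equation:h_inverse}. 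Since $N_n \to \infty$, the right-hand side tends to $0$ as $n \to \infty$, which is precisely statement S3 of Proposition~\ref{proposition:capacity_equivalence}. That proposition's equivalence S3~$\Rightarrow$~S1 then yields that $\{\mc{C}_n\}$ is capacity achieving on the BEC under bit-MAP decoding, completing the argument.

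A few small technical points need care but should be routine. One must check that the codes $\mc{C}_n$ are eligible for the framework: a non-trivial code ($0 < r_n < 1$) with a transitive permutation group is automatically proper with minimum distance at least $2$ (noted right after the definition of double transitivity), so $\Omega'_i$ is genuinely a monotone \emph{proper} non-empty subset and $h_i^{(n)}$ is a strictly increasing polynomial with $h_i^{(n)}(0)=0$, $h_i^{(n)}(1)=1$; this is what makes the inverses $p_t$ well-defined. One also should note that the hypothesis "$r_n \to r$" with $r \in (0,1)$ is exactly what Proposition~\ref{proposition:capacity_equivalence} requires, and that for $N_n$ large enough $\log(N_n-1) > 0$ so the bound is non-vacuous; the finitely many small $n$ are irrelevant to the limit. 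I do not expect a serious obstacle here — the real content has already been packaged into Proposition~\ref{proposition:2transitive_exit_deri} and Theorem~\ref{theorem:width_bound}. If anything, the only place demanding attention is making sure the equal-influence hypothesis of Theorem~\ref{theorem:width_bound} is verified \emph{uniformly in $p$} (it is, since Proposition~\ref{proposition:2transitive_exit_deri} holds for every $0 \le p \le 1$), and correctly matching the index relabeling $j \mapsto j'$ from \eqref{equation:jprime} so that "all $M$ influences equal" really does follow from "all partial derivatives $\partial h_i/\partial p_k$, $k \neq i$, equal."
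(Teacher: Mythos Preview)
Your proposal is correct and follows essentially the same approach as the paper's proof: identify $h^{(n)}=h_i^{(n)}$ via transitivity (Proposition~\ref{proposition:transitive_exit}), use double transitivity (Proposition~\ref{proposition:2transitive_exit_deri}) to equalize all influences of $\Omega'_i$, invoke Theorem~\ref{theorem:width_bound}(b) with $M=N_n-1$ to bound the transition width, and conclude via S3~$\Rightarrow$~S1 of Proposition~\ref{proposition:capacity_equivalence}. The only difference is cosmetic ordering (the paper fixes $h=h_i$ before discussing influences, whereas you do it after), and your extra remarks about properness, $\dmin\ge 2$, and the $j\mapsto j'$ relabeling are sound sanity checks that the paper leaves implicit.
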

\begin{IEEEproof}
Let the average EXIT function of $\mc{C}_n$ be $h^{(n)}$.
The quantities $N$, $\mc{G}$, $h$, $h_i$, $\Omega'_i$, and $p_t$ that appear in this proof are all indexed by $n$; we drop the index to avoid cluttering.
Fix some $i \in [N]$.
Since $\mc{G}$ is transitive, from Proposition \ref{proposition:transitive_exit},
\begin{align*}
  h(p) = h_i(p), \quad \text{for all $p \in [0,1]$.}
\end{align*}
Consider the sets $\Omega'_i$ from Definition \ref{definition:Omega} and \eqref{equation:Omega_prime}, and let $M=N-1$.
Observe that, from Proposition \ref{proposition:scalar_exit_properties},
\begin{align*}
  h_i(p) &= \mu_p(\Omega'_i), &  I_j^{(p)}(\Omega'_i) &= \pderi{h_i(\ul{p})}{p_{j'}} \Big{\lvert}_{\ul{p}=(p,\dots,p)} ,
\end{align*}
where $j'$ is given in \eqref{equation:jprime}.
Since $\mc{G}$ is doubly transitive, from Proposition \ref{proposition:2transitive_exit_deri},
\begin{align*}
  I_j^{(p)}(\Omega'_i) = I_k^{(p)}(\Omega'_i) \quad \text{for all $j,k \in [N-1]$}.
\end{align*}
Using Theorem \ref{theorem:width_bound}, we have
\begin{align*}
  p_{1-\epsilon} - p_{\epsilon} \leq \frac{2}{C} \frac{\log \frac{1-\epsilon}{\epsilon}}{ \log (N-1)} ,
\end{align*}
where $p_t$ is the functional inverse of $h$ from \eqref{equation:h_inverse}.
Since $N \to \infty$ from the hypothesis,
\begin{align*}
  \lim_{n \to \infty} \left( p_{1-\epsilon} - p_{\epsilon} \right) = 0.
\end{align*}
Therefore, from Proposition \ref{proposition:capacity_equivalence}, $\{\mc{C}_n\}$ is capacity achieving on the BEC under bit-MAP decoding.
\end{IEEEproof}

We now focus on the block erasure probability.
Recall from \eqref{equation:erasure_prob_relations} and \eqref{equation:erasure_prob_dmin} that the block erasure probability satisfies the upper bounds
\begin{align*}
  P_B &\leq \frac{N P_b}{\dmin}, & P_B &\leq N P_b.
\end{align*}
Thus, if $P_b\to 0$ with sufficient speed, then $P_B \to 0$ as well.
Using \eqref{equation:differential_inequality}, one can derive the upper bound (see Lemma~\ref{lemma:exit_integral_generic} in Appendix \ref{appendix:proofs_section_inequalities} for a proof)
\begin{align*}
  \mu_{\delta}(\Omega) \leq \exp \left( -w [p_{1/2} - \delta] \right),
\end{align*}
where $p_{1/2} \in [0,1]$ is defined uniquely by $\mu_{p_{1/2}}(\Omega)=1/2$.

For $p<1-r$, the factor of $\log(N-1)$ in Theorem~\ref{theorem:2transitive_capacity} determines the decay rate of $h$ with $N$, and consequently the decay rate of $P_b$.
The following theorem shows that, if $\dmin$ satisfies $\log(\dmin)/\log(N) \to 1$, then this decay rate is also sufficient to show that $P_B \to 0$.

\begin{theorem}
\label{theorem:capacity_blockerror_dmin}
Let $\{\mc{C}_n\}$ be a sequence of codes where the blocklengths satisfy $N_n \to \infty$ and the rates satisfy $r_n \to r$ for $r\in (0,1)$.
Suppose that the average EXIT function of $\mc{C}_n$ also satisfies, for $0<p<1$,
\begin{align*}
  \deri{h^{(n)}(p)}{p} \geq  C \log (N_n) h^{(n)}(p) (1-h^{(n)}(p)),
\end{align*}
where $C>0$ is a constant independent of $p$ and $n$.
If the minimum distances $\{ \dmin^{(n)} \}$ satisfy
\begin{align*}
	\lim_{n \to \infty}  \frac{\log \dmin^{(n)}}{\log N_n} = 1,
\end{align*}
then $\{\mc{C}_n\}$ is capacity achieving on the BEC under block-MAP decoding.
\end{theorem}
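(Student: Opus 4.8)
The goal is to show $\lim_{n\to\infty} P_B^{(n)}(p)=0$ for every $p\in[0,1-r)$. The plan is to bound the block-erasure probability by the bit-erasure probability using \eqref{equation:erasure_prob_dmin} together with \eqref{equation:biterasure_exit_relation},
\begin{align*}
  P_B^{(n)}(p) \le \frac{N_n}{\dmin^{(n)}}\,P_b^{(n)}(p) = \frac{N_n}{\dmin^{(n)}}\,p\,h^{(n)}(p),
\end{align*}
and then to argue that on $[0,1-r)$ the factor $h^{(n)}(p)$ decays like $N_n^{-\Theta(1)}$, which overwhelms the factor $N_n/\dmin^{(n)} = N_n^{o(1)}$ guaranteed by $\log\dmin^{(n)}/\log N_n \to 1$. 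Note that $h^{(n)}$ is the EXIT function of a proper code with $\dmin^{(n)}\ge 2$ (indeed $\dmin^{(n)}\to\infty$), hence continuous and strictly increasing with $h^{(n)}(0)=0$ and $h^{(n)}(1)=1$, so $p_t^{(n)}$ from \eqref{equation:h_inverse} and $p_{1/2}^{(n)}$ are well-defined; moreover the hypothesized inequality is precisely \eqref{equation:differential_inequality} for the function $h^{(n)}$ with $w=C\log N_n$.

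First I would locate the transition point. Integrating $\frac{(h^{(n)})'(p)}{h^{(n)}(p)(1-h^{(n)}(p))}\ge C\log N_n$ over $[p_\epsilon^{(n)},p_{1-\epsilon}^{(n)}]$, exactly as Theorem~\ref{theorem:width_bound}(b) follows from Theorem~\ref{theorem:width_bound}(a), gives $p_{1-\epsilon}^{(n)}-p_\epsilon^{(n)}\le \frac{2}{C\log N_n}\log\frac{1-\epsilon}{\epsilon}$ for every $0<\epsilon\le 1/2$, which tends to $0$ because $N_n\to\infty$. Hence statement S3 of Proposition~\ref{proposition:capacity_equivalence} holds, so S1 and S2 hold as well; in particular $h^{(n)}(p)\to 0$ for every $p\in[0,1-r)$. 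Consequently, for each such $p$ we have $h^{(n)}(p)<1/2$ for all large $n$, i.e. $p_{1/2}^{(n)}>p$; letting $p\uparrow 1-r$ yields $\liminf_{n\to\infty}p_{1/2}^{(n)}\ge 1-r$.

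Next I would convert the differential inequality into an exponential bound. Since $h^{(n)}$ satisfies \eqref{equation:differential_inequality} with $w=C\log N_n$, the estimate behind Lemma~\ref{lemma:exit_integral_generic} (whose proof uses only that inequality) gives, for $0\le \delta\le p_{1/2}^{(n)}$,
\begin{align*}
  h^{(n)}(\delta)\le \exp\bigl(-C\log(N_n)\,(p_{1/2}^{(n)}-\delta)\bigr) = N_n^{-C(p_{1/2}^{(n)}-\delta)}.
\end{align*}
Fix $p\in[0,1-r)$ and set $\delta_0 := \tfrac12(p+1-r)\in(p,1-r)$. By the previous step, $p_{1/2}^{(n)}>\delta_0$ for all large $n$, so $h^{(n)}(p)\le N_n^{-C(\delta_0-p)}$ and therefore $P_b^{(n)}(p)=p\,h^{(n)}(p)\le N_n^{-C(\delta_0-p)}$. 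Since $\log\dmin^{(n)}/\log N_n\to 1$, for any $\eta>0$ we have $\dmin^{(n)}\ge N_n^{1-\eta}$ for all large $n$; taking $\eta := \tfrac12 C(\delta_0-p)>0$ gives
\begin{align*}
  P_B^{(n)}(p)\le \frac{N_n}{\dmin^{(n)}}\,P_b^{(n)}(p)\le \frac{N_n}{N_n^{1-\eta}}\,N_n^{-C(\delta_0-p)} = N_n^{\eta-C(\delta_0-p)} = N_n^{-\frac12 C(\delta_0-p)},
\end{align*}
which tends to $0$ as $n\to\infty$ because $N_n\to\infty$ and the exponent is a fixed negative number. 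As $p\in[0,1-r)$ was arbitrary, $\{\mc{C}_n\}$ is capacity achieving on the BEC under block-MAP decoding.

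The step I expect to be the crux is the second one: establishing that the transition of $h^{(n)}$ occurs at (at least) $1-r$, so that there is a uniform gap $\delta_0-p>0$ between any fixed $p<1-r$ and $p_{1/2}^{(n)}$. This is exactly where the area theorem enters, through Proposition~\ref{proposition:capacity_equivalence}; without it one would only know $h^{(n)}(p)$ is eventually small below some a priori unknown threshold, not that it is $N_n^{-\Theta(1)}$-small on all of $[0,1-r)$. The role of the hypothesis $\log\dmin^{(n)}/\log N_n\to 1$ (rather than merely $\dmin^{(n)}\to\infty$) is precisely to make $N_n/\dmin^{(n)}=N_n^{o(1)}$, which is then dominated by the $N_n^{-C(\delta_0-p)}$ decay produced by the $\log N_n$ factor in the differential inequality.
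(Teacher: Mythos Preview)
Your proof is correct and follows the same overall strategy as the paper: use the differential inequality together with Proposition~\ref{proposition:capacity_equivalence} (the area theorem) to locate the threshold of $h^{(n)}$ at $1-r$, then show $h^{(n)}(p)$ decays fast enough below the threshold to beat the factor $N_n/\dmin^{(n)}$ in \eqref{equation:erasure_prob_dmin}.

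The only organizational difference is in the last step. The paper chooses a moving sequence $\epsilon_n = \dmin^{(n)}/(N_n\log N_n)$, uses the width bound of Lemma~\ref{lemma:exit_integral_generic} to show $p^{(n)}_{\epsilon_n}\to 1-r$, and then concludes $P_B^{(n)}(p)\le \frac{N_n}{\dmin^{(n)}}\epsilon_n = 1/\log N_n$ for large $n$. You instead fix $p$, invoke the exponential bound of the same lemma to get $h^{(n)}(p)\le N_n^{-C(\delta_0-p)}$, and balance this against $N_n/\dmin^{(n)}=N_n^{o(1)}$. Both routes use the two conclusions of Lemma~\ref{lemma:exit_integral_generic}; yours happens to give a stronger quantitative decay ($P_B^{(n)}(p)\le N_n^{-\Theta(1)}$ versus $1/\log N_n$), while the paper's choice of $\epsilon_n$ makes the role of the hypothesis $\log\dmin^{(n)}/\log N_n\to 1$ slightly more visible in the final cancellation. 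Either way, the crux you identify---that the area theorem pins the transition at $1-r$ so that a uniform gap $\delta_0-p>0$ is available---is exactly the point of the paper's argument.
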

\begin{IEEEproof}
  See Appendix \ref{appendix:proof_capacity_blockerror_dmin}.
\end{IEEEproof}

If $\dmin$ does not grow rapidly enough (e.g., sequences of Reed-Muller codes with rates $r_n \to r \in (0,1)$ have $\dmin = O(\sqrt{N}^{1+\delta})$ for any $\delta>0$), then the previous theorem does not apply.
Fortunately, it is possible to exploit symmetries, beyond the double transitivity of the permutation group, to obtain inequalities like \eqref{equation:differential_inequality} that grow asymptotically faster than $\log(N)$~\cite{Bourgain-gafa97}.
In particular, one obtains inequalities of type \eqref{equation:differential_inequality}, with factors of higher order than $\log(N)$, for all $p$ except a neighborhood around $0$ and $1$ that vanishes as $N\to \infty$.
The following theorem shows that this is sufficient to show that $P_B \to 0$ without imposing requirements on $\dmin$.

\begin{theorem}
\label{theorem:capacity_blockerror_wn}
Let $\{\mc{C}_n\}$ be a sequence of codes where the blocklengths satisfy $N_n \to \infty$ and the rates satisfy $r_n \to r$ for $r\in (0,1)$.
Suppose that the average EXIT function of $\mc{C}_n$ also satisfies, for $a_n<p<b_n$,
\begin{align*}
  \deri{h^{(n)}(p)}{p} \geq  w_n \log (N_n) h^{(n)}(p) (1-h^{(n)}(p)),
\end{align*}
where  $w_n \to \infty$, $a_n \to 0$, $b_n \to 1$ and $0 \leq a_n < b_n \leq 1$.
Then, $\{\mc{C}_n\}$ is capacity achieving on the BEC under block-MAP decoding.
\end{theorem}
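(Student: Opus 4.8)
The plan is to combine the EXIT-area theorem, which fixes the location of the transition of $h^{(n)}$, with the hypothesized differential inequality, which forces $h^{(n)}$ to be super-polynomially small (in $N_n$) to the left of the transition, and then to feed this into the crude bound $P_B^{(n)}(p) \le N_n P_b^{(n)}(p) = N_n\, p\, h^{(n)}(p)$ coming from \eqref{equation:erasure_prob_relations} and \eqref{equation:biterasure_exit_relation}. Throughout I write $p_{1/2}^{(n)}$ for the unique point with $h^{(n)}(p_{1/2}^{(n)}) = 1/2$; it exists because $h^{(n)}$ is a continuous, strictly increasing polynomial with $h^{(n)}(0)=0$, $h^{(n)}(1)=1$. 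Note that no transitivity of the permutation group is used here, only the analytic inequality.

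The first step is to show $p_{1/2}^{(n)} \to 1-r$. For the liminf bound I would argue by contradiction: suppose along a subsequence $p_{1/2}^{(n)} \to \alpha$ with $\alpha < 1-r$. Since $b_n \to 1$, eventually $p_{1/2}^{(n)} < b_n$, so the differential inequality holds on $(p_{1/2}^{(n)}, b_n)$; rewriting it as $\frac{d}{dp}\log\frac{h^{(n)}}{1-h^{(n)}} \ge w_n\log N_n$ and integrating from $p_{1/2}^{(n)}$, where the log-odds vanishes, to $p$ gives $1 - h^{(n)}(p) \le N_n^{-w_n(p - p_{1/2}^{(n)})}$ on that interval. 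The area theorem (Proposition~\ref{proposition:scalar_exit_properties}(c)) then yields
\begin{align*}
 r_n \;=\; \int_0^1 h^{(n)}(p)\,\diff{p} \;\ge\; \int_{p_{1/2}^{(n)}}^{b_n} \bigl(1 - N_n^{-w_n(p - p_{1/2}^{(n)})}\bigr)\,\diff{p} \;\ge\; \bigl(b_n - p_{1/2}^{(n)}\bigr) - \frac{1}{w_n\log N_n},
\end{align*}
and letting $n \to \infty$ along the subsequence gives $r \ge 1-\alpha$, contradicting $\alpha < 1-r$. The limsup bound $\limsup_n p_{1/2}^{(n)} \le 1-r$ follows by the mirror-image argument: if $p_{1/2}^{(n)} \to \beta > 1-r$ along a subsequence, the differential inequality on $(a_n, p_{1/2}^{(n)})$ gives $h^{(n)}(p) \le N_n^{-w_n(p_{1/2}^{(n)} - p)}$ there, and splitting $\int_0^1 h^{(n)}$ over $[0,a_n]$, $[a_n, p_{1/2}^{(n)}]$, $[p_{1/2}^{(n)},1]$ and bounding the first piece by $a_n$ and the middle by $\frac{1}{w_n\log N_n}$ produces $r_n \le a_n + \frac{1}{w_n\log N_n} + (1 - p_{1/2}^{(n)})$, hence $r \le 1-\beta$, a contradiction.

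With the transition located, I would next fix $p \in (0, 1-r)$ and set $\eta = \tfrac12\bigl((1-r) - p\bigr) > 0$. Since $a_n \to 0$, $b_n \to 1$, and $p_{1/2}^{(n)} \to 1-r$, for all large $n$ we have $a_n < p < p_{1/2}^{(n)} < b_n$ with $p_{1/2}^{(n)} - p \ge \eta$, so the differential inequality holds on $[p, p_{1/2}^{(n)}]$. Integrating the log-odds form from $p$ to $p_{1/2}^{(n)}$ and using $h^{(n)}(p) \le h^{(n)}(p)/(1-h^{(n)}(p))$ (this is exactly the estimate of Lemma~\ref{lemma:exit_integral_generic}, applied with $w = w_n\log N_n$ and $\delta = p$, whose derivation uses the inequality only on $[\delta, p_{1/2}]$) gives
\begin{align*}
 h^{(n)}(p) \;\le\; N_n^{-w_n(p_{1/2}^{(n)} - p)} \;\le\; N_n^{-w_n\eta}.
\end{align*}
Hence $P_B^{(n)}(p) \le N_n P_b^{(n)}(p) = N_n\, p\, h^{(n)}(p) \le N_n^{\,1 - w_n\eta}$, which converges to $0$ because $w_n \to \infty$ while $\eta$ is a fixed positive constant; for $p = 0$ we have $P_B^{(n)}(0) = 0$ trivially. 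Therefore $P_B^{(n)}(p) \to 0$ for every $p \in [0, 1-r)$, which is the claim.

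The main obstacle is that the differential inequality is assumed only on the window $(a_n, b_n)$, which fills out $(0,1)$ only in the limit; one must verify that $p_{1/2}^{(n)}$ nonetheless lies strictly inside this window for large $n$ and, crucially, converges to $1-r$. This is where the area theorem is essential: because $1 - h^{(n)}$ decays exponentially on the part of $(a_n,b_n)$ above $p_{1/2}^{(n)}$ and $h^{(n)}$ decays exponentially on the part below, the total area $\int_0^1 h^{(n)} = r_n$ is sandwiched tightly enough to pin the transition at $1-r$. Once that is in hand, the remaining point is merely that a coefficient $w_n\log N_n$ with $w_n \to \infty$ overwhelms the single factor of $N_n$ in $P_B \le N P_b$ — which is precisely why, unlike Theorem~\ref{theorem:capacity_blockerror_dmin}, no hypothesis on $\dmin$ is needed.
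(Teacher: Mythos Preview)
Your proof is correct and follows essentially the same strategy as the paper: locate the threshold $p_{1/2}^{(n)}\to 1-r$ via the area theorem, use the differential inequality to force $h^{(n)}(p)$ to decay like $N_n^{-w_n\eta}$ to the left of the threshold, and then absorb the factor $N_n$ from $P_B\le N P_b$. The paper packages the first step through Lemma~\ref{lemma:exit_integral_generic} and Proposition~\ref{proposition:capacity_equivalence} and then picks $\epsilon_n=1/N_n^2$, whereas you argue by contradiction directly and bound $h^{(n)}(p)$ without the intermediate $\epsilon_n$ device, but these are cosmetic differences in the same argument.
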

\begin{IEEEproof}
See Appendix \ref{appendix:proof_capacity_blockerror_wn}.
\end{IEEEproof}


\section{Applications}
\label{section:applications}

\subsection{Affine-Invariant Codes}
\label{subsection:affine_invariant_codes}

Consider a code $\mc{C}$ of length $N=2^m$ and the Galois field $\mbb{F}_{N}$.
Let $\Theta \colon [N] \to \mbb{F}_N$ denote a bijection between the elements of the field and the code bits.
Take a pair $\beta, \gamma \in \mbb{F}_{N}$ with $\beta \neq 0$ and define $\pi_{\beta,\gamma} \in S_{N}$ such that
\begin{align*}
  \pi_{\beta,\gamma}(\ell)=\Theta^{-1}( \beta \Theta(\ell) + \gamma ) .
\end{align*}
Note that $\pi_{\beta,\gamma}$ is well-defined since $\Theta$ is bijective and $\beta \neq 0$, and observe that $\pi_{\beta_1,\gamma_1} \circ \pi_{\beta_2,\gamma_2} = \pi_{\beta_1\beta_2,\beta_1 \gamma_2 + \gamma_1}$.
As such, the collection of permutations $\pi_{\beta,\gamma}$ forms a group.
Now, the code $\mc{C}$ is called \emph{affine-invariant} if its permutation group contains the subgroup
\begin{align*}
  \{ \pi_{\beta,\gamma} \in S_N \mid \beta,\gamma \in \mbb{F}_{N}, \beta \neq 0 \} ,
\end{align*}
for some bijection $\Theta$ \cite[Section 4.7]{Huffman-2003}.

Affine-invariant codes are of interest to us because their permutation groups are doubly transitive.
To see this, consider distinct $i,j,k \in [N]$ and choose $\beta, \gamma \in \mbb{F}_{N}$ where
\begin{align*}
  \beta &= \frac{\Theta(i) - \Theta(k)}{\Theta(i) - \Theta(j)}, & \gamma &= \Theta(i) \left( \frac{\Theta(k) - \Theta(j)}{\Theta(i)-\Theta(j)} \right) ,
\end{align*}
and observe that $\pi_{\beta,\gamma}(i)=i$ and $\pi_{\beta,\gamma}(j)=k$.

Thus, by Theorem \ref{theorem:2transitive_capacity}, a sequence of affine-invariant codes of increasing length, rates converging to $r \in (0,1)$, achieve capacity on the BEC under bit-MAP decoding.
Some examples of great interest include generalized Reed-Muller codes \cite[Corollary 2.5.3]{Delsarte-ic70} and extended primitive narrow-sense BCH codes \cite[Theorem 5.1.9]{Huffman-2003}.
Below, we discuss Reed-Muller and BCH codes in more detail.

\subsection{Reed-Muller Codes}
\label{subsection:reed_muller_codes}

For integers $v,m$ satisfying $0\leq v \leq m$, the Reed-Muller code $\mr{RM}(v,m)$ is a binary linear code with length $N=2^m$ and rate $r = 2^{-m} \left(\binom{m}{0}+\dots+\binom{m}{v}\right)$.
Although it is possible to describe these codes from the perspective of affine-invariance \cite[Corollary 2.5.3]{Delsarte-ic70}, below, we treat them as polynomial codes \cite{Kasami-it68*2}.
This provides a far more powerful insight to their structure \cite{Delsarte-ic70,Delsarte-it70}.

Consider the set of $m$ variables, $x_1,\dots,x_m$.
For a monomial $x_1^{i_1}\cdots x_m^{i_m}$ in these variables, define its degree to be $i_1 + \cdots + i_m$.
A polynomial in $m$ variables is the linear combination (using coefficients from a field) of such monomials and the degree of a polynomial is defined to be the maximum degree of any monomial it contains.
It is well-known that the set of all $m$-variable polynomials of degree at most $v$ is a vector space over its field of coefficients.
In this section, the coefficient field is the Galois field $\mbb{F}_2$ and the vector space of interest is given by
\begin{align*}
  P(m,v) \!=\! \text{span} \{ x_1^{t_1}\dots x_m^{t_m} \mid t_1\!+\dots+\!t_m \leq v, t_i \in \{0,1\}\} .
\end{align*}
For a polynomial $f \in P(m,v)$, $f(\ul{x}) \in \{0,1\}$ denotes the evaluation of $f$ at $\ul{x}\in \{0,1\}^m$. 

Let the elements of the vector space $\{0,1\}^m$ over $\mbb{F}_2$ be enumerated by $\ul{e}_1,\ul{e}_2,\dots,\ul{e}_{N}$ with $\ul{e}_N=0^m$.
For any polynomial $f \in P(m,v)$, we can evaluate $f$ at $\ul{e}_i$ for all $i\in [N]$.
Then, the code $\mr{RM}(v,m)$ is defined to be the set 
\begin{align*}
  \mr{RM}(v,m) \triangleq \{ (f(\ul{e}_1),\dots,f(\ul{e}_{N})) \mid f \in P(m,v) \} .
\end{align*}

\begin{lemma} [{\hspace{-0.01cm}\cite[Corollary 4]{Kasami-it68}}]
\label{lemma:rm_2transitive}
The permutation group $\mc{G}$ of $\mr{RM}(v,m)$ is doubly transitive.
\end{lemma}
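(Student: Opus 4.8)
The plan is to exhibit an explicit doubly-transitive subgroup of the permutation group of $\mr{RM}(v,m)$, namely the group of invertible affine transformations of $\{0,1\}^m$. The key observation is that $\mr{RM}(v,m)$ is defined by evaluating degree-$\leq v$ polynomials at the points of the vector space $\{0,1\}^m$, and the notion of ``degree at most $v$'' is preserved under affine substitutions of the variables.

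First I would fix an invertible linear map $T \colon \{0,1\}^m \to \{0,1\}^m$ over $\mbb{F}_2$ and a shift $\ul{b} \in \{0,1\}^m$, and let $\sigma(\ul{x}) = T\ul{x} + \ul{b}$ be the corresponding affine bijection of $\{0,1\}^m$. This induces a permutation $\pi$ on $[N]$ via $\pi(\ell) = $ the index of $\sigma(\ul{e}_\ell)$ in the enumeration $\ul{e}_1,\dots,\ul{e}_N$. Next I would check that $\pi$ lies in the permutation group of $\mr{RM}(v,m)$: given a codeword $(f(\ul{e}_1),\dots,f(\ul{e}_N))$ with $f \in P(m,v)$, its image under $\pi$ is, up to reindexing, the evaluation vector of $f \circ \sigma^{-1}$. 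Since $\sigma^{-1}$ is also affine, each variable $x_i$ is replaced by an affine (degree-$\leq 1$) function of $x_1,\dots,x_m$, so substituting into a monomial of degree $\leq v$ yields a polynomial of degree $\leq v$; hence $f \circ \sigma^{-1} \in P(m,v)$ (after reducing using $x_i^2 = x_i$ on $\{0,1\}^m$, which does not raise the multilinear degree), and the image is again a codeword. Thus the affine group $\mr{AGL}(m,\mbb{F}_2) = \{\ul{x} \mapsto T\ul{x}+\ul{b}\}$ embeds into $\mc{G}$.

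It then remains to show this affine group acts doubly transitively on $\{0,1\}^m$ (equivalently on $[N]$). Given distinct $\ul{e}_i, \ul{e}_j, \ul{e}_k$, I want an affine $\sigma$ with $\sigma(\ul{e}_i)=\ul{e}_i$ and $\sigma(\ul{e}_j)=\ul{e}_k$. After translating by $-\ul{e}_i$ one may assume $\ul{e}_i = 0^m$, so $\sigma$ must be linear; then $\ul{e}_j - \ul{e}_i$ and $\ul{e}_k - \ul{e}_i$ are both nonzero vectors, and any invertible linear map sending one nonzero vector to another exists (extend $\{\ul{e}_j-\ul{e}_i\}$ and $\{\ul{e}_k-\ul{e}_i\}$ to bases and map one basis to the other). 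Conjugating back by the translation gives the desired affine map fixing $\ul{e}_i$ and moving $\ul{e}_j$ to $\ul{e}_k$. This establishes double transitivity of $\mc{G}$.

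The main obstacle is the bookkeeping in the substitution step: one must be careful that evaluating over $\{0,1\}^m$ identifies $x_i^2$ with $x_i$, so that $P(m,v)$ is naturally the space of multilinear polynomials of degree $\leq v$, and that affine substitution into a multilinear monomial, followed by this reduction, does not increase the degree. Once one is comfortable working with multilinear representatives, the argument is routine; indeed, this is precisely the classical fact cited as \cite[Corollary 4]{Kasami-it68}, and I would note that the polynomial viewpoint adopted in Section~\ref{subsection:reed_muller_codes} makes the affine-invariance transparent.
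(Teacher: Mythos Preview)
Your proposal is correct and follows essentially the same approach as the paper's own proof: both exhibit affine maps $\ul{x}\mapsto T\ul{x}+\ul{b}$ on $\{0,1\}^m$ as code-preserving permutations (because affine substitution does not raise polynomial degree) and then use the fact that an invertible linear $T$ can send any nonzero vector $\ul{e}_j-\ul{e}_i$ to any other nonzero vector $\ul{e}_k-\ul{e}_i$. The only cosmetic difference is packaging: you first embed all of $\mr{AGL}(m,\mbb{F}_2)$ into $\mc{G}$ and then argue its double transitivity, whereas the paper directly writes down, for given $i,j,k$, the particular affine map $\ul{x}\mapsto T(\ul{x}-\ul{e}_i)+\ul{e}_i$ and checks it lies in $\mc{G}$; your explicit remark about the $x_i^2=x_i$ reduction is a point the paper leaves implicit.
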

\begin{IEEEproof}
Take any distinct $i,j,k \in [N]$.
Below, we will produce a $\pi \in \mc{G}$ such that $\pi(i)=i$ and $\pi(j)=k$.

It is well known that for any vector space with two ordered bases $(\ul{u}_1,\dots,\ul{u}_m)$ and $(\ul{u}'_1,\dots,\ul{u}'_m)$, there exists an invertible $m\times m$ matrix $T$ such that
\begin{align*}
  \ul{u}_i = T \ul{u}'_i , \quad \text{for all $i \in [m]$} .
\end{align*}

Note that since $i,j,k$ are distinct, $\ul{e}_j-\ul{e}_i \neq 0^m$ and $\ul{e}_k-\ul{e}_i \neq 0^m$.
Therefore, there exists an invertible $m \times m$ binary matrix $T$ such that
\begin{align*}
  T(\ul{e}_j-\ul{e}_i) = \ul{e}_k - \ul{e}_i .
\end{align*}
For such a $T$, we construct $\pi \colon [N] \to [N]$ by defining $\pi(\ell) = \ell'$ for the unique $\ell'$ such that 
\begin{align*}
  \ul{e}_{\ell'}= T (\ul{e}_{\ell} - \ul{e}_i) + e_i .
\end{align*}
Note that $\pi \in S_N$ since $T$ is invertible.
Also, by construction, $\pi(i)=i$ and $\pi(j)=k$.

It remains to show that $\pi \in \mc{G}$.
For this, consider a codeword in $\mr{RM}(v,m)$ given by $f \in P(m,v)$.
It suffices to produce a $g \in P(m,v)$ such that $g(\ul{e}_{\pi(\ell)})=f(\ul{e}_{\ell})$ for all $\ell \in [N]$.
Let
\begin{align*}
  g(x_1,\dots,x_m)=f(T^{-1}[x_1,\dots,x_m]^{\mr{T}} - T^{-1}\ul{e}_i + \ul{e}_i) , 
\end{align*}
and note that $\text{degree}(f)=\text{degree}(g)$, $g(\ul{e}_{\pi(\ell)})=f(\ul{e}_{\ell})$.
Thus, we have the desired $g \in P(m,v)$.
Hence, $\mc{G}$ is doubly transitive.
\end{IEEEproof}

There is also a sequence of $\{\mr{RM}(v_m,m)\}$ codes with increasing blocklengths and rates approaching any $r\in(0,1)$.
To construct such a sequence, fix $r\in(0,1)$ and let $\{ Z_i \}$ be an iid sequence of Bernoulli($1/2$) random variables.
Then, the rate of the $\mr{RM}(v_m,m)$ code is
\begin{align*}
  r_m &= \frac{1}{2^m} \left(\binom{m}{0}+\dots+\binom{m}{v_m}\right) \\
  &= \Pr(Z_1+\dots+Z_m \leq v_m)
   \\
  &= \Pr\left( \frac{Z_1-\frac{1}{2} + \dots + Z_m-\frac{1}{2}}{\sqrt{m/4}} \leq \frac{v_m-\tfrac{m}{2}}{\sqrt{m/4}} \right) .
\end{align*}
Thus, by central limit theorem, if we choose 
\begin{align*}
  v_{m}=\max \left\{ \left\lfloor \frac{m}{2} + \frac{\sqrt{m}}{2} Q^{-1}(1-r) \right\rfloor , 0 \right\},
\end{align*}
then the rate of $\mr{RM}(v_m,m)$ satisfies $r_m \to r$ as $m \to \infty$.
Here, 
\begin{align*}
  Q(t) \triangleq \tfrac{1}{\sqrt{2\pi}} \int_t^\infty e^{-\tau^2/2} \diff{\tau}.  
\end{align*}

\begin{theorem}
\label{theorem:rm_capacity_bitmap}
For any $r\in(0,1)$, the sequence of codes $\{ \mr{RM}(v_m,m) \}$ with
\begin{align*}
  v_{m} = \max \left\{ \left\lfloor \frac{m}{2} + \frac{\sqrt{m}}{2} Q^{-1}(1-r) \right\rfloor, 0 \right\},
\end{align*}
has rate $r_m \to r$ and is capacity achieving on the BEC under bit-MAP decoding.
\end{theorem}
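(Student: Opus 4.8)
The plan is to obtain this as a direct corollary of Theorem~\ref{theorem:2transitive_capacity}. That theorem requires three things of the sequence $\{\mr{RM}(v_m,m)\}$: the blocklengths tend to infinity, the rates converge to some $r\in(0,1)$, and the permutation group of each code is doubly transitive. I would verify these in turn, after first checking that $v_m$ is a legitimate Reed--Muller parameter for all large $m$, i.e.\ that $0\le v_m\le m$ eventually.

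First, the blocklength of $\mr{RM}(v_m,m)$ is $N_m=2^m$, which clearly diverges. Second, Lemma~\ref{lemma:rm_2transitive} states precisely that the permutation group of every $\mr{RM}(v,m)$ is doubly transitive, so this hypothesis holds for each $m$. Third, for the rate I would reuse the central limit theorem computation already carried out just above the theorem statement: writing $r_m=\Pr(Z_1+\dots+Z_m\le v_m)$ for iid $\mr{Bernoulli}(1/2)$ variables $Z_i$ and normalizing, the threshold $\tfrac{v_m-m/2}{\sqrt{m}/2}$ converges to $Q^{-1}(1-r)$ by the definition of $v_m$, while the normalized sum converges in distribution to a standard normal $Z$; hence $r_m\to \Pr(Z\le Q^{-1}(1-r))=1-Q(Q^{-1}(1-r))=r$. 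Along the way I would observe that when $r\ge 1/2$ we have $Q^{-1}(1-r)\ge 0$, so the outer $\max\{\cdot,0\}$ is inactive, and when $r<1/2$ the quantity $\tfrac{m}{2}+\tfrac{\sqrt m}{2}Q^{-1}(1-r)\to\infty$, so the $\max$ is inactive for all large $m$; moreover $\tfrac m2+\tfrac{\sqrt m}{2}Q^{-1}(1-r)\le m$ for large $m$, which confirms $0\le v_m\le m$ eventually and hence that $\mr{RM}(v_m,m)$ is well-defined.

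With all three hypotheses in place and $r\in(0,1)$ by assumption, Theorem~\ref{theorem:2transitive_capacity} immediately yields that $\{\mr{RM}(v_m,m)\}$ is capacity achieving on the BEC under bit-MAP decoding. I do not expect a serious obstacle here: the substantive work --- the sharp-threshold width bound of Theorem~\ref{theorem:width_bound}, its specialization to EXIT functions via double transitivity in Theorem~\ref{theorem:2transitive_capacity}, and the double transitivity of Reed--Muller codes in Lemma~\ref{lemma:rm_2transitive} --- has already been established, and the only genuinely new content is the elementary verification that the chosen sequence $\{v_m\}$ produces rates converging to $r$, which is the routine central limit theorem argument recalled above.
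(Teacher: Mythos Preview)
Your proposal is correct and matches the paper's own proof, which simply states that the result is an immediate consequence of Lemma~\ref{lemma:rm_2transitive} and Theorem~\ref{theorem:2transitive_capacity}. The additional details you supply (the well-definedness of $v_m$ and the CLT argument for $r_m\to r$) are exactly the supporting facts the paper establishes in the discussion preceding the theorem.
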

\begin{IEEEproof}
This result follows as an immediate consequence of Lemma \ref{lemma:rm_2transitive} and Theorem \ref{theorem:2transitive_capacity}.
\end{IEEEproof}

We now analyze the block erasure probability of Reed-Muller codes.
The minimum distance of Reed-Muller codes is too small to utilize Theorem~\ref{theorem:capacity_blockerror_dmin}.
Thus, we use Theorem~\ref{theorem:capacity_blockerror_wn} instead.

For the code $\mr{RM}(v,m)$, consider the set $\Omega'_N$ from Definition \ref{definition:Omega} and \eqref{equation:Omega_prime}.
Let $\mc{G}_N$ be the permutation group of $\Omega'_N$ defined by
\begin{align*}
  \mc{G}_N \triangleq \{ \pi \in S_{N-1} \mid  \text{$\pi(\ul{a}) \in \Omega'_N$ for all $\ul{a} \in \Omega'_N$} \}.
\end{align*}

\begin{lemma}
\label{lemma:gl_g1}
For the permutation group $\mc{G}_N$ defined above, there is a transitive subgroup isomorphic to $\mr{GL}(m,\mbb{F}_2)$, the general linear group of degree $m$ over the Galois field $\mbb{F}_2$.
\end{lemma}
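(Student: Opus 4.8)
The plan is to realize $\mr{GL}(m,\mbb{F}_2)$ inside $\mc{G}_N$ via its natural action on the nonzero vectors of $\mbb{F}_2^m$, recycling the construction from the proof of Lemma~\ref{lemma:rm_2transitive}. Recall that the $N=2^m$ coordinates of $\mr{RM}(v,m)$ are indexed by the enumeration $\ul e_1,\dots,\ul e_N$ of $\{0,1\}^m$ with $\ul e_N=0^m$, so the $N-1$ coordinates appearing in $\Omega'_N$ (via the identification $\Phi_N$ of subsets of $[N-1]$ with $\{0,1\}^{N-1}$) are exactly those indexed by the nonzero vectors. For $T\in\mr{GL}(m,\mbb{F}_2)$ I would define $\pi_T\in S_N$ by $\ul e_{\pi_T(\ell)}=T\ul e_\ell$; this is well defined since $T$ is invertible, and $\pi_T(N)=N$ because $T\cdot 0^m=0^m$, so $\pi_T$ restricts to a permutation of $[N-1]$, which is the candidate element of $\mc{G}_N$.

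First I would check that $T\mapsto\pi_T$ is an injective group homomorphism into $S_N$: $\ul e_{\pi_{T_1T_2}(\ell)}=T_1T_2\ul e_\ell=\ul e_{\pi_{T_1}(\pi_{T_2}(\ell))}$ gives $\pi_{T_1T_2}=\pi_{T_1}\circ\pi_{T_2}$, and $\pi_T=\mathrm{id}$ forces $T\ul e_\ell=\ul e_\ell$ for all $\ell$, hence $T=I$; since every $\pi_T$ fixes $N$, the restriction to $[N-1]$ is still an injective homomorphism. Next I would show this restriction lies in $\mc{G}_N$. Specializing the argument of Lemma~\ref{lemma:rm_2transitive} to $i=N$ (so $\ul e_i=0^m$): for any $f\in P(m,v)$ the polynomial $g(\ul x)=f(T^{-1}\ul x)$ again lies in $P(m,v)$ and satisfies $g(\ul e_{\pi_T(\ell)})=f(\ul e_\ell)$ for every $\ell$, so $\pi_T$ carries each codeword of $\mr{RM}(v,m)$ to a codeword, i.e. $\pi_T$ belongs to the permutation group of the full code and fixes coordinate $N$. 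Consequently, if $A\in\Omega_N$ with $B\subseteq A$ and $B\cup\{N\}\in\mr{RM}(v,m)$, then $\pi_T(B\cup\{N\})=\pi_T(B)\cup\{N\}$ is again a codeword while $\pi_T(B)\subseteq\pi_T(A)\subseteq[N-1]$, so $\pi_T(A)\in\Omega_N$; applying the same to $\pi_T^{-1}=\pi_{T^{-1}}$ shows $\pi_T$ maps $\Omega_N$ onto itself, hence preserves $\Omega'_N$ and lies in $\mc{G}_N$. Therefore $H:=\{\pi_T|_{[N-1]}\mid T\in\mr{GL}(m,\mbb{F}_2)\}$ is a subgroup of $\mc{G}_N$ isomorphic to $\mr{GL}(m,\mbb{F}_2)$.

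It remains to check that $H$ acts transitively on $[N-1]$. Given $j,k\in[N-1]$, the vectors $\ul e_j,\ul e_k$ are nonzero, each extends to an ordered basis of $\mbb{F}_2^m$, and the change-of-basis matrix $T\in\mr{GL}(m,\mbb{F}_2)$ between these two bases satisfies $T\ul e_j=\ul e_k$, i.e. $\pi_T(j)=k$. This completes the argument.

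The only part requiring care is keeping the three incarnations of the action of $\mc{G}_N$ straight — on $[N-1]$, on subsets, and on $\{0,1\}^{N-1}$ via $\Phi_N$ — and verifying that ``$\pi_T$ is a code automorphism fixing coordinate $N$'' genuinely descends to ``$\pi_T$ preserves $\Omega'_N$'', which is precisely the $B\subseteq A$ manipulation above; past that the proof is routine bookkeeping and I do not anticipate a real obstacle.
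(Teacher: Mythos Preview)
Your proposal is correct and follows essentially the same approach as the paper: define $\pi_T$ via the natural action of $T\in\mr{GL}(m,\mbb{F}_2)$ on the nonzero vectors, verify the homomorphism property, show $\pi_T$ is a code automorphism fixing coordinate $N$ via $g(\ul x)=f(T^{-1}\ul x)$, and check transitivity by a change-of-basis argument. You are somewhat more explicit than the paper in verifying injectivity and in spelling out why a code automorphism fixing $N$ preserves $\Omega'_N$, but the structure and key ideas are the same.
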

\begin{IEEEproof}
For a given $T \in \mr{GL}(m,\mbb{F}_2)$, associate $\pi_T \in S_{N-1}$, where 
\begin{align*}
  \pi_T(\ell) = \ell', \quad \text{where $\ul{e}_{\ell'} = T \ul{e}_{\ell}$}.
\end{align*}
Note that $\pi_T$ is well-defined since $T$ is invertible.
Moreover, it is easy to check that $\pi_{T_1} \circ \pi_{T_2} = \pi_{T_1 T_2}$ for $T_1,T_2 \in \mr{GL}(m,\mbb{F}_2)$.
As such, the collection of permutations
\begin{align*}
  \mc{H} = \{ \pi_T \in S_{N-1} \mid T \in \mr{GL}(m,\mbb{F}_2)\}
\end{align*}
is a subgroup of $S_{N-1}$ isomorphic to $\mr{GL}(m,\mbb{F}_2)$.
Also, for $i,j \in [N-1]$, there exists $T \in \mr{GL}(m,\mbb{F}_2)$ such that $\ul{e}_{j}=T\ul{e}_{i}$.
For such a $T$, $\pi_T(i)=j$.
Therefore, $\mc{H}$ is transitive.

It remains to show that $\mc{H} \subseteq \mc{G}_N$.
For this, associate $\pi_T \in \mc{H}$ with $\pi'_T \in S_{N}$ where
\begin{align*}
  \pi'_T(\ell) &= \pi_T(\ell)\quad \text{for $\ell \in [N-1]$}, & \pi'_T(N)&=N.
\end{align*}
Also, it is easy to show that $\pi_T \in \mc{G}_1$ if $\pi'_T \in \mc{G}$, the permutation group of $\mr{RM}(v,m)$.
To see that $\pi'_T \in \mc{G}$, consider a codeword given by $f \in P(m,v)$.
It suffices to produce a $g \in P(m,v)$ where $g(\ul{e}_{\pi'_T(\ell)})=f(\ul{e}_\ell)$ for $\ell \in [N]$.
The desired $g$ is given by $g(x_1,\dots,x_m)=f(T^{-1}[x_1,\dots,x_m]^{\mr{T}})$, by observing that $\text{degree}(g)=\text{degree}(f)$ and $g(\ul{e}_N)=f(T^{-1}0^m)=f(\ul{e}_N)$.
\end{IEEEproof}

\begin{theorem}
\label{theorem:rm_capacity_blockmap}
For any $r \in (0,1)$, the sequence of codes $\{ \mr{RM}(v_m,m) \}$, with
\begin{align*}
  v_{m} = \max \left\{ \left\lfloor \frac{m}{2} + \frac{\sqrt{m}}{2} Q^{-1}(1-r) \right\rfloor, 0 \right\},
\end{align*}
has rate $r_m \to r$ and is capacity achieving on the BEC under block-MAP decoding.
\end{theorem}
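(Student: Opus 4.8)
The plan is to apply Theorem~\ref{theorem:capacity_blockerror_wn}, so the entire task reduces to showing that the average EXIT functions $h^{(n)}$ of the sequence $\{\mr{RM}(v_m,m)\}$ satisfy a differential inequality of the form $\deri{h^{(n)}(p)}{p} \geq w_n \log(N_n)\, h^{(n)}(p)(1-h^{(n)}(p))$ on an interval $(a_n,b_n)$ with $w_n \to \infty$, $a_n \to 0$, $b_n \to 1$. Theorem~\ref{theorem:rm_capacity_bitmap} already gives $r_m \to r \in (0,1)$ and $N_m = 2^m \to \infty$, so those hypotheses are free. The key new ingredient is to upgrade the ``factor $\log N$'' bound coming from mere double transitivity (Theorem~\ref{theorem:width_bound}) to a strictly faster-growing factor, using the richer symmetry of $\Omega'_N$ established in Lemma~\ref{lemma:gl_g1}: its permutation group $\mc{G}_N$ contains a transitive copy of $\mr{GL}(m,\mbb{F}_2)$, a group that is substantially more transitive than a generic doubly-transitive group on $N-1$ points.

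The main step is to invoke the Bourgain--Kalai type refinement of the sharp-threshold machinery~\cite{Bourgain-gafa97} (as already flagged in the remarks preceding Theorem~\ref{theorem:capacity_blockerror_wn}): when a monotone set $\Omega \subseteq \{0,1\}^{M}$ has a transitive symmetry group, one gets the Margulis--Russo bound $\deri{\mu_p(\Omega)}{p} \geq c(p)\,(\log M)^{2}\,\mu_p(\Omega)(1-\mu_p(\Omega))$ with a constant $c(p)$ that is uniform on any compact subinterval of $(0,1)$ but may degrade near the endpoints. Taking $M = N-1 = 2^m - 1$, so that $\log M = \Theta(m)$, and specializing to the sets $\Omega'_N$ (whose symmetry group is transitive by Lemma~\ref{lemma:gl_g1}), this yields $\deri{h^{(m)}(p)}{p} \geq c(p)\,(\log N_m)\cdot(\log N_m)\, h^{(m)}(p)(1-h^{(m)}(p))$, i.e. an inequality of the desired shape with $w_m$ essentially $c(p)\log N_m$. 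To handle the endpoint degradation of $c(p)$, I would fix sequences $a_m \to 0$ and $b_m \to 1$ slowly enough that $\inf_{p\in(a_m,b_m)} c(p) \cdot \log N_m \to \infty$; this is possible because $c(p)$ decays only polynomially (or at worst polylogarithmically) as $p\to 0,1$, whereas $\log N_m = m\log 2$ grows linearly in $m$. Setting $w_m := \bigl(\inf_{p\in(a_m,b_m)} c(p)\bigr)\log N_m \to \infty$ then gives exactly the hypotheses of Theorem~\ref{theorem:capacity_blockerror_wn}.

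With the differential inequality in hand on $(a_m,b_m)$, Theorem~\ref{theorem:capacity_blockerror_wn} applies directly and concludes that $\{\mr{RM}(v_m,m)\}$ is capacity achieving on the BEC under block-MAP decoding, which is the claim.

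\textbf{Main obstacle.} The delicate point is \emph{not} the final bookkeeping but quoting and adapting the Bourgain--Kalai strengthening correctly: one must verify that the transitive $\mr{GL}(m,\mbb{F}_2)$-symmetry of $\Omega'_N$ is enough to trigger the $(\log M)^2$-type bound (transitivity of the symmetry group is the standard hypothesis in that line of work, so Lemma~\ref{lemma:gl_g1} should suffice), and one must track how the implied constant depends on $p$ near $0$ and $1$ in order to choose the windows $a_m,b_m$ legitimately. Everything downstream — relating $h^{(m)}$ to $\mu_p(\Omega'_N)$ via Proposition~\ref{proposition:scalar_exit_properties}, and feeding the result into Theorem~\ref{theorem:capacity_blockerror_wn} — is routine given the groundwork already laid in the paper.
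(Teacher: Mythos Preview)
Your approach is essentially the paper's: reduce $h$ to $h_N$ by transitivity, invoke Lemma~\ref{lemma:gl_g1} for the $\mr{GL}(m,\mbb{F}_2)$ symmetry of $\Omega'_N$, apply the Bourgain--Kalai strengthening to beat the $\log N$ factor, and feed the resulting differential inequality into Theorem~\ref{theorem:capacity_blockerror_wn}.

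Two corrections are worth flagging. First, mere transitivity of the symmetry group does \emph{not} yield the improved bound you state---transitivity alone gives only the $\log M$ factor of Theorem~\ref{theorem:width_bound}, so your sentence ``transitivity of the symmetry group is the standard hypothesis'' is misleading; it is precisely the specific $\mr{GL}(m,\mbb{F}_2)$ structure (not generic transitivity) that~\cite{Bourgain-gafa97} exploits, which is why Lemma~\ref{lemma:gl_g1} is needed rather than the already-available double transitivity. Second, the bound the paper extracts from \cite[Theorem~1, Corollary~4.1]{Bourgain-gafa97} is $\deri{h_N(p)}{p} \geq C\,\log(\log N_m)\,\log(N_m)\,h_N(p)(1-h_N(p))$ with a universal constant $C$ independent of $m$ and $p$, valid on $(a_m,b_m)$ with $a_m\to 0$, $b_m\to 1$---so the extra factor is $\log\log N_m$, not the $\log N_m$ you claim. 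Since either factor gives $w_m\to\infty$, this does not affect the conclusion, but both points matter if you attempt to verify the citation.
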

\begin{IEEEproof}
Let the EXIT function associated with the last bit and the average EXIT function of the code $\mr{RM}(v_m,m)$ be $h_N$ and $h$, respectively.
Since the permutation group of $\mr{RM}(v_m,m)$ is transitive by Lemma \ref{lemma:rm_2transitive}, from Proposition \ref{proposition:transitive_exit}, $h=h_N$.
Moreover, by Lemma \ref{lemma:gl_g1}, $\mc{G}_N$ contains a transitive subgroup isomorphic to $\mr{GL}(m,\mbb{F}_2)$.

Now, we can exploit the $\mr{GL}(m,\mbb{F}_2)$ symmetry of $\Omega_N$ within the framework of \cite{Bourgain-gafa97}.
In particular, \cite[Theorem 1, Corollary 4.1]{Bourgain-gafa97} implies that there exists a universal constant $C>0$, independent of $m$ and $p$, such that 
\begin{align*}
  \deri{h_N(p)}{p} \!\geq\! C \log (\log N_m) \log (N_m) h_N(p)(1-h_N(p)),
\end{align*}
for $0 < a_m < p < b_m < 1$, where $N_m = 2^m $ and $a_m \to 0$, $b_m \to 1$ as $m \to \infty$.
Since $h=h_N$, Theorem~\ref{theorem:capacity_blockerror_wn} implies that $\{\mr{RM}(v_m,m)\}$ is capacity achieving on the BEC under block-MAP decoding.
\end{IEEEproof}
From this, we see that the block erasure probability goes to $0$ for $p<1-r$. 
For $p>1-r$, the average EXIT function $h(p)$ is bounded away from $0$.
Thus, Theorem~\ref{theorem:rm_capacity_bitmap} implies that the bit erasure probability $ph(p)$ is bounded away from $0$ but not converging to $1$.
The block erasure probability does converge to $1$, however.  This follows from the result in \cite{Tillich-cpc00} because the minimum distance of the code $\mr{RM}(v_m,m)$ goes to $\infty$ as $m\to \infty$.

\subsection{Bose-Chaudhuri-Hocquengham Codes}
\label{subsection:bch_codes}

Let $\alpha$ be a primitive element of $\mbb{F}_{2^m}$.
Recall that a binary BCH code is \emph{primitive} if its blocklength is of the form $2^m-1$, and \emph{narrow-sense} if the roots of its generator polynomial include consecutive powers of a primitive element starting from $\alpha$.
In this article, we consider only primitive narrow-sense BCH codes and we follow closely the treatment of BCH codes in \cite{Huffman-2003}.

For integers $v$, $m$ with $1 \leq v \leq 2^m-1$, let $f(m,v)$ be the polynomial of lowest-degree over $\mbb{F}_2$ that has the roots
\begin{align*}
  \alpha,\alpha^2,\dots,\alpha^{v} .
\end{align*}
Then, $\mr{BCH}(v,m)$ is defined to be the binary cyclic code with the generator polynomial $f(m,v)$.
This is precisely the primitive narrow-sense BCH code with length $2^m -1$ and designed distance $v+1$.

The dimension $K$ of the cyclic code is determined by the degree of the generator polynomial according to \cite[Theorem 4.2.1]{Huffman-2003}
\begin{align*}
  K=N-\text{degree}(f(m,v)) .
\end{align*}
Moreover, the minimum distance $\dmin$ of $\mr{BCH}(v,m)$ is at least $v+1$ \cite[Theorem 5.1.1]{Huffman-2003}.

Since $\mbb{F}_{2^m}$ is the splitting field of the polynomial $x^N-1$ \cite[Theorem 3.3.2]{Huffman-2003}, it is easy to see that $\text{degree}(f(m,2^m-1)) = N$.
Also, since the size of the cyclotomic coset of any element $\alpha^i$ is at most $m$ \cite[Section 3.7]{Huffman-2003}, we have $\text{degree}(f(m,1)) \leq m$,
\begin{align*}
  0 \leq \text{degree}(f(m,v+1))-\text{degree}(f(m,v)) \leq m.
\end{align*}
Thus, for any $r \in(0,1)$, one can choose $v_{m}$ such that
\begin{align*}
  N(1-r) \leq \text{degree}(f(m,v_{m})) \leq N(1-r) + m .
\end{align*}
Now, it is easy to see that $v_{m} \geq N(1-r)/m$ and the rate of the code $\mr{BCH}(v_m,m)$ will be in $[r-\frac{m}{N},r]$.

Now, consider the length-$2^m$ extended BCH code, $\mr{eBCH}(v,m)$, which is formed by adding a single parity bit to the code $\mr{BCH}(v,m)$ so that overall codeword parity is always even~\cite[Section 5.1]{Huffman-2003}.
The code $\mr{eBCH}(v,m)$ has the same dimension as $\mr{BCH}(v,m)$ and a minimum distance of at least $v+1$.

Thus, for any $r\in(0,1)$, there exists a sequence of codes $\{ \mr{eBCH}(v_m,m) \}$ with blocklengths $N_m=2^m$, rates $r_m \to r$ and minimum distances
\begin{align}
  \label{equation:bch_dmin}
  \dmin^{(m)} \geq 1+v_{m} \geq 1+\frac{N_m(1-r)}{m} .
\end{align}

An important property of the extended BCH codes is that they are affine-invariant \cite[Theorem 5.1.9]{Huffman-2003}.
Thus, Section~\ref{subsection:affine_invariant_codes} shows that their permutation group is doubly transitive.
Therefore, we have the following theorem.
\begin{theorem}
\label{theorem:bch_capacity_bitmap}
For any $r \in (0,1)$, there is a sequence $\{v_m\}$ such that the code sequence $\{ \mr{eBCH}(v_m,m) \}$ has  $r_m\to r$ and is capacity achieving on the BEC under bit-MAP decoding.
\end{theorem}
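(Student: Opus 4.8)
The plan is to obtain this as a direct corollary of Theorem~\ref{theorem:2transitive_capacity}, so the only real work is to produce a code sequence meeting its three hypotheses: strictly increasing blocklengths, rates converging to $r$, and a doubly transitive permutation group. First I would recall the rate-control argument sketched just before the statement. The generator polynomial $f(m,v)$ of $\mr{BCH}(v,m)$ has degree $0$ for $v$ small (up to $m$ for $v=1$) and degree $2^m-1$ at $v=2^m-1$, and crucially its degree increases by at most $m$ each time $v$ is incremented, since each cyclotomic coset has size at most $m$. Hence for any $r\in(0,1)$ one can pick $v_m$ with $\mathrm{degree}(f(m,v_m))$ in the window $[N(1-r),\,N(1-r)+m]$ where $N=2^m-1$; using $K=N-\mathrm{degree}(f(m,v_m))$ this pins the rate of $\mr{BCH}(v_m,m)$, and therefore of the length-$2^m$ extension $\mr{eBCH}(v_m,m)$ (which shares the same dimension), to $[r-\tfrac{m}{N_m},\,r]$ with $N_m=2^m$, so that $r_m\to r$.

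Next I would verify the remaining hypotheses. The blocklengths $N_m=2^m$ are strictly increasing and tend to infinity, so that part is immediate. For double transitivity, I would invoke the fact that every extended primitive narrow-sense BCH code is affine-invariant, quoting \cite[Theorem 5.1.9]{Huffman-2003}; Section~\ref{subsection:affine_invariant_codes} already shows that an affine-invariant code has a doubly transitive permutation group, because the subgroup $\{\pi_{\beta,\gamma}\mid \beta,\gamma\in\mathbb{F}_N,\ \beta\neq 0\}$ it contains acts doubly transitively on the code bits (given distinct $i,j,k$ one solves for $\beta,\gamma$ as in that section so that $\pi_{\beta,\gamma}(i)=i$, $\pi_{\beta,\gamma}(j)=k$). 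With $r\in(0,1)$ in hand, all hypotheses of Theorem~\ref{theorem:2transitive_capacity} are satisfied, and its conclusion gives exactly the claimed bit-MAP capacity achievability of $\{\mr{eBCH}(v_m,m)\}$.

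I do not expect a genuine obstacle here; unlike the block-MAP and Reed-Muller cases, no sharpened isoperimetric input is needed. The only mildly delicate point is checking that the rate sequence is actually steerable to an arbitrary target, which is handled by the coarse per-step bound on $\mathrm{degree}(f(m,v))$ above. One might also pause on whether the single-bit parity extension perturbs the rate, but extension increases the length by one while leaving the dimension fixed, so the effect is $O(1/N_m)$ and washes out in the limit. Thus the theorem reduces cleanly to Theorem~\ref{theorem:2transitive_capacity} together with the affine-invariance of extended BCH codes.
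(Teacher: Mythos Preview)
Your proposal is correct and follows essentially the same approach as the paper: the paper's proof is just the sentence preceding the theorem statement, namely that extended primitive narrow-sense BCH codes are affine-invariant \cite[Theorem 5.1.9]{Huffman-2003}, hence doubly transitive by Section~\ref{subsection:affine_invariant_codes}, so Theorem~\ref{theorem:2transitive_capacity} applies together with the rate-steering argument you reproduced. Your write-up is, if anything, more detailed than the paper's.
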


In the following, we discuss the block erasure probability of BCH codes.
It is possible to characterize the permutation group of the code $\mr{eBCH}(v,m)$ precisely.
According to \cite{Berger-it96,Berger-des99}, except in sporadic cases, the permutation group of the code $\mr{eBCH}(v,m)$ is equal to the affine semi-linear group.
Unfortunately, in the framework of \cite{Bourgain-gafa97}, this group does not produce any factors beyond order $\log(N)$.
This is not encouraging for the analysis of block erasure probability.
This is in contrast with Reed-Muller codes where it was possible to exploit $\mr{GL}(m,\mbb{F}_2)$ symmetry to analyze their block erasure probability.
It is worth noting that the only cyclic primitive codes, whose permutation group includes the general linear group of degree $m$, are variants of generalized Reed-Muller codes~\cite{Delsarte-it70}.

For BCH codes, however, the minimum distance is large enough to use Theorem~\ref{theorem:capacity_blockerror_dmin}.
In fact, the minimum distance of the code $\mr{eBCH}(v_m,m)$ from \eqref{equation:bch_dmin} satisfies
\begin{align}
\label{equation:bch_dmin_limit}
  \lim_{m \to \infty} \frac{\log \dmin^{(m)}}{\log N_m} = 1.
\end{align}
Since the permutation group of the code $\mr{eBCH}(v_m,m)$ is doubly transitive from affine-invariance, by Theorem~\ref{theorem:width_bound} and the proof of Theorem \ref{theorem:2transitive_capacity}, its average EXIT function satisfies the hypothesis of Theorem~\ref{theorem:capacity_blockerror_dmin}.
Combining this observation with \eqref{equation:bch_dmin_limit} gives the following result.
\begin{theorem}
\label{theorem:bch_capacity_blockmap}
For any $r\in (0,1)$, there is a sequence $\{v_m\}$ such that the code sequence $\{ \mr{eBCH}(v_m,m) \}$ has  $r_m\to r$ and is capacity achieving on the BEC under block-MAP decoding.
\end{theorem}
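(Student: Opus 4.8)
The plan is to verify that the code sequence $\{\mr{eBCH}(v_m,m)\}$ already singled out in the discussion above satisfies the three hypotheses of Theorem~\ref{theorem:capacity_blockerror_dmin} and then invoke that theorem. First I would fix the sequence $\{v_m\}$ exactly as described earlier: choose $v_m$ so that $N_m(1-r)\le\text{degree}(f(m,v_m))\le N_m(1-r)+m$, which is possible because $\text{degree}(f(m,1))\le m$ (the size of a cyclotomic coset is at most $m$), $\text{degree}(f(m,2^m-1))=N_m$, and $\text{degree}(f(m,\cdot))$ increases in steps of at most $m$. This pins the rate of $\mr{BCH}(v_m,m)$, hence of the extended code, into $[r-m/N_m,\,r]$, so $r_m\to r$, and $N_m=2^m\to\infty$ is immediate. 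This handles the first hypothesis.

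The differential inequality is obtained exactly as in the proof of Theorem~\ref{theorem:2transitive_capacity}. Since $\mr{eBCH}(v_m,m)$ is affine-invariant by~\cite[Theorem 5.1.9]{Huffman-2003}, Section~\ref{subsection:affine_invariant_codes} shows that its permutation group is doubly transitive. Fixing a bit $i$ and the monotone set $\Omega'_i\subseteq\{0,1\}^{N_m-1}$, Proposition~\ref{proposition:2transitive_exit_deri} makes all the influences $I_j^{(p)}(\Omega'_i)$ equal, so Theorem~\ref{theorem:width_bound}(a) yields $\deri{h_i(p)}{p}\ge C\log(N_m-1)\,h_i(p)(1-h_i(p))$ with $C$ a universal constant; transitivity and Proposition~\ref{proposition:transitive_exit} then let me replace $h_i$ by the average EXIT function $h^{(m)}$. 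Since $C\log(N_m-1)\ge \tfrac{C}{2}\log N_m$ for $m\ge 2$, the average EXIT function obeys the hypothesis of Theorem~\ref{theorem:capacity_blockerror_dmin} with a constant independent of $m$ and $p$.

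For the minimum distance, I would use the bound $\dmin^{(m)}\ge 1+N_m(1-r)/m$ from~\eqref{equation:bch_dmin} together with the trivial bound $\dmin^{(m)}\le N_m$. Since $\log N_m=m\log 2$ grows much faster than $\log m$, both $\frac{\log(1+N_m(1-r)/m)}{\log N_m}$ and $\frac{\log N_m}{\log N_m}$ tend to $1$, which is exactly~\eqref{equation:bch_dmin_limit}. With all three hypotheses in place, Theorem~\ref{theorem:capacity_blockerror_dmin} delivers that $\{\mr{eBCH}(v_m,m)\}$ is capacity achieving on the BEC under block-MAP decoding.

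I do not anticipate a genuine obstacle here: the substantive work is already packaged into Theorem~\ref{theorem:width_bound} (the sharp-threshold/isoperimetric input), Theorem~\ref{theorem:capacity_blockerror_dmin} (converting a sufficiently fast bit-MAP decay into block-MAP decay using the minimum distance), and the classical fact that extended primitive narrow-sense BCH codes are affine-invariant. The only care required is to make the two ``$\log$ of a quantity of size $\sim N_m$ equals $\log N_m$ up to constants or $o(1)$'' steps in the second and third paragraphs explicit, so that the constant appearing in the hypothesis of Theorem~\ref{theorem:capacity_blockerror_dmin} is genuinely uniform in $m$.
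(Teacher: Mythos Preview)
Your proposal is correct and follows essentially the same approach as the paper: verify $r_m\to r$ and $N_m\to\infty$ from the choice of $v_m$, invoke affine-invariance to get double transitivity and hence the differential inequality on $h^{(m)}$ via Theorem~\ref{theorem:width_bound}, check $\log\dmin^{(m)}/\log N_m\to 1$ from~\eqref{equation:bch_dmin}, and conclude via Theorem~\ref{theorem:capacity_blockerror_dmin}. Your write-up is in fact more careful than the paper's own brief discussion, in particular in making explicit the absorption of $\log(N_m-1)$ into $\log N_m$ and the squeeze argument for the minimum-distance limit.
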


\begin{corollary} \label{corollary:bch_capacity}
For any $r\in (0,1)$, there is a sequence $\{v_m\}$ such that the code sequence $\{ \mr{BCH}(v_m,m) \}$ has  $r_m\to r$ and is capacity achieving on the BEC under both bit-MAP and block-MAP decoding.
\end{corollary}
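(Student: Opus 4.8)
The plan is to reduce Corollary~\ref{corollary:bch_capacity} to the results already obtained for the \emph{extended} codes, Theorem~\ref{theorem:bch_capacity_bitmap} and Theorem~\ref{theorem:bch_capacity_blockmap}, by observing that $\mr{BCH}(v,m)$ is obtained from $\mr{eBCH}(v,m)$ by puncturing the overall-parity coordinate. Fix $r\in(0,1)$ and let $\{v_m\}$ be the sequence furnished by Theorem~\ref{theorem:bch_capacity_blockmap}, so that $\mr{eBCH}(v_m,m)$ has length $N_m=2^m$, dimension $K_m$, and rate $r'_m=K_m/N_m\to r$. Then $\mr{BCH}(v_m,m)$ has length $N_m-1$, the same dimension $K_m$, and rate $r_m=r'_m\,N_m/(N_m-1)\to r$, so the rate hypothesis is met. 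Since $P_b\le P_B$ by~\eqref{equation:erasure_prob_relations}, it suffices to prove block-MAP capacity achievability for $\{\mr{BCH}(v_m,m)\}$; the bit-MAP claim then follows at once.

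Next I would establish a channel-equivalence identity. Order the coordinates of $\mr{eBCH}(v_m,m)$ so that coordinate $N_m$ is the parity bit. The map that deletes coordinate $N_m$ is a bijection from $\mr{eBCH}(v_m,m)$ onto $\mr{BCH}(v_m,m)$ (injective because two extended codewords with the same first $N_m-1$ coordinates have the same parity, hence the same last coordinate). Consequently, for any $A\subseteq[N_m-1]$, the set $A$ covers the support of a nonzero codeword of $\mr{BCH}(v_m,m)$ if and only if $A\cup\{N_m\}$ covers the support of the corresponding nonzero codeword of $\mr{eBCH}(v_m,m)$, the coordinate $N_m$ being automatically covered. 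Since the i.i.d.\ $\mr{BEC}(p)$ erasure pattern on $\mr{BCH}(v_m,m)$ has the same law as the restriction to $[N_m-1]$ of the $\mr{BEC}(\ul{p})$ erasure pattern with $\ul{p}=(p,\dots,p,1)$ (all coordinates erased with probability $p$ except coordinate $N_m$, always erased), this yields $P_B^{\mr{BCH}}(p)=P_B^{\mr{eBCH}}(\ul{p})$, where the superscript indicates the code and the argument the channel.

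Finally I would pass from $\mr{BEC}(\ul{p})$ back to the scalar channel by conditioning: when $\mr{eBCH}(v_m,m)$ is sent over $\mr{BEC}(p)$ on all $N_m$ coordinates, conditioning on the (probability-$p$) event that coordinate $N_m$ is erased leaves the remaining $N_m-1$ coordinates i.i.d.\ $\mr{BEC}(p)$, so
\[
  P_B^{\mr{eBCH}}(p)\;\ge\;p\cdot P_B^{\mr{eBCH}}(\ul{p})\;=\;p\cdot P_B^{\mr{BCH}}(p),
\]
and hence $P_B^{\mr{BCH}}(p)\le p^{-1}P_B^{\mr{eBCH}}(p)$ for $0<p<1$. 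Now fix $p\in[0,1-r)$: at $p=0$ both sides vanish trivially, while for $p\in(0,1-r)$ one has $p<1-r'_m$ for all large $m$, so Theorem~\ref{theorem:bch_capacity_blockmap} gives $P_B^{\mr{eBCH}}(p)\to 0$ and therefore $P_B^{\mr{BCH}}(p)\to 0$. Combined with $P_b\le P_B$, this shows that $\{\mr{BCH}(v_m,m)\}$ is capacity achieving under both bit-MAP and block-MAP decoding.

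The step that I expect to require the most care is the channel-equivalence identity, i.e.\ verifying that puncturing the parity coordinate is, for recoverability purposes, exactly the same as pinning that coordinate to an erasure (and, if one also wants the direct bit-MAP statement, the analogous per-bit identity $P_{b,i}^{\mr{BCH}}(p)=P_{b,i}^{\mr{eBCH}}(\ul{p})$ for $i\le N_m-1$). One should also note that the resulting $p^{-1}$ factor is harmless, since the only value of $p$ at which it blows up is $p=0$, where the erasure probabilities are zero by inspection. The remaining ingredients—the rate computation and the conditioning inequality—are routine.
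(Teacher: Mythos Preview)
Your proposal is correct and follows essentially the same route as the paper: both arguments observe that $\mr{BCH}(v,m)$ is obtained from $\mr{eBCH}(v,m)$ by puncturing the parity coordinate, identify puncturing with a forced erasure, and pick up a harmless $1/p$ factor via conditioning. The only cosmetic difference is that the paper phrases the comparison at the level of EXIT functions, obtaining $h^{\mr{BCH}}(p)\le p^{-1}h^{\mr{eBCH}}(p)$, whereas you phrase it at the level of the block erasure probability, obtaining $P_B^{\mr{BCH}}(p)\le p^{-1}P_B^{\mr{eBCH}}(p)$, and then recover the bit-MAP claim from $P_b\le P_B$; your version is arguably more self-contained for the block-MAP statement.
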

\begin{proof}
The code $\mr{BCH}(m,v)$ can be constructed from the code $\mr{eBCH}(m,v)$ simply by puncturing (i.e., erasing) the overall parity bit.
This implies that their EXIT functions satisfy $h^{\mr{eBCH}} (p) \geq p h^{\mr{BCH}}(p)$.
From this, we see that
\begin{equation*} \label{equation:BCHexit}
	h^{\mr{BCH}} (p) \leq \frac{1}{p} h^{\mr{eBCH}}(p).
\end{equation*}
Since puncturing single bit has an asymptotically negligible effect on the rate, the statement of the corollary follows directly from Theorems~\ref{theorem:bch_capacity_bitmap} and~\ref{theorem:bch_capacity_blockmap}.
\end{proof}

\begin{remark}
Corollary~\ref{corollary:bch_capacity} shows that there are sequences of binary cyclic codes that achieve capacity on the BEC.
As far as the authors know, this is the first proof that such a sequence exists~\cite{Ordentlich-it12}.
\end{remark}


\section{Discussion}
\label{section:discussion}

\subsection{Comparison with the Work of Tillich and Z{\'e}mor}
\label{section:TillichZemor}

Our initial attempts to prove a sharp threshold for EXIT functions focused on analyzing~\eqref{equation:boundary} with $\Omega=\Omega_i $.
In particular, our aim was to generalize~\cite{Tillich-cpc00} to EXIT functions by finding a lower bound on $g_{\Omega_i} (\ul{x})$ that holds uniformly over the boundary
$$\partial \Omega_i \triangleq \{ \ul{x}\in\{0,1\}^{N} \,|\, g_{\Omega_i} (\ul{x}) >0 \}.$$
For code sequences where $\dmin \to \infty$ and the minimum distance of the dual code satisfies $\dmin^{\perp} \to \infty$, we expected that $\min_{\ul{x}\in \partial \Omega_i} g_{\Omega_i} (\ul{x})$ would grow without bound and, thus, that the EXIT function would have a sharp threshold.
Unfortunately, this is not true.
In fact, the ensemble of $(j,k)$-regular LDPC codes provides a counterexample.
With high probability, their minimum distance grows linearly with $N$ but one iteration of iterative decoding shows that the EXIT function is upper bounded by $(1-(1-p)^{k-1})^j$ for all $p$ and $N$~\cite{RU-2008}. 

To understand this, first recall that a weight-$d$ codeword in the dual code defines a subset of $d$ code bits that sum to 0.
If only one of the bits in this dual codeword is erased, then that bit can be recovered indirectly from the other bits.
To see this in terms of the boundary, consider the indirect recovery of bit-$i$ and assume that it is contained in a weight-$d$ dual codeword with $d=\dmin^{\perp}\geq 3$.
Let $\ul{x}$ be an erasure pattern where $d-2$ of the $d-1$ other bits in the dual codeword are received correctly and all other bits are erased.
Then, $\ul{x} \in \Omega_i$ and bit-$i$ cannot be recovered indirectly.
Also, bit-$i$ can be recovered indirectly if the erased bit (say bit $j$) in the dual codeword is revealed.
Thus, $\ul{x}^{(j)} \notin \Omega_i$.

Now, let us consider $g_{\Omega_i} (\ul{x})$.
If there is any other bit (say bit $k$) for which $\ul{x}^{(k)} \notin \Omega_i$, then the pattern of correctly received symbols in $\ul{x}^{(k)}$ (along with bit $i$) must cover a dual codeword.
Since $\ul{x}^{(k)}$ contains exactly $d-1$ zero (i.e., unerased) symbols and the minimum dual distance is $d$, it follows that $\ul{x}^{(k)}$ must be a dual codeword.
Due to linearity, one can add the two vectors to get $\ul{x}^{(j)} + \ul{x}^{(k)}$, which clearly has weight 2.
However, this contradicts the assumption that the minimum dual distance is $\dmin^{\perp} \geq 3$.
Thus, we find that only bit $j$ is pivotal for $\ul{x}$ and
$$\min_{\ul{x} \in \partial \Omega_i} g_{\Omega_i} (\ul{x})=1.$$

This shows that the method of~\cite{Tillich-cpc00} does not extend automatically to prove sharp thresholds for EXIT functions.
While it is possible that there is a simple modification that overcomes this issue, we did not find it.

\subsection{Conditions of Theorem~\ref{theorem:2transitive_capacity}}
\label{section:relaxing_2transitivity}
One natural question is whether or not the conditions of Theorem~\ref{theorem:2transitive_capacity} can be weakened.
If the permutation groups of the codes in the sequence are not transitive, then different bits may have different EXIT functions with phase transitions at different values of $p$ (e.g., if some of the bits are protected by a random code of one rate and other bits with a random code of a different rate).

Even if the permutation groups are transitive, things can still go wrong.
Consider any sequence of codes with transitive permutation groups and increasing length.
Let $\{\dmin^{(n)}\}$ be the sequence of minimum distances.
Then, symmetry implies that the erasure rate of bit-MAP decoding is lower bounded by $p^{\dmin^{(n)}}$ for a BEC($p$) (e.g., every code bit is covered by a codeword with weight $\dmin$).
Thus, the sequence does not achieve capacity if $\dmin^{(n)}$ has a uniform upper bound.
Based on duality, a similar argument holds if the sequence of minimum dual distances $\{\dmin^{\perp (n)}\}$ is upper bounded.
Thus, to achieve capacity, a necessary condition is that $\dmin^{(n)} \to \infty$ and $\dmin^{\perp (n)} \to \infty$.
Based on this observation, we make the following optimistic conjecture.
\begin{conjecture}
Let $\{\mc{C}_n\}$ be a sequence of binary linear codes where the blocklengths satisfy $N_n \to \infty$, the rates satisfy $r_n \to r$ for $r\in (0,1)$, and the permutation group of each code is transitive.
If the sequence of minimum distances satisfies $\dmin^{(n)} \to \infty$ and the sequence of minimum dual distances satisfies $\dmin^{\perp (n)} \to \infty$, then the sequence achieves capacity on the BEC under bit-MAP decoding.
\end{conjecture}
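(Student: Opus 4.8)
The plan is to keep the three-step skeleton of the proof of Theorem~\ref{theorem:2transitive_capacity} and change only the symmetry input. Transitivity of each $\mc{G}^{(n)}$ still gives $h^{(n)}(p) = h_i^{(n)}(p) = \mu_p(\Omega'_i)$ for every $i$ by Proposition~\ref{proposition:transitive_exit}, and the area theorem (Proposition~\ref{proposition:scalar_exit_properties}(c)) still pins $\int_0^1 h^{(n)}(p)\,\diff{p} = r_n \to r$. Hence, by Proposition~\ref{proposition:capacity_equivalence}, it suffices to prove that the transition window $p_{1-\epsilon}^{(n)} - p_{\epsilon}^{(n)}$ of the monotone set $\Omega'_i$ tends to $0$; the area constraint then forces the limiting transition to sit at $p = 1-r$. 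The two distance hypotheses cannot be dropped: in a transitive code every bit lies in a minimum-weight codeword and in a minimum-weight dual codeword, so $h_i^{(n)}(p) \ge p^{\,\dmin-1}$ and $h_i^{(n)}(p) \le 1-(1-p)^{\dmin^{\perp}-1}$, and either one alone keeps $h^{(n)}$ bounded away from the required step function --- the conjecture is the optimistic guess that, given transitivity, ruling out both obstructions is enough.

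For the window bound I would use the general $p$-biased sharp-threshold inequality in its influence form: for a monotone $\Omega$, if $\max_j I_j^{(p)}(\Omega) \le \delta$ then $\deri{\mu_p(\Omega)}{p} = I^{(p)}(\Omega) \ge c\,\mu_p(\Omega)\bigl(1-\mu_p(\Omega)\bigr)\log(1/\delta)$ (Talagrand's isoperimetric inequality --- the same engine behind Theorem~\ref{theorem:width_bound}, which merely specialises it by assuming all influences equal). So the entire problem reduces to showing $\max_j I_j^{(p)}(\Omega'_i) = \max_j \mu_p(\partial_j\Omega'_i) \to 0$, uniformly for $p$ in a fixed neighbourhood of $1-r$; the area theorem and the logistic comparison then finish the bootstrap exactly as in the proof of Theorem~\ref{theorem:2transitive_capacity}. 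The point is that transitivity alone makes the influences of $\Omega'_i$ constant only on the orbits of the point stabiliser $\mathrm{Stab}_{\mc{G}}(i)$ on $[N]\backslash\{i\}$; it is double transitivity that collapses these to one value and lets Theorem~\ref{theorem:width_bound} apply off the shelf.

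Bounding $\max_j \mu_p(\partial_j\Omega'_i)$ with only transitivity and $\dmin^{\perp(n)}\to\infty$ is the step I expect to be the real obstacle. A direct estimate is hopeless: $j$ is pivotal for $i$ precisely when some minimal dual codeword $\ul c$ passes through $i$ and $j$ and has $\mathrm{supp}(\ul c)\backslash\{j\}$ disjoint from the erased set (a Mac Lane--Steinitz exchange in the code matroid), so $\mu_p(\partial_j\Omega'_i) \le \sum_{\ul c}(1-p)^{|\ul c|-2}$ over such $\ul c$; but there can be exponentially many of them even when $\dmin^{\perp}$ is $\Theta(\log N)$ or $\Theta(\sqrt{N})$, so the bound $|\ul c|\ge\dmin^{\perp}$ on each term is useless. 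The natural substitute is a junta-type criterion in the spirit of Bourgain and Friedgut~\cite{Bourgain-gafa97,Friedgut-jams99}: if $h^{(n)}$ had a \emph{coarse} threshold near $1-r$, there would be, for some such $p$, a bounded-size set $J\subseteq[N]\backslash\{i\}$ whose total erasure raises $\mu_p(\Omega'_i)$ by a constant --- informally, for a constant fraction of near-critical erasure patterns $A$, every dual codeword through $i$ that avoids $A$ already passes through the $O(1)$ coordinates of $J$, a ``bottleneck.'' Transitivity transports such a bottleneck to every position of the code, and one would then want to convert ``a bounded-size bottleneck exists uniformly over positions of a transitive code'' into the existence of a genuinely low-weight codeword in $\mc{C}^{\perp}$, contradicting $\dmin^{\perp(n)}\to\infty$; by the evident primal/dual symmetry, $\dmin^{(n)}\to\infty$ would play the analogous role for the complementary function $1-h^{(n)}$. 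Extracting a low-weight (dual) codeword from a low-dimensional boosting restriction of $\Omega'_i$ is, as far as I can see, the missing ingredient, and is why the statement is posed only as a conjecture.
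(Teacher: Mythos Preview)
The statement you are attempting is a \emph{conjecture} in the paper (Section~\ref{section:relaxing_2transitivity}); the authors give no proof and explicitly label it ``optimistic.'' So there is no paper proof to compare against. Your write-up is, in the end, not a proof but a correct diagnosis of why the conjecture is open: after the reductions via Propositions~\ref{proposition:transitive_exit},~\ref{proposition:scalar_exit_properties}(c) and~\ref{proposition:capacity_equivalence}, one needs a differential inequality of the form~\eqref{equation:differential_inequality} with $w\to\infty$, and under mere transitivity the only general route is a KKL/Talagrand bound driven by $\max_j I_j^{(p)}(\Omega'_i)\to 0$. You correctly isolate that last step as the missing ingredient, and your own observation that the crude estimate $\mu_p(\partial_j\Omega'_i)\le\sum_{\ul c}(1-p)^{|\ul c|-2}$ blows up is exactly why the distance hypotheses do not obviously suffice.

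The paper records a complementary obstruction in Section~\ref{section:TillichZemor}: for \emph{any} code with $\dmin^{\perp}\ge 3$ one can exhibit a boundary pattern with $g_{\Omega_i}(\ul{x})=1$, so the Tillich--Z\'emor approach (bounding $\min_{\ul{x}\in\partial\Omega_i} g_{\Omega_i}(\ul{x})$ from below) is blocked for EXIT functions regardless of how large $\dmin$ and $\dmin^{\perp}$ are. That is the paper's reason for stating the result as a conjecture rather than a theorem, and it aligns with the gap you identify. Your Bourgain--Friedgut ``junta implies low-weight dual codeword'' idea is a plausible line of attack, but as you say, extracting an actual low-weight codeword of $\mc{C}^\perp$ from a bounded-size boosting set for $\Omega'_i$ is not known; absent that, your proposal does not prove the conjecture, and neither does the paper.
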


We call a code \emph{reducible} if it can be written as the direct product of irreducible component codes of shorter length.
If a code is reducible, then the minimum distance of each irreducible component is at least as large as the minimum distance of the overall code.
Likewise, if the permutation group of a reducible code is transitive, then permutation group of each irreducible component code must also be transitive.
Moreover, transitivity implies that the EXIT function of each bit must equal both the EXIT function of the overall code and the EXIT function of any irreducible component code.
Thus, the rate of the overall code and the rate of each irreducible component code must all be equal to the integral of their common EXIT function.
This implies that, if the overall code satisfies the necessary conditions of the conjecture, then each of its irreducible component codes must also satisfy the necessary conditions.
Thus, it is sufficient to resolve the conjecture for the case where there is a single irreducible component code.

\subsection{Beyond the Erasure Channel}
\label{section:general_channels}
Beyond the erasure channel, this work also has implications for the decoding of Reed-Muller codes transmitted over the binary symmetric channel.
In particular, the results of~\cite[Theorem~1.8]{Abbe-arxiv14} show that an error pattern can be corrected by~$\mr{RM}(m-(2t+2),m)$ whenever an erasure pattern with the same support can be corrected by~$\mr{RM}(m-(t+1),m)$.
Such error patterns can even be corrected efficiently~\cite{Saptharishi-arxiv15}.

Another interesting open question is whether or not one can extend this approach to binary-input memoryless symmetric channels via generalized EXIT (GEXIT) functions~\cite{Measson-it09}.
For this, some new ideas will certainly be required because the straightforward approach leads to the analysis of functions that are neither boolean nor monotonic.

It would also be very interesting to find boolean functions outside of coding theory where area theorems can be used to pinpoint sharp thresholds.

\subsection{$\mbb{F}_q$-Linear Codes over the $q$-ary Erasure Channel}
\label{section:Fqlinear}

While our exposition focuses on binary linear codes over the BEC, it is easy to extend all results to $\mbb{F}_q$-linear codes over the $q$-ary erasure channel.

First, the set $\Omega_i$ is redefined to be the set of erasure patterns that prevent indirect recovery of the symbol $X_i$.
Importantly, $\Omega_i$ is still a set of binary sequences (equivalently, set of subsets of $[N] \backslash \{i\}$), and \emph{not} a set of sequences over the alphabet $\{0,1,\dots,q-1\}$.
Note that, if indirect recovery is not possible, then the linearity of the code implies that the posterior marginal of symbol $i$ given the extrinsic observations is $\Pr (X_i=x|\ul{Y}_{\sim i}=\ul{y}_{\sim i})=1/q$.
Next, we rescale the logarithm in the entropy $\ent{\cdot}$ to base $q$ so that $H(X_i | \ul{Y}_{\sim i}=\ul{y}_{\sim i})=1$ when indirect recovery of $X_i$ is not possible.

Thus, the sharp threshold framework for monotone boolean functions can be applied without change.
With these straightforward modifications, the results in Sections~\ref{section:preliminaries}~and~\ref{section:sharp_threshold} hold true verbatim.

The concept of affine-invariance also extends naturally to $\mbb{F}_q$-linear codes of length $q^m$ over the Galois field $\mbb{F}_q$.
Similarly, affine-invariance implies that the permutation group is doubly transitive.
Thus, sequences of affine-invariant $\mbb{F}_q$-linear codes of increasing length, whose rates converge to $r \in (0,1)$, achieve capacity over the $q$-ary erasure channel under symbol-MAP decoding.
The results for the block-MAP decoder also extend without change.
Thus, one finds that Generalized Reed-Muller codes~\cite{Delsarte-ic70} and extended primitive narrow-sense BCH codes over $\mbb{F}_q$ achieve capacity on the $q$-ary erasure channel under block-MAP decoding.

\subsection{Rates Converging to Zero}
\label{section:rates_to_zero}
Consider a sequence of Reed-Muller codes $\{\mr{RM}(v_m,m)\}$ where the rate $r_m \to 0$ sufficiently fast.
A key result of \cite{Abbe-arxiv14} is that Reed-Muller codes are capacity achieving in this scenario.
That is, for any $\delta>0$,
\begin{align*}
  P_B^{(m)}(p_m) \to 0 \quad \text{for any $0 \leq p_m < 1-(1+\delta) r_m$} .
\end{align*}
Looking closely at~\cite[Corollary 5.1]{Abbe-arxiv14}, it appears that $r_m = O(N_m^{-\kappa})$ for some $\kappa>0$ is a necessary condition for this result, where the blocklength $N_m=2^m$.

Let's analyze the bit erasure probability using our method.
From the proof of Theorem \ref{theorem:capacity_blockerror_wn}, it is possible to deduce that $P_b^{(m)}(p_{\epsilon_m}) \to 0$ if we choose $\epsilon_m = o(1)$ such that $\log(1/\epsilon_m) = o(\log(N_m))$.

We can also obtain a lower bound on $p_{\epsilon_m}$.
From the proof of Proposition \ref{proposition:capacity_equivalence}, we gather that
\begin{align*}
  p_{\epsilon_m} &\geq 1 - \frac{r_m}{1-\epsilon_m} - \left( p_{1-\epsilon_m} - p_{\epsilon_m} \right) .
\end{align*}
From Theorem \ref{theorem:width_bound} and the proof of Theorem \ref{theorem:2transitive_capacity},
\begin{align*}
  p_{1-\epsilon_m} - p_{\epsilon_m} \leq \frac{2\log \frac{1}{\epsilon_m} }{\log (N_m-1)} ,
\end{align*}
which implies that
\begin{align*}
  p_{\epsilon_m} \geq 1 - \frac{r_m}{1-\epsilon_m} - \frac{2 \log \frac{1}{\epsilon_m} }{\log (N_m-1)} 
  = 1 - (1+\delta_m) r_m  ,
\end{align*}
where 
\begin{align*}
  \delta_m = \frac{\epsilon_m}{1-\epsilon_m} + \frac{2 \log \frac{1}{\epsilon_m} }{r_m \log (N_m-1)} .
\end{align*}
Therefore,
\begin{align*}
  P_b^{(m)}(p_m) \to 0 \quad \text{for any $0 \leq p_m < 1 -  (1+\delta_m) r_m$} ,
\end{align*}
for any $\epsilon_m=o(1)$ such that $\log(1/\epsilon_m) = o(\log(N_m))$.

In order to obtain a capacity achieving result under bit-MAP decoding, we require that $\delta_m \to 0$.
This can be guaranteed if $r_m \log(N_m) \to \infty$.
Under this condition, we can choose $\epsilon_m = 1/\log(r_m \log(N_m))$ so that
\begin{align*}
  \epsilon_m & \to 0, & \frac{\log\tfrac{1}{\epsilon_m}}{\log(N_m)} & \to 0, & \delta_m &\to 0 .
\end{align*}
Thus, under the condition $r_m \log(N_m) \to \infty$, the sequence $\mr{RM}(v_m,m)$ achieves capacity on the BEC under bit-MAP decoding.

For $r_m \to 0$, our results require $r_m \log(N_m) \to \infty$ while the results in~\cite[Corollary 5.1]{Abbe-arxiv14} require $r_m = O(N_m^{-\kappa})$ for some $\kappa>0$.
Thus, the results in the two papers apply to two asymptotic rate regimes that are non-overlapping.

\section{Conclusion}
\label{section:conclusion}
We show that a sequence of binary linear codes achieves capacity if its blocklengths are strictly increasing, its code rates converge to some $r\in(0,1)$, and the permutation group of each code is doubly transitive.
To do this, we use isoperimetric inequalities for monotone boolean functions to exploit the symmetry of the codes.
This approach was successful largely because the transition point of the limiting EXIT function for the capacity-achieving codes is known a priori due to the area theorem.
One remarkable aspect of this method is its simplicity.
In particular, this approach does not rely on the precise structure of the code.

The main result extends naturally to $\mathbb{F}_q$-linear codes transmitted over a $q$-ary erasure channel under symbol-MAP decoding.
The class of affine-invariant $\mathbb{F}_q$-linear codes also achieve capacity, since their permutation group is doubly transitive.
Our results also show that Generalized Reed-Muller codes and extended primitive narrow-sense BCH codes achieve capacity on the $q$-ary erasure channel under block-MAP decoding.

\section*{Acknowledgments}
The authors's interest in this problem was piqued by its listing as an open problem during the 2015 Simons Institute program on Information Theory.
Henry Pfister would also like to acknowledge interesting conversations on this topic with Marco Mondelli at the Simons Institute.

\appendices
\section{Proof of Proposition~\ref{proposition:capacity_equivalence}}
\label{appendix:proof_capacity_equivalence}

$S1 \Longleftrightarrow S2$: First, recall from \eqref{equation:biterasure_exit_relation} that $P_b(p)=ph(p)$.
From this, it follows that $S2 \Longrightarrow S1$.
Now, consider $S1 \Longrightarrow S2$.
The relation $P_b(p)=ph(p)$ together with $P_b^{(n)}(p) \to 0$ and $h^{(n)}(0)=0$ implies
\begin{align*}
  \lim_{n \to \infty} h^{(n)}(p) = 0 \quad \text{for $0 \leq p < 1-r$}.
\end{align*}

Now, we focus on the limit of $h^{(n)} (p)$ for $1-r < p \leq 1$.
Fix $q\in(1-r,1]$ and choose $n_0$ large enough so that, for all $n>n_0$, we have $r_n > r - \epsilon$ and $h^{(n)}(1-r - \epsilon) \leq \epsilon$.
Such an $n_0$ exists because $r_n \to r$ and $h^{(n)}(p) \to 0$ for $0 \leq p<1-r$.
Since the function $h^{(n)}$ is increasing for all $n$, the EXIT area theorem (i.e., Proposition \ref{proposition:scalar_exit_properties}(c)) implies that, for all $n>n_0$, we have
\begin{align*}
  \allowdisplaybreaks
  r - \epsilon < r_n &= \int_0^1 h^{(n)}(p) \diff{p} \\
  &= \int_0^{1-r-\epsilon} h^{(n)}(p)\diff{p} + \int_{1-r-\epsilon}^{q} h^{(n)}(p) \diff{p} \\
  & \qquad \qquad + \int_{q}^{1} h^{(n)}(p)\diff{p} \\
  &\leq (1-r-\epsilon) \epsilon + (q-(1-r)+\epsilon) h^{(n)}(q) \\
  & \qquad \qquad + (1-q).
\end{align*}
This implies
\begin{align*}
  h^{(n)}(q) & \geq \frac{q-(1-r) - \epsilon(2-r-\epsilon)}{q-(1-r)+\epsilon} \\
  &\geq 1 - \frac{2\epsilon}{q-(1-r)} .
\end{align*}
As such, $\lim_{n\to\infty} h^{(n)}(q) = 1$, for any $1-r<q\leq 1$.

$S2 \Longrightarrow S3$: Since $p_{1-\epsilon}^{(n)} - p_{\epsilon}^{(n)}$ is the width of the erasure probability interval over which $h^{(n)}$ transitions from $\epsilon$ to $1-\epsilon$, this follows immediately from $S2$.

$S3 \Longrightarrow S2$:
It suffices to show that, for any $\epsilon \in (0,1/2]$,
\begin{align*}
  \lim_{n \to \infty}p_{\epsilon}^{(n)} = \lim_{n \to \infty}p_{1-\epsilon}^{(n)} = 1-r.
\end{align*}
From Proposition \ref{proposition:scalar_exit_properties}(c), we have
\begin{align*}
  r_n &= \int_{0}^{1} h^{(n)}(\alpha) \diff{\alpha} \leq \epsilon p_\epsilon^{(n)} + \left(1 - p_\epsilon^{(n)} \right) ,
\end{align*}
which implies $p_\epsilon^{(n)} \leq \frac{1-r_n}{1-\epsilon}$.
Similarly,
\begin{align*}
  r_n &= \int_{0}^{1} h^{(n)}(\alpha) \diff{\alpha} \geq \left(1-p_{1-\epsilon}^{(n)} \right) (1-\epsilon),
\end{align*}
which implies
\begin{align*}
  p_{1-\epsilon}^{(n)} \geq \frac{1-r_n-\epsilon}{1-\epsilon} .
\end{align*}
Combining these gives
\begin{align*}
  \frac{1-r_n-\epsilon}{1-\epsilon} + \left( p_{\epsilon}^{(n)} - p_{1-\epsilon}^{(n)} \right)  \leq p_{\epsilon}^{(n)} \leq \frac{1-r_n}{1-\epsilon} , \\
  \frac{1-r_n-\epsilon}{1-\epsilon}  \leq p_{1-\epsilon}^{(n)} \leq \frac{1-r_n}{1-\epsilon} + \left( p_{1-\epsilon}^{(n)} - p_{\epsilon}^{(n)} \right).
\end{align*}
From the hypothesis, 
\begin{align*}
  \frac{1-r-\epsilon}{1-\epsilon} \leq \limsup_{n \to \infty} p_{\epsilon}^{(n)} \leq \frac{1-r}{1-\epsilon} ,\\
  \frac{1-r-\epsilon}{1-\epsilon} \leq \limsup_{n \to \infty} p_{1-\epsilon}^{(n)} \leq \frac{1-r}{1-\epsilon} .
\end{align*}
Thus
\begin{align*}
  \lim_{t \to 0} \limsup_{n\to\infty} p_{t}^{(n)} = \lim_{t \to 0} \limsup_{n\to\infty} p_{1-t}^{(n)} = 1-r.
\end{align*}
But $p_{t}^{(n)}$ and $p_{1-t}^{(n)}$ are increasing and decreasing functions of $t$, respectively.
This gives
\begin{align*}
  \limsup_{n\to\infty} p_{\epsilon}^{(n)} &\geq \lim_{t \to 0} \limsup_{n\to\infty} p_{t}^{(n)} = 1-r , \\
  \limsup_{n\to\infty} p_{1-\epsilon}^{(n)} &\leq \lim_{t \to 0} \limsup_{n\to\infty} p_{1-t}^{(n)} = 1-r .
\end{align*}
Since $p_{\epsilon}^{(n)} \leq p_{1-\epsilon}^{(n)}$, we deduce that
\begin{align*}
  \limsup_{n \to \infty} p_{\epsilon}^{(n)} = \limsup_{n \to \infty} p_{1-\epsilon}^{(n)} = 1-r.
\end{align*}
Repeating this exercise with $\limsup$ replaced by $\liminf$ also gives the result $1-r$.  Thus, for any $\epsilon \in (0,1/2]$, we have
\begin{align*}
  \lim_{n \to \infty} p_{\epsilon}^{(n)} = \lim_{n \to \infty} p_{1-\epsilon}^{(n)} = 1-r.
\end{align*}

\section{Proofs from Section~\ref{section:sharp_threshold}}
\label{appendix:proofs_section_inequalities}

\begin{lemma}
\label{lemma:exit_integral_generic}
Suppose $h \colon [0,1] \to [0,1]$ is a strictly increasing function with $h(0)=0$ and $h(1)=1$.
Additionally, for $0 \leq a < p < b \leq 1$, let
\begin{align*}
  \deri{h(p)}{p} \geq w h(p) (1-h(p)) .
\end{align*}
If $p_t = h^{-1}(t)$, then for $0 < \epsilon_1 \leq \epsilon_2 \leq 1$,
\begin{align}
  \label{equation:width_upper_bound}
  p_{\epsilon_2} \!-\! p_{\epsilon_1} \leq a + (1\!-\!b) + \frac{1}{w} \left[ \log\frac{\epsilon_2}{1\!-\!\epsilon_2} + \log\frac{1\!-\!\epsilon_1}{\epsilon_1} \right] .
\end{align}
Moreover, for $0 \leq \delta \leq p_{1/2}$,
\begin{align*}
  h(\delta) \leq \exp \left[ -w \left( [p_{1/2} - \delta] -[a + 1-b] \right) \right] .
\end{align*}
\end{lemma}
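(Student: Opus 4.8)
The plan is to reduce everything to the \emph{logit transform} of $h$, namely $\phi(p)\triangleq\log\frac{h(p)}{1-h(p)}$, which is finite on $(0,1)$ since $h$ is strictly increasing with $h(0)=0$ and $h(1)=1$, so $0<h(p)<1$ there. A one-line computation gives $\phi'(p)=\deri{h(p)}{p}\big/\big(h(p)(1-h(p))\big)$, so the standing hypothesis is exactly $\phi'(p)\ge w$ for $p\in(a,b)$, whereas $\phi'\ge 0$ on all of $(0,1)$ because $h$ is increasing. Two elementary facts about $p_t=h^{-1}(t)$ will be used: strict monotonicity of $h$ gives $p_{h(\delta)}=\delta$ for every $\delta\in(0,1)$, and $\phi(p_t)=\log\frac{t}{1-t}$ for every $t\in(0,1)$.

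First I would establish \eqref{equation:width_upper_bound}. Fix $0<\epsilon_1\le\epsilon_2<1$ (the case $\epsilon_2=1$ is vacuous since the right-hand side is $+\infty$). Integrating $\phi'$ over $[p_{\epsilon_1},p_{\epsilon_2}]$ and using the two facts above,
\[
  \log\frac{1-\epsilon_1}{\epsilon_1}+\log\frac{\epsilon_2}{1-\epsilon_2}\;=\;\phi(p_{\epsilon_2})-\phi(p_{\epsilon_1})\;=\;\int_{p_{\epsilon_1}}^{p_{\epsilon_2}}\phi'(p)\,\diff{p}.
\]
Now split the interval of integration at $a$ and $b$: on $[p_{\epsilon_1},p_{\epsilon_2}]\cap(a,b)$ the integrand is at least $w$, and on the complement it is nonnegative, so the integral is at least $w$ times the Lebesgue measure of $[p_{\epsilon_1},p_{\epsilon_2}]\cap(a,b)$. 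Since $[p_{\epsilon_1},p_{\epsilon_2}]\setminus(a,b)\subseteq[0,a]\cup[b,1]$ has measure at most $a+(1-b)$, that measure is at least $(p_{\epsilon_2}-p_{\epsilon_1})-\big(a+(1-b)\big)$. Rearranging the resulting chain of inequalities for $p_{\epsilon_2}-p_{\epsilon_1}$ yields \eqref{equation:width_upper_bound}.

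For the ``moreover'' part, fix $0\le\delta\le p_{1/2}$; the case $\delta=0$ is immediate from $h(0)=0$, so assume $\delta>0$ and set $\epsilon_1:=h(\delta)$, which lies in $(0,1/2]$ because $h$ is increasing and $h(p_{1/2})=1/2$. Apply \eqref{equation:width_upper_bound} with this $\epsilon_1$ and with $\epsilon_2=1/2$: the term $\log\frac{\epsilon_2}{1-\epsilon_2}$ vanishes, and using $p_{\epsilon_1}=p_{h(\delta)}=\delta$ on the left one obtains
\[
  p_{1/2}-\delta\;\le\;\big(a+(1-b)\big)+\frac{1}{w}\log\frac{1-h(\delta)}{h(\delta)}.
\]
Solving this for $h(\delta)$ — exponentiate, then use $\tfrac{1}{h(\delta)}=1+\tfrac{1-h(\delta)}{h(\delta)}\ge e^{x}$ for the relevant exponent $x$ to absorb the additive $1$ — produces exactly $h(\delta)\le\exp\!\left[-w\left([p_{1/2}-\delta]-[a+1-b]\right)\right]$, as claimed.

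I expect the only real subtlety to be regularity: one needs $\phi$ to be absolutely continuous on the integration interval so that the fundamental theorem of calculus legitimately applies (the differential inequality itself presupposes $h$ is differentiable on $(a,b)$, but the integration interval $[p_{\epsilon_1},p_{\epsilon_2}]$ may stick out of $(a,b)$), and one needs $p_t$ to behave as stated at $t\in\{0,1\}$. In every application of this lemma $h$ is a strictly increasing polynomial, so all of this is automatic; in the stated generality one assumes $h$ absolutely continuous, and monotonicity of $\phi$ then makes the endpoint behavior and the measure-splitting argument go through unchanged. I do not anticipate any other obstacle.
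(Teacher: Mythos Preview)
Your proposal is correct and follows essentially the same approach as the paper: both introduce the logit transform $g(p)=\log\frac{h(p)}{1-h(p)}$, observe that $g'\ge w$ on $(a,b)$, integrate between $p_{\epsilon_1}$ and $p_{\epsilon_2}$, and then specialize to $\epsilon_2=1/2$, $\epsilon_1=h(\delta)$ for the second claim. The only difference is cosmetic: the paper handles the possible positions of $p_{\epsilon_1},p_{\epsilon_2}$ relative to $a,b$ by explicit case analysis, whereas your single measure-splitting step (bounding $|[p_{\epsilon_1},p_{\epsilon_2}]\setminus(a,b)|\le a+(1-b)$ and using $\phi'\ge 0$ there) covers all cases at once; this is a tidy packaging of the same idea.
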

\begin{IEEEproof}
Let $g (p) = \log \frac{h(p)}{1-h(p)}$ and observe that, for $a < p < b$, we have
\begin{align*}
 \deri{g(p)}{p}=\frac{1}{h(p)(1-h(p))} \deri{h(p)}{p} \geq w .
\end{align*}
Let $p_t=h^{-1}(t)$.
We would like to obtain an upper bound on $p_{\epsilon_2} - p_{\epsilon_1}$ by integrating $dg/dp$.

If $a < p_{\epsilon_1} \leq p_{\epsilon_2} < b$, then integrating $dg/dp$ from $p_{\epsilon_1}$ to $p_{\epsilon_2}$ gives
\begin{align*}
  w (p_{\epsilon_2} \!-\! p_{\epsilon_1}) \leq \int_{p_{\epsilon_1}}^{p_{\epsilon_2}} \frac{dg}{dp} \diff{p} = \log \frac{\epsilon_2}{1\!-\!\epsilon_2} - \log \frac{\epsilon_1}{1\!-\!\epsilon_1} ,
\end{align*}
which immediately shows \eqref{equation:width_upper_bound}.

Suppose $p_{\epsilon_1} \leq a < p_{\epsilon_2} < b$, and note that since $g$ is increasing $\epsilon_1 = g(p_{\epsilon_1}) \leq g(a)$.
Then, integrating $dg_1/dp$ from $a$ to $p_{\epsilon_2}$ gives
\begin{align*}
  w (p_{\epsilon_2} - a) &\leq \int_{a}^{p_{\epsilon_2}} \frac{dg}{dp} \diff{p} \\
  &= \log \frac{\epsilon_2}{1-\epsilon_2} - \log \frac{h(a)}{1-h(a)} \\
  &\leq \log \frac{\epsilon_2}{1-\epsilon_2} - \log \frac{\epsilon_1}{1-\epsilon_1} . \quad \text{(Since $\epsilon_1 \leq h(a)$)}
\end{align*}
Using $p_{\epsilon_2} - p_{\epsilon_1} \leq a + (p_{\epsilon_2} - a)$ with the above inequality gives \eqref{equation:width_upper_bound}.

By considering other cases where $p_{\epsilon_1}$ and $p_{\epsilon_2}$ lie, it is straightforward to obtain \eqref{equation:width_upper_bound}.
Also, substituting $\epsilon_2=1/2$ and $\epsilon_1=h^{-1}(\delta)$ in \eqref{equation:width_upper_bound} gives the desired upper bound on $h(\delta)$.
\end{IEEEproof}

\subsection{Proof of Theorem~\ref{theorem:capacity_blockerror_dmin}}
\label{appendix:proof_capacity_blockerror_dmin}

Let $p^{(n)}_t$ be the functional inverse of $h^{(n)}$ from \eqref{equation:h_inverse}.
Using Lemma~\ref{lemma:exit_integral_generic} with $a_n=0$ and $b_n=1$ gives
\begin{align*}
  p^{(n)}_{1-\epsilon} - p^{(n)}_{\epsilon} \leq \frac{2 \log \frac{1- \epsilon}{\epsilon}}{C \log N_n} .
\end{align*}
By hypothesis, $N_n \to \infty$.
Thus, for any $\epsilon \in (0,1/2]$, we have $p^{(n)}_{1-\epsilon}-p^{(n)}_{\epsilon} \to 0$.
Using this, we apply statement $S2$ of Proposition~\ref{proposition:capacity_equivalence} to see that $p^{(n)}_{1/2}\to 1-r$.

Now, we can choose $\epsilon_n = \dmin^{(n)}/(N_n \log N_n)$ and observe that
\begin{align*}
  p^{(n)}_{1-\epsilon_n}- p^{(n)}_{\epsilon_n} &\leq \frac{2}{C \log N_n} \log \frac{1-\epsilon_n}{\epsilon_n} \\
  &\leq \frac{2}{C \log N_n} \log \frac{N_n \log N_n}{\dmin^{(n)}} \\
  &= \frac{2}{C} \frac{\log N_n + \log \log N_n - \log \dmin^{(n)}}{\log N_n} .
\end{align*}
By hypothesis, $\log \dmin^{(n)}/\log N_n \to 1$.
Thus, $p^{(n)}_{1-\epsilon_n}-p^{(n)}_{\epsilon_n} \to 0$.
Combining this with $p^{(n)}_{\epsilon_n} \leq p^{(n)}_{1/2} \leq p^{(n)}_{1-\epsilon_n}$ shows that $p^{(n)}_{\epsilon_n} \to 1-r$.

Also, from \eqref{equation:biterasure_exit_relation},
\begin{align*}
  P_b^{(n)}(p^{(n)}_{\epsilon_n}) = p^{(n)}_{\epsilon_n} h^{(n)}(p^{(n)}_{\epsilon_n}) \leq h^{(n)}(p^{(n)}_{\epsilon_n}) = \epsilon_n .
\end{align*}
Recall from \eqref{equation:erasure_prob_dmin} that $P_B \leq N P_b / \dmin$.
Hence, for any $p \in [0,1-r)$, one finds that $p^{(n)}_{\epsilon_n} > p$ for sufficiently large $n$ and thereafter
$$P_B^{(n)}(p) \leq \frac{N_n}{\dmin^{(n)}} P_b^{(n)}(p) \leq \frac{N_n}{\dmin^{(n)}} \epsilon_n = \frac{1}{\log N_n} \to 0.$$
Thus, we conclude that $\{\mc{C}_n\}$ is capacity achieving on the BEC under block-MAP decoding.

\subsection{Proof of Theorem~\ref{theorem:capacity_blockerror_wn}}
\label{appendix:proof_capacity_blockerror_wn}

Let $p^{(n)}_t$ be the functional inverse of $h^{(n)}$ from \eqref{equation:h_inverse}.
From Lemma~\ref{lemma:exit_integral_generic},
\begin{align*}
  p^{(n)}_{1-\epsilon} - p^{(n)}_{\epsilon} \leq a_n + (1-b_n) + \frac{2 \log \frac{1- \epsilon}{\epsilon}}{w_n \log N_n} .
\end{align*}
By hypothesis, $a_n \to 0$, $1-b_n \to 0$, and $w_n \log N_n \to \infty$.
Thus, for any $\epsilon \in (0,1/2]$, we have $p^{(n)}_{1-\epsilon}-p^{(n)}_{\epsilon} \to 0$.
Using this, we apply statement $S2$ of Proposition~\ref{proposition:capacity_equivalence} to see that $p^{(n)}_{1/2}\to 1-r$.

Now, we can choose $\epsilon_n = 1/N_n^2$ and observe that
\begin{align*}
  p^{(n)}_{1-\epsilon_n}- p^{(n)}_{\epsilon_n} &\leq a_n \!+\! (1-b_n) \!+\! \frac{1}{w_n\log N_n} 2 \log \frac{1-\epsilon_n}{\epsilon_n} \\
  &\leq a_n \!+\! (1-b_n) + \frac{1}{w_n \log N_n} 4 \log N_n \\
  &= a_n \!+\! (1-b_n) \!+\! \frac{4}{w_n}.
\end{align*}
Combining $p^{(n)}_{\epsilon_n} \leq p^{(n)}_{1/2} \leq p^{(n)}_{1-\epsilon_n}$ with $p^{(n)}_{1-\epsilon_n}-p^{(n)}_{\epsilon_n} \to 0$ shows that $p^{(n)}_{\epsilon_n} \to 1-r$.

Also, from \eqref{equation:biterasure_exit_relation},
\begin{align*}
  P_b^{(n)}(p^{(n)}_{\epsilon_n}) = p^{(n)}_{\epsilon_n} h^{(n)}(p^{(n)}_{\epsilon_n}) \leq h^{(n)}(p^{(n)}_{\epsilon_n}) = \epsilon_n .
\end{align*}
Recall from \eqref{equation:erasure_prob_relations} that $P_B \leq N P_b$.
Hence, for any $p \in [0,1-r)$, one finds that $p^{(n)}_{\epsilon_n} > p$ for sufficiently large $n$ and thereafter
$$P_B^{(n)}(p) \leq N_n P_b^{(n)}(p) \leq N_n \epsilon_n = N_n/N_n^2 \to 0.$$
Thus, we conclude that $\{\mc{C}_n\}$ is capacity achieving on the BEC under block-MAP decoding.


\bibliographystyle{IEEEtran}
\bibliography{WCLabrv,WCLbib,WCLnewbib}

\end{document}